\documentclass[11pt]{amsart}

\usepackage[margin=1in]{geometry}
\usepackage[T1]{fontenc}
\usepackage{lmodern}
\usepackage{amsmath,amssymb,amsthm,mathtools,bm,mathrsfs}
\usepackage{enumitem,microtype}
\usepackage{hyperref}
\usepackage[nameinlink,capitalize]{cleveref}

\newtheorem{theorem}{Theorem}[section]

\newtheorem{proposition}[theorem]{Proposition}
\newtheorem{corollary}[theorem]{Corollary}

\newtheorem{remark}[theorem]{Remark}
\newcommand{\commentout}[1]{}
\newcommand{\C}{\mathbb C}
\newcommand{\R}{\mathbb R}
\newcommand{\Sph}{\mathbb S}
\newcommand{\SO}{\mathrm{SO}}
\newcommand{\cH}{\mathcal H}
\newcommand{\cU}{\mathcal U}
\newcommand{\cT}{\mathcal T}

\newcommand{\cB}{\mathcal B}
\newcommand{\cC}{\mathcal C}
\newcommand{\cJ}{\mathcal J}
\newcommand{\cQ}{\mathcal Q}
\newcommand{\cL}{\mathcal L}
\newcommand{\cZ}{\mathcal Z}
\newcommand{\Range}{\operatorname{Range}}
\newcommand{\diag}{\operatorname{diag}}
\newcommand{\rank}{\operatorname{rank}}

\newcommand{\argmin}{\operatorname*{argmin}}
\newcommand{\dd}{\,d}

\newcommand{\bomega}{{\bm\omega}}

\newcommand{\bgamma}{{\bm\gamma}}
\newcommand{\bx}{{\bm x}}
\newcommand{\by}{{\bm y}}
\newcommand{\bz}{{\bm z}}
\newcommand{\bh}{{\bm h}}
\newcommand{\be}{{\bm e}}
\newcommand{\ba}{{\bm a}}

\newcommand{\bb}{{\bm b}}
\newcommand{\bc}{{\bm c}}
\newcommand{\bv}{{\bm v}}
\newcommand{\bu}{{\bm u}}
\newcommand{\bd}{{\bm d}}
\newcommand{\bw}{{\bm w}}
\newcommand{\bPi}{{P}}%{\boldsymbol\Pi}
\newcommand{\eps}{\varepsilon}
\usepackage{booktabs}
\usepackage{xurl}
\newcommand{\norm}[1]{\left\lVert#1\right\rVert}

\newcommand{\cond}{\operatorname{cond}_2}
\newcommand{\sinc}{\operatorname{sinc}}
\newcommand{\round}{\operatorname{round}}
\newcommand{\file}[1]{\nolinkurl{#1}}
\hypersetup{
 colorlinks=true,linkcolor=blue,citecolor=blue,urlcolor=blue,
 pdfauthor={Albert Fannjiang},
 pdftitle={Spherical ESPRIT by Paired Small Circles and Infinitesimal Rotations},
 pdfsubject={Finite-shift and generator ESPRIT, identifiability, perturbations, and MUSIC refinement},
 pdfkeywords={spherical ESPRIT, paired small circles, rotation generators, phase unwrapping, spherical MUSIC}
}
\title[Spherical ESPRIT by finite shifts and generators]{Spherical ESPRIT by Paired Small Circles\\
and Infinitesimal Rotations}
\author{Albert Fannjiang}
\date{September 24, 2026}
\begin{document}
\raggedbottom
\begin{abstract}
We study the recovery of a finite three-dimensional point cloud from
its spherical Fourier signal subspace at a fixed wavenumber. Two ESPRIT
constructions are analyzed. The first uses finite shifts between paired
small circles; the second uses angular derivatives on the whole sphere.
Both recover the same commuting coordinate matrices, either through
their finite exponentials or directly from a coupled linear system.
We prove that the continuous generator system is injective for every
distinct point cloud and that one guard harmonic degree preserves the
retained equations exactly. For $s$ targets, the cutoff $K\ge s$ gives
generic finite identifiability, but not uniform conditioning. We also
derive a finite-cutoff perturbation bound.

The accuracy of the coordinate estimates is a separate issue. A
one-target perturbation calculation shows that the original generator
estimate can retain a nonzero bias as the wavenumber increases. To
address this bias, we use the generator eigenbasis to separate
approximate atoms and recover their locations from amplitude-normalized
phases. Under pointwise atom dominance, the resulting error is
$O(\kappa^{-1})$; for a dominant two-atom mixture at fixed separation,
it is $O(\kappa^{-2})$. Experiments on structured point clouds with
smooth perturbations compare the two constructions and their use as
initializers for a common MUSIC refinement.
\end{abstract}
\maketitle

\tableofcontents

\section{Introduction}

\label{sec:introduction}

Inverse scattering separates naturally into the estimation of a signal
subspace from measurements and the recovery of the scatterer locations
from that subspace. This paper concerns the second problem. We seek
algebraic reconstruction methods for a finite point cloud from its
spherical Fourier signal subspace at one wavenumber.

Let $X=\{\bx_1,\ldots,\bx_s\}\subset B_1(0)\subset\R^3$ consist of
distinct points, and let $\kappa>0$. Define the spherical Fourier atoms
and their synthesis operator by
\begin{align}
 \varphi_{\bx}(\bomega)&=e^{i\kappa\bomega\cdot\bx},
       &&\bomega\in\Sph^2,\label{eq:atom}\\
 \Phi_X\bc&=\sum_{j=1}^s c_j\varphi_{\bx_j},
       &&\Phi_X:\C^s\longrightarrow
       \cH=L^2\!\left(\Sph^2,\frac{\dd\bomega}{4\pi}\right).
       \label{eq:Vandermonde}
\end{align}
The observation variable $\bomega$ lies on the unit sphere, but the unknown
locations $\bx_j$ range throughout the ball. Thus both direction and
radial position are to be recovered. A known support radius $R_0$ can
be reduced to this setting by replacing $\bx$ with $\bx/R_0$ and
$\kappa$ with $\kappa R_0$. For $s\ge2$, let
$\delta_X=\min_{j\ne k}|\bx_j-\bx_k|$ denote the minimum separation.
The signal subspace is $U_X=\Range\Phi_X$, and any basis of it has
the form
\begin{equation}
 Q=\Phi_XR,\qquad R\in GL(s,\C),
 \label{eq:Q-basis}
\end{equation}
and the associated coordinate representation is
\begin{equation}
 D_r(X)=\diag(x_{1,r},\ldots,x_{s,r}),\qquad
 \Psi_r=R^{-1}D_r(X)R,\qquad r=1,2,3.
 \label{eq:coordinate-matrices}
\end{equation}
The matrices $\Psi_r$ commute and have joint spectrum $X$. Recovery
therefore reduces to constructing these matrices, or their finite
exponentials, from the given basis $Q$.

The subspace arises from snapshots $\mathcal Y=\Phi_XC$ when the
coefficient matrix $C\in\C^{s\times N_{\rm snap}}$ has row rank $s$.
In the point-scatterer model, illumination and multiple scattering enter
through the effective coefficients, while the observation-direction
dependence remains in $\Phi_X$. The coefficients must be the same on
all aperture restrictions used in the reconstruction. We assume that
an exact or estimated signal subspace is given; its stable extraction
from physical measurements is a separate problem.

Classical ESPRIT obtains unknown phases by comparing compatible shifted
restrictions of a signal basis \cite{roykailath1989}. On a spherical
aperture, a nonzero translation cannot map a two-dimensional open patch
back onto the sphere. It can, however, map one small circle onto another.
For a Fourier displacement $q\bv$, with $|\bv|=1$ and $0<q<2\kappa$,
the paired traces satisfy
\begin{equation}
 Q_+=Q_-F_{q,\bv},\qquad
 F_{q,\bv}=R^{-1}\diag(e^{iq\bv\cdot\bx_j})R
          =\exp\!\left(iq\sum_rv_r\Psi_r\right).
 \label{eq:intro-paired-identity}
\end{equation}
If the restrictions have full column rank, a common eigenbasis pairs
the phases and determines the projected coordinates, subject to phase
ambiguity. A larger displacement divides the phase error by a larger
$q$, but requires well-conditioned restrictions and correct phase
continuation.

The infinitesimal construction instead uses
$\mathcal L_i=-i(\bomega\times\nabla_{\Sph^2})_i$ and
$(M_qf)(\bomega)=\omega_qf(\bomega)$ to obtain
\begin{equation}
 \mathcal L_iQ=\kappa\sum_{q,r=1}^3
          \epsilon_{iqr}M_qQ\Psi_r,\qquad i=1,2,3.
 \label{eq:intro-esprit-intertwining}
\end{equation}
These equations determine all three coordinate matrices through one
coupled whole-sphere system, without phase unwrapping. Stability depends
on the angular derivatives, the conditioning of the coupled system,
and the subsequent spectral calculation. The rotation generators act
on aperture functions; the commuting coordinate matrices act on signal
coefficients.

Although the two constructions agree on the exact coordinate
representation, they need not give the same estimates from a perturbed
subspace. Restriction and differentiation act differently on the error.
There is a further distinction between separating approximate atoms and
estimating their coordinates: a common eigenbasis may still separate
the atoms when its diagonal coordinate estimates are biased. This
observation leads to a phase gradient method that uses the generator eigenbasis
for demixing and the phases of the demixed functions for localization.
We compare all three estimates before and after a common MUSIC
refinement \cite{schmidt1986,fannjiangnguyen2026}.
Table~\ref{tab:method-comparison} summarizes the two ESPRIT constructions.

\begin{table}[htbp]
\centering\small
\begin{tabular}{@{}p{0.22\textwidth}p{0.35\textwidth}p{0.35\textwidth}@{}}
\toprule
Feature & Paired small circles & Whole-sphere generators\\
\midrule
Exact action & Finite translation phases & Infinitesimal rotation identities\\
Aperture geometry & Two parallel small circles & Entire observation sphere\\
Coefficient algebra & Commuting finite shifts & Commuting Cartesian matrices\\
Continuous rank & Distinct projected nodes & Distinct nodes at full aperture\\
Finite realization & Circle samples or harmonic restrictions & Guarded spherical harmonics\\
Perturbation sensitivity & Restriction and phase continuation & Differentiation and coupled block solve\\
Coordinate estimate & Unwrapped phases divided by displacement & {\raggedright Two-sided diagonal estimate or demixed phase integral\par}\\
\bottomrule
\end{tabular}
\caption{The two aperture realizations recover the same Cartesian
coefficient algebra but have different rank and conditioning mechanisms.}
\label{tab:method-comparison}
\end{table}

\section{Related literature and contributions}

Roy and Kailath \cite{roykailath1989} formulate the basic ESPRIT
relation between two compatible restrictions of a signal basis. The
resulting coefficient matrix is similar to a diagonal matrix of unknown
phase factors. Multidimensional recovery requires a common ordering
of the coordinate estimates. Haardt and Nossek \cite{haardtnossek1998}
obtain this ordering by simultaneous Schur decomposition, while Ehler,
Kunis, Peter, and Richter \cite{ehler2018} use a random linear
combination in a multivariate matrix-pencil method.

The common spectral calculation is generally nonunitary. Balda et al.\
\cite{balda2017} analyze perturbations of joint eigenvalue
decompositions, and He, Kressner, and Plestenjak
\cite{hekressnerplestenjak2025} study two-sided coordinate estimates
from random pencils, including nonnormal matrices. These methods
provide the spectral pairing used here. Their conditioning is distinct
from that of the aperture equations that determine the matrices.

Andersson and Carlsson \cite{anderssoncarlsson2018} extend
multidimensional ESPRIT to general domains by using general-domain
Hankel operators and compatible shifted restrictions. A fixed-frequency
spherical shell imposes an additional geometric constraint: a nonzero
translation cannot preserve an open surface patch. Our finite-shift
construction uses paired small circles, while the differential
construction uses intrinsic derivatives on the full shell. Neither
requires Fourier data away from the prescribed sphere.

Stability and resolution have been studied for classical ESPRIT.
Fannjiang \cite{fannjiang2016esprit} treats single-snapshot recovery,
and Li, Liao, and Fannjiang \cite{liliaofannjiang2020} obtain
super-resolution estimates in terms of Vandermonde singular values
and clustered-node geometry. In the present setting, conditioning of
the scalar synthesis operator is only one requirement. The circle
restriction or the coupled angular system must also be well conditioned.

Spherical-harmonic ESPRIT has a substantial direction-of-arrival
literature. Goossens and Rogier \cite{goossensrogier2009} develop
unitary spherical ESPRIT for two-angle estimation. Jo and Choi
\cite{jochoi2019} combine three harmonic recurrences to avoid
coordinate singularities and pairing ambiguities. Herzog and Habets
\cite{herzoghabets2019} likewise recover direction vectors through
three recurrences and joint diagonalization. In these formulations
the unknowns are unit directions, and known array-dependent radial
responses are compensated on the usable harmonic modes.

Radial recovery also has precedents. Herzog and Habets
\cite{herzoghabets2022} estimate source distance without a spatial
search by using spherical-Hankel recurrences. Their spherical-wave
model differs from the Fourier-subspace model considered here, whose
harmonic coefficients contain the unknown factors
$j_\ell(\kappa|\bx_j|)$. Both of our constructions retain this
target-dependent radial information: the generator equations recover
Cartesian coordinates, and the paired shifts recover their projections.
Choi, Zotter, Jo, and Yoo \cite{choietal2022} obtain three-dimensional
localization from multiple spherical arrays and cross-array
covariances. Here the input is instead a spherical Fourier signal
subspace at a single wavenumber. The results below concern the
aperture identities, identifiability, and perturbation properties of
the two reconstructions from this subspace.

\subsection{Contributions of this paper}
\label{sec:esprit-contributions}

The main results concern exact recovery, finite realization, and the
effect of perturbations on coordinate estimation.

\begin{enumerate}[leftmargin=*,label=\textup{(\roman*)}]
\item \textbf{Exact recovery.}
Paired restrictions recover the targets under full restriction rank and
unaliased phases; the continuous generator block is injective for every
distinct cloud at every $\kappa>0$
(\cref{thm:paired-exact-recovery,thm:exact-whole-sphere}).

\item \textbf{Finite-shift and differential relations.}
A tangential Hodge isometry identifies the gradient and angular-generator
least-squares formulations. A synchronized small-circle limit links the
paired identity to the local generator identity, with the rotation axis
varying along the equator
(\cref{thm:Hodge-equivalence,thm:great-circle-limit}).

\item \textbf{Finite realization and identifiability.}
One guard harmonic degree preserves the retained generator identity
exactly, and $K\ge s$ suffices for finite identifiability for generic
$s$-point configurations
(\cref{thm:guard-shell,thm:rank-generic-cutoff}).
Proposition~\ref{prop:finite-perturbation} bounds the coordinate-matrix
error at a fixed cutoff.

\item \textbf{Perturbation mechanisms and phase-gradient recovery.}
A one-target perturbation calculation gives a persistent bias for the
original generator estimate and a decreasing coordinate error for
reliably continued long shifts (\cref{app:scalar-kernel}).
After demixing, amplitude-normalized phase recovery has an
$O(\kappa^{-1})$ error bound under pointwise atom dominance. For a
dominant two-atom mixture at fixed separation, the bound improves to
$O(\kappa^{-2})$
(\cref{thm:hybrid-pointwise-dominance}, \eqref{eq:phase-two-atom-bound}).

\item \textbf{Controlled comparisons.}
Experiments on ten structured 125-target clouds over
$\kappa=10$--$1280$ exhibit regimes favoring each initializer under a
common empirical MUSIC iteration. A phase-gradient readout holds the generator
eigenbasis fixed, and a separate common-input benchmark controls the data
representation (\cref{sec:matched-regimes}).
\end{enumerate}

\subsection{Organization.}
Sections~\ref{sec:paired-method} and \ref{sec:whole-sphere-method}
give the exact recovery results.
Section~\ref{sec:methods-relation} relates the finite-shift and
differential identities, and Section~\ref{sec:finite-realization}
establishes the guarded realization, generic rank, and finite-cutoff
perturbation bound. The algorithms and numerical experiments are
given in Sections~\ref{sec:algorithms-perturbations}
and \ref{sec:matched-regimes}. We conclude in
Section~\ref{sec:outlook}. The appendices contain the aperture algebra,
the generic-rank proof, and the bias and phase-recovery analysis.

\section{Paired-small-circle ESPRIT}

\label{sec:paired-method}

A fixed chord pairs two small circles on the spherical aperture.
The corresponding restrictions satisfy a finite-shift identity, from
which the target coordinates can be recovered by a common spectral
calculation. We first derive this identity and then determine when
the restrictions have full column rank.

Throughout this section, $\bh$ is a dimensionless aperture chord and
$\kappa\bh$ is its Fourier displacement. For a unit vector $\bv$, set
\begin{equation}
\label{eq:equator-fiber}
 E_{\bv}=\{\bomega\in\Sph^2:\bomega\cdot\bv=0\}.
\end{equation}
We use the coordinate matrices of \cref{eq:coordinate-matrices} and write
\begin{equation}
\label{eq:Psi-z}
 \Psi_{\bm z}=\sum_{r=1}^3z_r\Psi_r,\qquad \bm z\in\R^3.
\end{equation}

Let \(0<|\bh|<2\), and put
\[
  \bv=\frac{\bh}{|\bh|},
  \qquad
  r_{\bh}=\sqrt{1-\frac{|\bh|^2}{4}},
\]
and, for \(\bomega\in E_{\bv}\), define the paired aperture points
\begin{equation}
\label{eq:small-circle-pair}
  \bomega_\pm^{\bh}(\bomega)
  =r_{\bh}\bomega\pm\frac{\bh}{2}.
\end{equation}
They lie on the two small circles
\begin{equation}
\label{eq:small-circles}
 C_\pm(\bh)
 =\left\{\boldsymbol\xi\in\Sph^2:
          \boldsymbol\xi\cdot\bv=\pm\frac{|\bh|}{2}\right\},
 \qquad
 \bomega_+^{\bh}(\bomega)-\bomega_-^{\bh}(\bomega)=\bh .
\end{equation}
Conversely, if \(\boldsymbol\xi_\pm\in\Sph^2\) and
\(\boldsymbol\xi_+-\boldsymbol\xi_-=\bh\), then their midpoint is
orthogonal to \(\bh\) and has length \(r_{\bh}\).
Thus \eqref{eq:small-circle-pair} parametrizes all paired points
with chord \(\bh\). Reversing the chord interchanges the two circles.

For a smooth aperture function, define its paired restrictions on
the common parameter circle \(E_{\bv}\) by
\[
 (\mathcal R_\pm^{\bh}f)(\bomega)
 =f\bigl(\bomega_\pm^{\bh}(\bomega)\bigr),
\]
and set
\begin{equation}
\label{eq:finite-small-circle-multiplier}
 \Lambda_{\bh}
 =\diag\!\left(
 e^{i\kappa\bh\cdot\bx_1},\ldots,
 e^{i\kappa\bh\cdot\bx_s}\right),
 \qquad
 \Phi_{\bh}=R^{-1}\Lambda_{\bh}R .
\end{equation}

\begin{proposition}
\label{prop:paired-finite-identity}
For \(Q=\Phi_XR\), the paired restrictions obey
\begin{equation}
\label{eq:finite-small-circle-esprit}
  \mathcal R_+^{\bh}Q
  =\mathcal R_-^{\bh}Q\,\Phi_{\bh}.
\end{equation}
For a fixed unit direction \(\bv\) and \(\bh=\eps\bv\),
\begin{equation}
\label{eq:finite-to-additive-generator}
  \Phi_{\eps\bv}
  =\exp(i\kappa\eps\Psi_{\bv}),
  \qquad
  \Psi_{\bv}=\sum_{r=1}^3v_r\Psi_r.
\end{equation}
\end{proposition}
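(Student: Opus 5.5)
The identity \eqref{eq:finite-small-circle-esprit} will be checked atom by atom, and then transported to the basis $Q$ through the change of basis $R$.

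\textbf{Step 1: the atoms.} For a single atom $\varphi_{\bx_j}$, the definition \eqref{eq:small-circle-pair} gives $\bomega_+^{\bh}(\bomega)=\bomega_-^{\bh}(\bomega)+\bh$. Hence, for every $\bomega\in E_{\bv}$,
\[
(\mathcal R_+^{\bh}\varphi_{\bx_j})(\bomega)
=e^{i\kappa(\bomega_-^{\bh}+\bh)\cdot\bx_j}
=e^{i\kappa\bh\cdot\bx_j}\,(\mathcal R_-^{\bh}\varphi_{\bx_j})(\bomega).
\]
Collecting the $s$ atoms as columns, this reads $\mathcal R_+^{\bh}\Phi_X=\mathcal R_-^{\bh}\Phi_X\Lambda_{\bh}$. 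Here the restrictions act column-wise and linearly, so they commute with right multiplication by constant matrices.

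\textbf{Step 2: the basis $Q$.} With $Q=\Phi_XR$, Step 1 gives
\[
\mathcal R_+^{\bh}Q=\mathcal R_-^{\bh}\Phi_X\Lambda_{\bh}R.
\]
Inserting $RR^{-1}$ in front of $\Lambda_{\bh}$ turns this into $\mathcal R_-^{\bh}Q\,R^{-1}\Lambda_{\bh}R=\mathcal R_-^{\bh}Q\,\Phi_{\bh}$, which is \eqref{eq:finite-small-circle-esprit}. No rank assumption is needed, since we only verify an identity and do not solve for $\Phi_{\bh}$.

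\textbf{Step 3: the exponential formula.} Take $\bh=\eps\bv$. Then $\Lambda_{\eps\bv}=\exp\bigl(i\kappa\eps\sum_r v_rD_r(X)\bigr)$, because the exponential of a diagonal matrix is taken entrywise. The definition \eqref{eq:coordinate-matrices} gives $\Psi_{\bv}=R^{-1}\bigl(\sum_r v_rD_r(X)\bigr)R$. The similarity invariance of the matrix exponential, $R^{-1}e^{A}R=e^{R^{-1}AR}$, which follows termwise from the power series, then yields $\Phi_{\eps\bv}=\exp(i\kappa\eps\Psi_{\bv})$.

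The only point needing care is the notational clash: $\Phi_{\bh}$ denotes a matrix, while $\Phi_X$ denotes the synthesis operator. One must also check that \eqref{eq:small-circle-pair} indeed produces points of $\Sph^2$. This holds because $|r_{\bh}\bomega\pm\bh/2|^2=r_{\bh}^2+|\bh|^2/4=1$, using $\bomega\perp\bh$. Beyond this the argument is routine and I expect no real obstacle.
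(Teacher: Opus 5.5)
Your proposal is correct and follows essentially the same route as the paper's proof. You apply the atomwise identity $\varphi_{\bx_j}(\bomega_+^{\bh})=e^{i\kappa\bh\cdot\bx_j}\varphi_{\bx_j}(\bomega_-^{\bh})$ columnwise, transport it through $R$, and then obtain the exponential formula from $\Lambda_{\eps\bv}=\exp(i\kappa\eps D_{\bv}(X))$ and similarity invariance of the matrix exponential; your check that the paired points lie on $\Sph^2$ is covered in the paper's surrounding text.
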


\begin{proof}
For each atom,
\[
 \varphi_{\bx_j}\bigl(\bomega_+^{\bh}(\bomega)\bigr)
 =
 e^{i\kappa\bh\cdot\bx_j}
 \varphi_{\bx_j}\bigl(\bomega_-^{\bh}(\bomega)\bigr),
\]
which proves \eqref{eq:finite-small-circle-esprit} columnwise and then after
the basis change \(R\).  Since
\(
\Lambda_{\eps\bv}
=\exp(i\kappa\eps D_{\bv}(X))
\)
with \(D_{\bv}(X)=\sum_rv_rD_r(X)\), similarity commutes with the matrix
exponential and gives \eqref{eq:finite-to-additive-generator}.
\end{proof}

Equip \(E_{\bv}\) with normalized arclength \(\dd\sigma/(2\pi)\).
If \(\mathcal R_-^{\bh}Q\) has full column rank in the resulting
space \(L^2(E_{\bv})\), the finite ESPRIT step is
\begin{equation}
\label{eq:small-circle-pseudoinverse}
 \Phi_{\bh}
 =(\mathcal R_-^{\bh}Q)^\dagger\mathcal R_+^{\bh}Q.
\end{equation}
Full rank of the spherical synthesis does not imply full rank of a
circle restriction. Distinct nodes with the same projection onto
\(\bv^\perp\) give proportional restricted columns, since
\[
 (\mathcal R_-^{\eps\bv}\varphi_{\bx_j})(\bomega)
 =
 e^{-i\kappa\eps\bv\cdot\bx_j/2}
 e^{i\kappa r_{\eps\bv}\bomega\cdot
       (I_3-\bv\bv^T)\bx_j}.
\]
When
\begin{equation}
\label{eq:small-circle-no-alias}
 \kappa\eps\max_j|\bv\cdot\bx_j|<\pi,
\end{equation}
the principal matrix logarithm gives the unaliased additive generator
\begin{equation}
\label{eq:small-circle-log}
 \Psi_{\bv}
 =\frac1{i\kappa\eps}\ln\Phi_{\eps\bv}.
\end{equation}
Without \eqref{eq:small-circle-no-alias}, the finite pencil determines the
projected coordinates only modulo \(2\pi/(\kappa\eps)\).

\begin{theorem}
\label{thm:paired-exact-recovery}
Let $Q=\Phi_XR$, and let $\bv_1,\bv_2,\bv_3$ be orthonormal.
Choose $0<\eps_0<\min\{\pi/\kappa,2\}$, put $q_0=\kappa\eps_0$, and
suppose each minus-circle
restriction $\mathcal R_-^{\eps_0\bv_r}Q$ has column rank $s$.
Then, for $r=1,2,3$, its least-squares shift matrix is exactly
\begin{equation}
 (\mathcal R_-^{\eps_0\bv_r}Q)^\dagger
 \mathcal R_+^{\eps_0\bv_r}Q
 =\Phi_{\eps_0\bv_r}.
 \label{eq:paired-exact-shift}
\end{equation}
A generic complex linear combination of the three shift matrices has
simple spectrum and gives a common pairing. If $V$ is a corresponding
common eigenbasis and
$\lambda_{j,r}=(V^{-1}\Phi_{\eps_0\bv_r}V)_{jj}$, then
\begin{equation}
 \bx_j=\frac1{q_0}\sum_{r=1}^3
       \arg(\lambda_{j,r})\bv_r
 \label{eq:paired-exact-readout}
\end{equation}
up to one common permutation. For $s=1$ no pencil search is needed.
\end{theorem}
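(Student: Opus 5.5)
The plan is to start from Proposition~\ref{prop:paired-finite-identity}. For each $r$ it gives the exact identity $\mathcal R_+^{\eps_0\bv_r}Q=\mathcal R_-^{\eps_0\bv_r}Q\,\Phi_{\eps_0\bv_r}$. Put $A=\mathcal R_-^{\eps_0\bv_r}Q$, viewed as an operator $\C^s\to L^2(E_{\bv_r})$. Since $A$ has column rank $s$, $A^*A$ is invertible and $A^\dagger=(A^*A)^{-1}A^*$ satisfies $A^\dagger A=I_s$. Applying $A^\dagger$ to both sides of the identity gives \eqref{eq:paired-exact-shift} directly. The condition $\eps_0<2$ guarantees $0<|\eps_0\bv_r|<2$, so the small circles are defined.

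For the pairing, the three shift matrices share the eigenbasis $R^{-1}$: indeed $\Phi_{\eps_0\bv_r}=R^{-1}\Lambda_{\eps_0\bv_r}R$ with diagonal $\Lambda$. A combination $\sum_r\alpha_r\Phi_{\eps_0\bv_r}$ has eigenvalues $\mu_j=\sum_r\alpha_r e^{iq_0\bv_r\cdot\bx_j}$. I will show the spectrum is simple for generic $\alpha$, and for this I need the phase vectors $(e^{iq_0\bv_r\cdot\bx_j})_{r=1}^3$ to be distinct in $j$. The map $\bx\mapsto(e^{iq_0\bv_r\cdot\bx})_r$ is injective on $B_1(0)$, because $|q_0\bv_r\cdot\bx|<\kappa\eps_0\le\pi$. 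So two distinct points differ in some coordinate $\bv_r\cdot\bx$, and their phases in that coordinate lie in $(-\pi,\pi)$ and therefore give different exponentials. For each pair $j\ne k$, the set of $\alpha$ with $\mu_j=\mu_k$ is then a proper linear hyperplane in $\C^3$. The complement of finitely many hyperplanes is open and dense, which is the genericity claim.

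With simple spectrum, any eigenbasis $V$ of the combination has the form $V=R^{-1}PD$ with $P$ a permutation and $D$ an invertible diagonal matrix. Eigenvectors are unique up to scale, and the columns of $R^{-1}$ are eigenvectors. Hence $V^{-1}\Phi_{\eps_0\bv_r}V=D^{-1}P^T\Lambda_{\eps_0\bv_r}PD$, which is diagonal with entries $e^{iq_0\bv_r\cdot\bx_{\pi(j)}}$, and the permutation $\pi$ is common to all $r$. Taking the principal argument recovers $q_0\bv_r\cdot\bx_{\pi(j)}$ exactly, since that number lies in $(-\pi,\pi)$; this is where $\eps_0<\pi/\kappa$ and $|\bx_j|<1$ are used. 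Expanding in the orthonormal frame, $\bx=\sum_r(\bv_r\cdot\bx)\bv_r$, then gives \eqref{eq:paired-exact-readout}. For $s=1$ all matrices are scalars, so no pencil search is needed.

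The only delicate step is showing simplicity of the generic combination, which depends on the no-aliasing injectivity argument above. Everything else is linear algebra.
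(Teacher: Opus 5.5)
Your proof is correct and follows the paper's argument. You apply the left inverse to the finite-shift identity, use the unaliased phases to get distinct phase triples, observe that each pair of targets defines a proper hyperplane of coefficient vectors, and read out the coordinates in the common eigenbasis using orthonormality; making the eigenbasis explicit as $R^{-1}PD$ and checking $A^\dagger A=I_s$ only adds detail the paper leaves implicit.
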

\begin{proof}
Apply a left inverse to the finite-shift identity to recover each
shift matrix. Since $q_0|\bv_r\cdot\bx_j|<\pi$, the principal phases
are unaliased. Distinct targets have distinct triples of projected
coordinates and hence distinct phase triples. For each pair of
targets, equality of the corresponding pencil eigenvalues defines a
proper hyperplane in the complex coefficient space. Outside the
finite union of these hyperplanes, the pencil has simple spectrum.
Its eigenvectors give a common ordering of the three phase matrices.
The recovery formula follows from orthonormality of the directions.
\end{proof}

The individual shift matrices may have repeated eigenvalues.
Recovery requires distinct joint phase triples, not distinct phases
in every direction.

For $0<\eps<2$, set $q=\kappa\eps$ and
\[
 \nu_\eps=\kappa r_{\eps\bv}
 =\kappa\sqrt{1-\eps^2/4},
 \qquad P_{\bv}=I_3-\bv\bv^T.
\]
Let \(A_{\epsilon\mathbf v}=\mathcal R_-^{\epsilon\mathbf v}\Phi_X:\mathbb C^s\to L^2(E_{\mathbf v})\) be the pullback of the raw minus-circle synthesis, with \(E_{\mathbf v}\) equipped with normalized arclength, i.e.
$$ (A_{\epsilon\mathbf v}c)(\boldsymbol\omega)
=\sum_{j=1}^s c_j
 \exp\!\left(i\kappa\bigl(r_{\epsilon\mathbf v}\boldsymbol\omega
              -\tfrac{\epsilon}{2}\mathbf v\bigr)\cdot\mathbf x_j\right),
\qquad \boldsymbol\omega\in E_{\mathbf v}.$$
 With normalized arclength its Gram matrix is
\begin{equation}
 (A_{\eps\bv}^*A_{\eps\bv})_{jk}
 =e^{-i\kappa\eps\bv\cdot(\bx_k-\bx_j)/2}
 J_0\!\left(\nu_\eps|P_{\bv}(\bx_k-\bx_j)|\right),
 \label{eq:paired-restriction-gram}
\end{equation}
where $J_m$ is the cylindrical Bessel function of the first kind.

Define the circle coherence sum
\begin{equation}
\label{eq:paired-circle-coherence}
 \eta_{\rm circ}(\eps,\bv;X)
 =\max_{1\le j\le s}\sum_{k\ne j}
 \left|J_0\!\left(\nu_\eps
 |P_{\bv}(\bx_k-\bx_j)|\right)\right|.
\end{equation}

\begin{corollary}
\label{cor:paired-circle-conditioning}
The singular values of the raw circle synthesis satisfy
\begin{equation}
\label{eq:paired-circle-singular-bounds}
 1-\eta_{\rm circ}
 \le \sigma_{\min}(A_{\eps\bv})^2
 \le \sigma_{\max}(A_{\eps\bv})^2
 \le 1+\eta_{\rm circ}.
\end{equation}
Consequently, if $\eta_{\rm circ}<1$, then $A_{\eps\bv}$ has rank $s$ and
\begin{equation}
\label{eq:paired-circle-condition-number}
 \cond(A_{\eps\bv})
 \le\sqrt{\frac{1+\eta_{\rm circ}}{1-\eta_{\rm circ}}}.
\end{equation}
For a restricted signal basis $A_{\eps\bv}R$,
$\cond(A_{\eps\bv}R)\le\cond(A_{\eps\bv})\cond(R)$.
\end{corollary}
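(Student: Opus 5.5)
The plan is to read the Gram matrix \eqref{eq:paired-restriction-gram} as identity plus a Hermitian perturbation and apply Gershgorin's theorem. First I will note that the diagonal entries equal $J_0(0)=1$, so $G=A_{\eps\bv}^*A_{\eps\bv}=I+E$. Here $E$ is Hermitian with zero diagonal and off-diagonal entries $E_{jk}=e^{-i\kappa\eps\bv\cdot(\bx_k-\bx_j)/2}J_0(\nu_\eps|P_{\bv}(\bx_k-\bx_j)|)$. The phase factor has modulus one, so $|E_{jk}|=|J_0(\nu_\eps|P_{\bv}(\bx_k-\bx_j)|)|$. If the Gram formula itself needs checking, it comes from integrating $e^{i\nu_\eps\bomega\cdot P_{\bv}(\bx_k-\bx_j)}$ over the unit circle $E_{\bv}$ with normalized arclength, which gives $J_0(\nu_\eps|P_{\bv}(\bx_k-\bx_j)|)$ by the integral representation of $J_0$. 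The linear offset $-\tfrac{\eps}{2}\bv$ in the argument contributes the unimodular factor.

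Next I apply Gershgorin's theorem to $G$. Every eigenvalue $\lambda$ of $G$ is real, since $G$ is Hermitian positive semidefinite, and lies in some disc $|\lambda-1|\le\sum_{k\ne j}|E_{jk}|\le\eta_{\rm circ}$. Since the eigenvalues of $G$ are $\sigma_i(A_{\eps\bv})^2$, this gives \eqref{eq:paired-circle-singular-bounds}. An equivalent route is $\|E\|_2\le\|E\|_\infty=\max_j\sum_{k\ne j}|E_{jk}|$, valid because $E$ is Hermitian, combined with Weyl's inequality.

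If $\eta_{\rm circ}<1$, then $\sigma_{\min}(A_{\eps\bv})^2\ge1-\eta_{\rm circ}>0$, so $A_{\eps\bv}$ is injective and has rank $s$. The condition-number bound \eqref{eq:paired-circle-condition-number} then follows by taking the square root of the ratio of the upper and lower bounds.

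For the final claim I use submultiplicativity. We have $\sigma_{\max}(A R)\le\sigma_{\max}(A)\sigma_{\max}(R)$. Also $\sigma_{\min}(AR)=\min_{|c|=1}|ARc|\ge\sigma_{\min}(A)\min_{|c|=1}|Rc|=\sigma_{\min}(A)\sigma_{\min}(R)$, which uses that $A$ has full column rank. Dividing the first inequality by the second gives the result. The only step needing care is the Gram computation, and it is routine. Nothing here is a real obstacle.
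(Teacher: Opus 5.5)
Your proposal is correct and follows the paper's argument: a unit-diagonal Hermitian Gram matrix whose off-diagonal moduli are the $J_0$ terms in $\eta_{\rm circ}$, then Gershgorin, then the link between Gram eigenvalues and squared singular values. Your verification of the Gram entries and your explicit $\sigma_{\max}/\sigma_{\min}$ argument for $\cond(A_{\eps\bv}R)\le\cond(A_{\eps\bv})\cond(R)$ only spell out steps the paper leaves implicit.
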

\begin{proof}
The diagonal entries in \eqref{eq:paired-restriction-gram} equal one,
and the moduli of the off-diagonal entries have the summands in
\eqref{eq:paired-circle-coherence}. The Hermitian Gershgorin theorem
therefore gives \eqref{eq:paired-circle-singular-bounds}; the remaining
claims follow from the relation between Gram eigenvalues and singular
values.
\end{proof}
The bound is sufficient, not necessary: it discards cancellation
between off-diagonal Gram entries. It also shows why three-dimensional
minimum separation alone does not determine circle conditioning.
The relevant geometry is that of the projected cloud.

\begin{proposition}
\label{prop:paired-rank}
For $0<\eps<2$ (equivalently, $0<q=\kappa\eps<2\kappa$), the continuous
restriction $A_{\eps\bv}$ has column
rank $s$ if and only if the projected points $P_{\bv}\bx_j$ are distinct.
When they are distinct, a generic collection of at least $s$ circle
sample angles also gives column rank $s$.
\end{proposition}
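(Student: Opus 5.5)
Write each column of $A_{\eps\bv}$ explicitly as in the displayed formula:
\[
 (A_{\eps\bv}\be_j)(\bomega)=e^{-i\kappa\eps\bv\cdot\bx_j/2}\,
 e^{i\nu_\eps\bomega\cdot\by_j},\qquad \by_j=P_{\bv}\bx_j,\quad \bomega\in E_{\bv}.
\]
The unimodular prefactors do not affect rank. So $A_{\eps\bv}$ has rank $s$ if and only if the plane exponentials $e^{i\nu_\eps\bomega\cdot\by_j}$, restricted to the unit circle $E_{\bv}$ of the plane $\bv^\perp$, are linearly independent. Note that $\nu_\eps>0$ because $0<\eps<2$.

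The \emph{only if} direction is immediate. If $\by_j=\by_k$ for some $j\ne k$, the two columns are proportional, as already noted before \cref{thm:paired-exact-recovery}, so the rank drops.

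For the \emph{if} direction, suppose the $\by_j$ are distinct. I would compute Fourier coefficients in the angle $\theta$ on $E_{\bv}$. Write $\by_j=\rho_j(\cos\alpha_j,\sin\alpha_j)$ in an orthonormal frame of $\bv^\perp$. The Jacobi--Anger expansion gives
\[
 e^{i\nu_\eps\rho_j\cos(\theta-\alpha_j)}=\sum_{m\in\mathbb Z}i^mJ_m(\nu_\eps\rho_j)e^{-im\alpha_j}e^{im\theta}.
\]
Hence $\sum_jc_je^{i\nu_\eps\bomega\cdot\by_j}\equiv0$ forces, for every $m$,
\[
 \sum_jc_jJ_m(\nu_\eps\rho_j)e^{-im\alpha_j}=0 .
\]
Handling this system directly is awkward. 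Instead, I plan to use a cleaner analytic-continuation argument. The function
\[
 F(\bz)=\sum_jc_je^{i\nu_\eps\bz\cdot\by_j}
\]
is entire on $\C^2$, and it vanishes on the complexified circle $\{z_1^2+z_2^2=1\}$. This holds because the real circle is a totally real, maximal-dimensional subset of that irreducible curve, so vanishing there propagates by the identity theorem.

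Parametrize the complex conic by $z_1=\cos\zeta$, $z_2=\sin\zeta$ with $\zeta=\theta+it$, and let $t\to\pm\infty$. Then
\[
 \bz\approx\tfrac12e^{\mp i\zeta}(1,\pm i),
\]
so the exponent is $\tfrac{i\nu_\eps}{2}e^{\mp i\zeta}\,(\rho_je^{\pm i\alpha_j})$. The functions $e^{a_jw}$ with $w=\tfrac{i\nu_\eps}{2}e^{-i\zeta}$ ranging over a neighborhood of infinity in $\C$ are linearly independent when the $a_j=\rho_je^{i\alpha_j}$ are distinct. The same holds for the conjugate branch. Distinctness of the complex numbers $\rho_je^{i\alpha_j}$ is exactly distinctness of the $\by_j$.

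Making this asymptotic separation rigorous is the main obstacle. The exact expression is
\[
 F=\sum_jc_j\exp\!\bigl(\tfrac{i\nu_\eps}{2}(a_je^{-i\zeta}+\bar a_je^{i\zeta})\bigr).
\]
Setting $u=e^{-i\zeta}\in\C^*$, this becomes $\sum_jc_j\exp\bigl(\tfrac{i\nu_\eps}{2}(a_ju+\bar a_j/u)\bigr)\equiv0$ on $\C^*$.

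I would first order the $a_j$ by choosing a direction of $u$ along which the real parts of $\tfrac{i\nu_\eps}{2}a_ju$ have a unique maximum, say at $j_0$. This is possible for generic $\arg u$ since the $a_j$ are distinct. Multiplying by the inverse of the dominant exponential and letting $|u|\to\infty$ then gives $c_{j_0}=0$, and induction removes all coefficients. The $\bar a_j/u$ terms tend to $1$ and cause no trouble.

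For the sampling claim, take angles $\theta_1,\dots,\theta_N$ with $N\ge s$. The sampled matrix has entries $e^{i\nu_\eps\bomega(\theta_n)\cdot\by_j}$, which are real-analytic in $(\theta_1,\ldots,\theta_N)$. Consider the determinant of the $s\times s$ block from the first $s$ samples. It is not identically zero: otherwise, expanding along rows and using the continuous independence just proved, one could inductively pick $\theta_1,\dots,\theta_s$ making successive minors nonzero, a contradiction. Equivalently, choose the $\theta_n$ greedily so that the restricted functions remain independent on the chosen points.

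A nonzero real-analytic function on $\mathbb T^s$ vanishes only on a null, closed nowhere-dense set. Hence generic sample angles give column rank $s$, and adding further samples cannot reduce the rank.
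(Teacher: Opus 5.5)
Your proof is correct, but for sufficiency it takes a genuinely different route from the paper.

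\textbf{The paper's route.} The paper reads a vanishing combination $\sum_j c_j e^{i\nu_\eps\bomega\cdot\by_j}$ on $E_{\bv}$, after absorbing the axial phases, as the zero far field of the planar radiating field $\sum_j c_j H_0^{(1)}(\nu_\eps|\cdot-\by_j|)$. The Rellich lemma and unique continuation then make that field vanish away from the sources. The distinct logarithmic singularities at the points $\by_j$ force every $c_j=0$. This is the two-dimensional analogue of the argument in \cref{thm:qualitative-injectivity}.

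\textbf{Your route.} You continue the restricted sum holomorphically to the complexified circle, identified with $\C^*$ through $u=e^{-i\zeta}$, so it becomes $\sum_j c_j\exp\bigl(\tfrac{i\nu_\eps}{2}(a_ju+\bar a_j u^{-1})\bigr)\equiv0$. You then eliminate coefficients by exponential dominance along a generic ray $|u|\to\infty$. This is rigorous as written. The identity theorem on $\C^*$ needs only vanishing on $|u|=1$, so the ``totally real, maximal-dimensional'' discussion can be dropped.

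\textbf{What each buys.} Your argument is elementary and self-contained; it needs no uniqueness theory for radiating fields. It also makes transparent that only distinctness of the complex numbers $a_j=\rho_je^{i\alpha_j}$ matters, and that $\nu_\eps>0$ is essential. It handles vanishing on a subarc of $E_{\bv}$ with no extra step, which is the situation in \cref{rem:hemispherical-paired-arcs}. The paper's argument instead keeps the circle case uniform with the three-dimensional block-injectivity proof. There the same Rellich-plus-singularity mechanism handles the vector-valued tangential design, for which a direct exponential-dominance argument is less immediate.

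\textbf{Sampling claim.} Your argument for generic sample angles matches the paper's: produce a nonvanishing $s\times s$ minor, then use that the zero set of a nontrivial real-analytic determinant on the torus is null and nowhere dense. The ``otherwise \dots\ a contradiction'' framing is redundant. The direct greedy construction you state next is the actual proof: expand the $(k+1)$-minor along the new row, note its last cofactor is the nonzero $k$-minor, and invoke linear independence.
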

\begin{proof}
Necessity follows from proportional columns. For sufficiency, a linear
dependence would give a zero far field for a two-dimensional radiating
Helmholtz field with point sources at the distinct projected points,
after absorbing the nonzero axial phases into their coefficients.
Rellich uniqueness and continuation imply the field vanishes away
from the sources, and its distinct logarithmic singularities force
all coefficients to vanish; see \cite{coltonkress2019} for the underlying
uniqueness principles. The trace functions are analytic. Their linear
independence implies some $s$ evaluation angles give a nonzero
determinant; that analytic minor vanishes only on a measure-zero set
with empty interior. The same conclusion holds with more sample angles.
\end{proof}
For every finite distinct cloud, a generic orthonormal triple of
circle axes gives full column rank for all three continuous restrictions.

\begin{remark}[Paired arcs in a hemisphere]
\label{rem:hemispherical-paired-arcs}
Let $\Gamma=\{\bomega\in\Sph^2:\omega_3>0\}$. Since
$C_-(\bh)=-C_+(\bh)$, a complete pair of small circles cannot lie in
$\Gamma$. For $\bh=\eps\bv$, however, both paired points
$r_{\bh}\boldsymbol\eta\pm\eps\bv/2$, $\boldsymbol\eta\in E_{\bv}$,
belong to $\Gamma$ on the common parameter arc
\[
 r_{\bh}\eta_3>\eps|v_3|/2.
\]
Because $\max_{\boldsymbol\eta\in E_{\bv}}\eta_3=\sqrt{1-v_3^2}$,
this arc is nonempty and open precisely when
\begin{equation}
 0<\eps<2\sqrt{1-v_3^2}.
 \label{eq:hemispherical-arc-availability}
\end{equation}
The finite-shift identity \eqref{eq:finite-small-circle-esprit} holds
unchanged on the arc. If the projected nodes are distinct, an analytic
linear combination of the restricted atoms that vanishes on the arc
vanishes on the full circle; \cref{prop:paired-rank} therefore gives
continuous rank $s$, and at least $s$ generic paired samples give exact
rank. With one common signal basis on $\Gamma$, three generic tilted
orthonormal axes admit recovery for sufficiently small unaliased shifts;
larger shifts require consistent phase continuation. North-axis shifts
are unavailable, and short arcs can be poorly conditioned. Along a
nonhorizontal tilted axis, increasing the displacement shortens the
observable arc, trading restriction conditioning against phase sensitivity.
\end{remark}

\section{Whole-sphere generator ESPRIT}
\label{sec:whole-sphere-method}

The generator equations determine the three coordinate matrices
through one coupled system. We prove that the continuous system is
injective whenever the nodes are distinct.
Section~\ref{sec:finite-realization} treats finite harmonic
realization and its additional rank requirements.

The coordinate multipliers are $M_rf(\bomega)=\omega_rf(\bomega)$.
We denote the tangent-plane projection by
\begin{equation}
\label{eq:tangent-projection}
 \bPi_{\bomega}=I_3-\bomega\bomega^T.
\end{equation}
Define the angular-momentum generators by
\[
  \mathcal L_r=-i(\be_r\times\bomega)\cdot
       \nabla_{\Sph^2}=-i(\cJ \nabla_{\Sph^2})_r,\qquad r=1,2,3
\]
where 
\begin{equation}
\label{eq:Hodge-map}
  (\cJ F)(\bomega)=\bomega\times F(\bomega),\qquad  F\in L^2(\Sph^2;\C^{3}).
\end{equation}

On each tangent plane, \(\cJ\) is rotation by \(\pi/2\), so
\begin{equation}
\label{eq:Hodge-properties}
  \cJ ^*=-\cJ ,
  \qquad
  \cJ ^*\cJ =I,
  \qquad
  \cJ ^{-1}=-\cJ .
\end{equation}

For \(G\in\SO(3)\), define the unitary left action
\begin{equation}
\label{eq:rotation-representation}
  (U(G)f)(\bomega)=f(G^{-1}\bomega).
\end{equation}
For \(\ba\in\R^3\), let \([\ba]_\times\bu=\ba\times\bu\), set
\[
  R_{\ba}(\theta)=\exp(\theta[\ba]_\times),
  \qquad
  \mathcal L_{\ba}=\sum_{i=1}^3a_i\mathcal L_i
  =-i(\ba\times\bomega)\cdot\nabla_{\Sph^2}.
\]
For smooth $f$, differentiation at $\theta=0$ gives
$\partial_\theta U(R_{\ba}(\theta))f|_0=-i\mathcal L_{\ba}f$, hence
\begin{equation}
\label{eq:one-parameter-unitary}
 U(R_{\ba}(\theta))=e^{-i\theta\mathcal L_{\ba}}.
\end{equation}
The forward point orbit has the opposite sign:
\begin{equation}
\label{eq:forward-orbit-sign}
 Q(R_{\ba}(\theta)\bomega)
 =(e^{i\theta\mathcal L_{\ba}}Q)(\bomega),
 \qquad
 \left.\partial_\theta Q(R_{\ba}(\theta)\bomega)\right|_0
 =i\mathcal L_{\ba}Q(\bomega).
\end{equation}
For $\ba\ne0$, the field $K_{\ba}(\bomega)=\ba\times\bomega$ is
 tangent to the rotation orbits and vanishes at $\pm\ba/|\ba|$.

The aperture commutators are given in
Appendix~\ref{app:aperture-algebra}.

\begin{proposition}
\label{prop:differential-lie-cocycle-bridge}
Let \(Q=\Phi_XR\) and
\(\Psi_r=R^{-1}D_r(X)R\).  Then:
\begin{enumerate}[label=(\roman*),leftmargin=*]
\item For every
\(\boldsymbol\tau\in T_{\bomega}\Sph^2\),
\begin{equation}
\label{eq:local-tangent-pencil}
  \boldsymbol\tau\cdot\nabla_{\Sph^2}Q(\bomega)
  =i\kappa Q(\bomega)\Psi_{\boldsymbol\tau}.
\end{equation}
For the angular-velocity vector
\(
  \ba_{\bomega,\boldsymbol\tau}
  :=\bomega\times\boldsymbol\tau
\)
at the base point \(\bomega\), we have
\(
  \ba_{\bomega,\boldsymbol\tau}\times\bomega
  =\boldsymbol\tau
\), and hence
\begin{equation}
%\label{eq:local-pencil}
\label{eq:tangent-as-angular-generator}
  \boldsymbol\tau\cdot\nabla_{\Sph^2}Q(\bomega)
  =i \mathcal L_{\bm\omega\times \bm\tau}Q(\bomega).
\end{equation}

\item If instead \(\ba\) is a fixed rotation axis over the whole sphere,
then its tangent velocity is
\(
  K_{\ba}(\bomega)=\ba\times\bomega
\), and
\begin{equation}
\label{eq:fixed-axis-variable-pencil}
 \mathcal L_{\ba}Q(\bomega)
  =\kappa Q(\bomega)\Psi_{\ba\times\bomega}.
\end{equation}
In particular, with the Killing fields
\(
  K_i(\bomega):=\be_i\times\bomega
\),
\begin{equation}
\label{eq:Q-intertwining}
%\label{eq:fixed-axis-component-system}
  \frac1\kappa\mathcal L_iQ
  =Q\Psi_{K_i(\bomega)}
  =\sum_{q,r=1}^3\epsilon_{iqr}M_qQ\Psi_r.
\end{equation}
The Killing fields span the tangent plane with one redundancy. At each
base point, setting
$\ba_{\bomega,\boldsymbol\tau}=\bomega\times\boldsymbol\tau$ gives
\begin{equation}
\label{eq:moving-frame-reconstruction}
\begin{aligned}
 &\sum_i\omega_iK_i(\bomega)=0,
 \qquad
 \boldsymbol\tau=\sum_i
   (\ba_{\bomega,\boldsymbol\tau})_iK_i(\bomega),\\
 &\boldsymbol\tau\cdot\nabla_{\Sph^2}Q(\bomega)
 =i\sum_i(\ba_{\bomega,\boldsymbol\tau})_i
               \mathcal L_iQ(\bomega).
\end{aligned}
\end{equation}
\end{enumerate}
\end{proposition}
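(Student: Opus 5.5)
The plan is to verify every identity on a single atom $\varphi_{\bx}$ and then pass to $Q=\Phi_XR$ by linearity. Since $Q=\Phi_X R$ and $\Phi_X D_{\bm z}(X)=\Phi_X R\,\Psi_{\bm z}R^{-1}$, a columnwise identity $\partial\varphi_{\bx_j}=(\text{scalar linear in }\bx_j)\,\varphi_{\bx_j}$ becomes, after right multiplication by $R$, an identity of the form $QR^{-1}D(X)R=Q\Psi$. Here $D(X)$ is the diagonal matrix of those scalars, and $\Psi_{\bm z}=\sum_r z_r\Psi_r$ is linear in $\bm z$.

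\textbf{Part (i).} I would extend $\varphi_{\bx}$ to $\R^3$ by $e^{i\kappa\bm y\cdot\bx}$. For $\boldsymbol\tau\in T_{\bomega}\Sph^2$, the tangential derivative is
\[
\boldsymbol\tau\cdot\nabla_{\Sph^2}\varphi_{\bx}
=\boldsymbol\tau\cdot \bPi_{\bomega}\nabla\varphi_{\bx}
=i\kappa(\boldsymbol\tau\cdot\bx)\varphi_{\bx},
\]
because $\bPi_{\bomega}\boldsymbol\tau=\boldsymbol\tau$. Assembling the columns gives $\Phi_X\,i\kappa D_{\boldsymbol\tau}(X)$, and multiplying by $R$ yields \eqref{eq:local-tangent-pencil}. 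For the second identity, the BAC--CAB rule gives
\[
(\bomega\times\boldsymbol\tau)\times\bomega=\boldsymbol\tau(\bomega\cdot\bomega)-\bomega(\bomega\cdot\boldsymbol\tau)=\boldsymbol\tau,
\]
since $|\bomega|=1$ and $\boldsymbol\tau\perp\bomega$. The definition of $\mathcal L_{\ba}$ evaluated at the point $\bomega$ reads $\mathcal L_{\ba}=-i(\ba\times\bomega)\cdot\nabla_{\Sph^2}$. Taking $\ba=\bomega\times\boldsymbol\tau$ frozen at the base point gives $i\mathcal L_{\ba}Q(\bomega)=\boldsymbol\tau\cdot\nabla_{\Sph^2}Q(\bomega)$. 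I will stress that $\ba$ is held constant before the operator is evaluated, so no derivative falls on $\ba$.

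\textbf{Part (ii).} I apply (i) with $\boldsymbol\tau=K_{\ba}(\bomega)=\ba\times\bomega$, which is tangent. This gives
\[
\mathcal L_{\ba}Q=-i\cdot i\kappa\, Q\Psi_{\ba\times\bomega}=\kappa Q\Psi_{\ba\times\bomega},
\]
which is \eqref{eq:fixed-axis-variable-pencil}. Setting $\ba=\be_i$ and using $(\be_i\times\bomega)_r=\sum_q\epsilon_{iqr}\omega_q$, linearity of $\bm z\mapsto\Psi_{\bm z}$ gives $Q\Psi_{K_i}=\sum_{q,r}\epsilon_{iqr}\omega_qQ\Psi_r$. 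This equals $\sum\epsilon_{iqr}M_qQ\Psi_r$ because $M_q$ acts pointwise and commutes with right multiplication by the constant matrix $\Psi_r$. That establishes \eqref{eq:Q-intertwining}.

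\textbf{The moving-frame identities.} The sum $\sum_i\omega_iK_i=(\sum_i\omega_i\be_i)\times\bomega=\bomega\times\bomega=0$. The reconstruction $\sum_i a_iK_i=\ba\times\bomega=\boldsymbol\tau$, with $\ba=\ba_{\bomega,\boldsymbol\tau}$, is the vector identity already proved in (i). The last line follows either by linearity of $\ba\mapsto\mathcal L_{\ba}$, via $\mathcal L_{\ba}=\sum a_i\mathcal L_i$ and \eqref{eq:tangent-as-angular-generator}, or by combining \eqref{eq:Q-intertwining} with the reconstruction.

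For the spanning claim, the map $\ba\mapsto\ba\times\bomega$ from $\R^3$ to $T_{\bomega}\Sph^2$ has kernel $\R\bomega$ and is therefore onto, with exactly a one-dimensional redundancy. There is no real obstacle. The only delicate point is sign and convention bookkeeping: the base-point-frozen axis in (i) must be kept distinct from the global axis in (ii), and the sign $-i\cdot i=1$ must be tracked.
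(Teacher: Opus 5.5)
Your proposal is correct and follows the paper's own proof essentially step by step. You verify the tangential derivative identity on a single atom and transfer it to $Q$ through the basis change $R$, then use the triple-product identity $(\bomega\times\boldsymbol\tau)\times\bomega=\boldsymbol\tau$, substitute $\boldsymbol\tau=\ba\times\bomega$, and expand the cross product to obtain the component system and the moving-frame relations.
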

\begin{proof}
For a spherical atom and a tangent vector \(\boldsymbol\tau\),
\[
  \boldsymbol\tau\cdot\nabla_{\Sph^2}
  e^{i\kappa\bomega\cdot\bx}
  =i\kappa(\boldsymbol\tau\cdot\bx)
   e^{i\kappa\bomega\cdot\bx}.
\]
Applying this identity columnwise and changing basis by \(R\) proves
\eqref{eq:local-tangent-pencil}.  The vector triple-product identity gives
\[
  (\bomega\times\boldsymbol\tau)\times\bomega
  =\boldsymbol\tau,
\]
which proves \eqref{eq:tangent-as-angular-generator}.  Taking
\(\boldsymbol\tau=\ba\times\bomega\) proves
\eqref{eq:fixed-axis-variable-pencil}; expanding the cross product gives
\eqref{eq:Q-intertwining}.  The identities in \eqref{eq:moving-frame-reconstruction} follow from
\(
  \sum_i\omega_i\be_i\times\bomega=0
\)
and
\(
  (\bomega\times\boldsymbol\tau)\times\bomega
  =\boldsymbol\tau
\).
\end{proof}

Define the stacked angular operator and coupled design by
\begin{equation}
\label{eq:BQ-block}
 \cL=\begin{bmatrix}\mathcal L_1\\\mathcal L_2\\\mathcal L_3\end{bmatrix},
 \qquad
 \cB_Q=\begin{bmatrix}
 0&-M_3Q&M_2Q\\
 M_3Q&0&-M_1Q\\
 -M_2Q&M_1Q&0
 \end{bmatrix}.
\end{equation}
For a stacked trial triple $Z=(Z_1^T,Z_2^T,Z_3^T)^T$, its geometric
form is
\begin{equation}
\label{eq:qZ}
 \cQ_Z(\bomega)=\sum_{r=1}^3\be_rQ(\bomega)Z_r,
 \qquad \cB_QZ=\cJ\cQ_Z.
\end{equation}
Thus \eqref{eq:Q-intertwining} becomes
\begin{equation}
\label{eq:main-display}
 \cL Q=\kappa\cB_Q\Psi,
 \qquad
 \Psi=\begin{bmatrix}\Psi_1\\\Psi_2\\\Psi_3\end{bmatrix}.
\end{equation}
We next prove that the coupled design is injective. Once $\Psi$ is
known, a generic real pencil gives a common eigenbasis and thereby
pairs the three coordinates.

\subsection{Angular block injectivity and Gram equations}
\label{sec:identifiability}

To formulate injectivity independently of the signal basis, define,
for \(\bb=(\bb_1,\ldots,\bb_s)\in(\C^3)^s\),
\begin{equation}
\label{eq:node-tangential-design}
  (\cT_X\bb)(\bomega)
  =\bPi_{\bomega}\sum_{j=1}^s
    \bb_j e^{i\kappa\bomega\cdot\bx_j}.
\end{equation}

\begin{theorem}
\label{thm:qualitative-injectivity}
If \(\kappa>0\) and \(\bx_1,\ldots,\bx_s\) are distinct, then
\(\cT_X\) is injective.  The angular node design obtained by
replacing \(\bPi_{\bomega}\) with \(\bomega\times\) is injective as well.
Consequently, for every \(R\in GL(s,\C)\), the matrix block designs
\(\cT_Q\) and \(\cB_Q\), with \(Q=\Phi_XR\), are injective.
\end{theorem}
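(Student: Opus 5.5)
The plan is to reduce injectivity of \(\cT_X\) to a uniqueness statement for radiating Helmholtz fields, in the spirit of the proof of \cref{prop:paired-rank}. Suppose \(\cT_X\bb=0\), and put
\[
 F(\bomega)=\sum_{j=1}^s\bb_je^{i\kappa\bomega\cdot\bx_j}\in\C^3 .
\]
Then \(\bPi_{\bomega}F(\bomega)=0\) says that \(F(\bomega)\) is parallel to \(\bomega\). Equivalently, \(\bomega\times F(\bomega)=0\) on \(\Sph^2\). On each tangent plane \(\cJ\) is a rotation, so \(\bomega\times F=\cJ\bPi_{\bomega}F\). The angular node design therefore has the same kernel as \(\cT_X\), and it suffices to treat \(\cT_X\).

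Write \(\bb_j=(b_{j,1},b_{j,2},b_{j,3})\). For each pair of indices \(a\ne c\), the scalar condition \(\omega_aF_c(\bomega)-\omega_cF_a(\bomega)=0\) reads
\[
 \sum_{j=1}^s\bigl(\omega_ab_{j,c}-\omega_cb_{j,a}\bigr)e^{i\kappa\bomega\cdot\bx_j}=0,
 \qquad \bomega\in\Sph^2 .
\]
Since \(\omega_ae^{i\kappa\bomega\cdot\bx}=(i\kappa)^{-1}\partial_{x_a}e^{i\kappa\bomega\cdot\bx}\), the left side is, up to a constant factor and the sign convention \(\bx_j\mapsto-\bx_j\), the far-field pattern of the radiating field
\[
 u(\by)=\sum_{j=1}^s\bd_j\cdot\nabla_{\bx}\Phi_\kappa(\by,\bx_j),
 \qquad \bd_j=b_{j,c}\be_a-b_{j,a}\be_c ,
\]
where \(\Phi_\kappa\) is the outgoing fundamental solution. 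This \(u\) is a superposition of dipoles at the distinct points \(\bx_j\) (or \(-\bx_j\)), with no monopole part.

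Next apply the uniqueness argument. Because the far field of \(u\) vanishes, Rellich's lemma gives \(u=0\) outside a large ball. Since \(u\) is real-analytic on the connected set \(\R^3\setminus\{\bx_j\}\), unique continuation gives \(u\equiv0\) there. Near \(\bx_j\), the other terms are smooth, while the \(j\)th term behaves like \(\bd_j\cdot(\by-\bx_j)/(4\pi|\by-\bx_j|^3)+O(|\by-\bx_j|^{-1})\). This is unbounded along the ray in the direction of \(\bar\bd_j\) unless \(\bd_j=0\), so every \(\bd_j\) vanishes, i.e.\ \(b_{j,a}=b_{j,c}=0\). Running over the three pairs \((a,c)\) gives \(\bb=0\). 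This step, separating the dipole singularities and checking that they cannot cancel, is the main technical point; it only needs distinctness of the nodes, and it works for every \(\kappa>0\) because Rellich's lemma does.

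Finally, transfer the result to the basis \(Q=\Phi_XR\). For a stacked triple \(Z\), each column \(k\) of \(\cQ_Z\) equals \(\sum_j\bigl(\sum_r\be_r(RZ_r)_{jk}\bigr)e^{i\kappa\bomega\cdot\bx_j}\). Hence the \(k\)th column of \(\cT_QZ\) is \(\cT_X\bb^{(k)}\) with \(\bb^{(k)}_j=((RZ_1)_{jk},(RZ_2)_{jk},(RZ_3)_{jk})\). Likewise, by \eqref{eq:qZ}, \(\cB_QZ=\cJ\cQ_Z\) is the angular node design applied to \(\bb^{(k)}\). If either image vanishes, injectivity on node coefficients gives \(RZ_r=0\) for every \(r\), and invertibility of \(R\) then gives \(Z=0\).
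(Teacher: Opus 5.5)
Your proof is correct and is essentially the paper's argument: the paper applies Rellich's lemma, unique continuation on $\R^3\setminus X$, and the $r^{-2}$ source singularity to the curl of the vector monopole field $\sum_j G_\kappa(\by-\bx_j)\bb_j$. The Cartesian components of that curl are, up to sign, exactly your scalar dipole fields, and the paper then passes to $\cT_Q$ and $\cB_Q$ column by column just as you do. One cosmetic fix: $\bd_j$ is complex, so instead of a ray along $\bar\bd_j$, note that $r^2u(\bx_j+r\hat{\bm z})\to\bd_j\cdot\hat{\bm z}/(4\pi)$ must vanish for every real unit vector $\hat{\bm z}$, which forces $\bd_j=0$.
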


\begin{proof}
Suppose
\[
  \bomega\times
  \sum_{j=1}^s\bb_j e^{i\kappa\bomega\cdot\bx_j}=0
  \qquad\text{for every }\bomega\in\Sph^2.
\]
Define the vector field
\[
  \bm u(\bm y)
  =\sum_{j=1}^sG_\kappa(\bm y-\bx_j)\bb_j,\qquad G_\kappa(\bm y):=\frac{e^{i\kappa|\bm y|}}{4\pi|\bm y|}
\]
where $G_\kappa$ is  the outgoing Helmholtz fundamental solution. 
Its far-field pattern is a nonzero scalar multiple of
\(
\sum_j\bb_j e^{-i\kappa\widehat{\bm y}\cdot\bx_j}
\).
The far-field pattern of \(\nabla\times\bm u\) is therefore a nonzero
scalar multiple of
\[
  \widehat{\bm y}\times
  \sum_j\bb_j e^{-i\kappa\widehat{\bm y}\cdot\bx_j},
\]
which vanishes by the hypothesis with
\(\bomega=-\widehat{\bm y}\).  The Rellich lemma for radiating
Helmholtz fields and unique continuation imply
\(
\nabla\times\bm u=0
\)
on the connected set \(\R^3\setminus X\); see, for example,
\cite{coltonkress2019}.

Near \(\bx_j\), put \(\bm y=\bx_j+r\widehat{\bm z}\).  The
contribution from the \(j\)-th source has the singular expansion
\[
  \nabla G_\kappa(r\widehat{\bm z})\times\bb_j
  =-\frac{\widehat{\bm z}\times\bb_j}{4\pi r^2}
   +O(r^{-1}),
\]
while all other source contributions are smooth.  Since the curl vanishes,
multiplication by \(r^2\) followed by \(r\downarrow0\) gives
\(
\widehat{\bm z}\times\bb_j=0
\)
for every \(\widehat{\bm z}\in\Sph^2\), hence \(\bb_j=0\).  This holds
for all \(j\), proving node-design injectivity.  The pointwise identity $|\bomega\times\bw|=|P_{\bomega}\bw|$ proves the
tangential statement.  Applying the node result separately to each output
column of \(QZ=\Phi_XRZ\), and using invertibility of \(R\), proves
matrix-block injectivity.
\end{proof}

For a stacked matrix-valued field $Y$, write
\begin{equation}
\label{eq:stacked-L2-Frobenius-norm}
 \|Y\|_{L_{\rm F}^2}^2
 =\int_{\Sph^2}\|Y(\bomega)\|_{\rm F}^2
   \frac{\dd\bomega}{4\pi}.
\end{equation}

\begin{theorem}
\label{thm:exact-whole-sphere}
Let \(Q=\Phi_XR\), where \(\Phi_X\) has full column rank and
\(R\in GL(s,\C)\).  Then under the assumptions of \cref{thm:qualitative-injectivity}
\begin{equation}
\label{eq:whole-sphere-LS}
  \widehat Z
  =\argmin_{Z\in(\C^{s\times s})^3}
  \| \cL Q -\kappa\cB_QZ\|_{L_{\rm F}^2}^2
\end{equation}
has the unique solution
\begin{equation}
\label{eq:whole-sphere-solution}
 \widehat Z=\kappa^{-1}\cB_Q^\dagger\cL Q=\Psi,
 \qquad
 \widehat Z_r=R^{-1}D_r(X)R,\quad r=1,2,3.
\end{equation}
The recovered matrices therefore satisfy
\begin{equation}
\label{eq:commuting-joint-spectrum}
  [\Psi_r,\Psi_t]=0,
  \qquad
  \operatorname{jspec}(\Psi_1,\Psi_2,\Psi_3)=X.
\end{equation}
If the signal basis is changed to \(QS\), \(S\in GL(s,\C)\), then
\begin{equation}
\label{eq:basis-equivariance}
  \Psi_r(QS)=S^{-1}\Psi_r(Q)S.
\end{equation}
Thus the recovered joint spectrum depends only on the signal subspace.
\end{theorem}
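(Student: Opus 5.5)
The plan is to combine the exact intertwining identity \eqref{eq:main-display} with the injectivity of $\cB_Q$ from \cref{thm:qualitative-injectivity}. Throughout, the least-squares problem is posed on the Hilbert space $L^2_{\rm F}$ of stacked matrix-valued fields.

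\textbf{Well-posedness and uniqueness.} First I check that the problem is well posed.
\begin{itemize}
\item Since $Q$ is a finite combination of smooth atoms, $\cL Q$ lies in $L^2_{\rm F}$.
\item The map $Z\mapsto\cB_QZ$ is a linear map from the finite-dimensional space $(\C^{s\times s})^3$ into $L^2_{\rm F}$, so its range is closed.
\item By \cref{thm:qualitative-injectivity}, $\cB_Q$ is injective. Hence $\cB_Q^*\cB_Q$ is a positive definite operator on $(\C^{s\times s})^3$, and $\cB_Q^\dagger=(\cB_Q^*\cB_Q)^{-1}\cB_Q^*$ is a genuine left inverse.
\end{itemize}
The objective in \eqref{eq:whole-sphere-LS} is then a strictly convex quadratic in $Z$. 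Its unique minimizer is given by the normal equations, namely $\widehat Z=\kappa^{-1}\cB_Q^\dagger\cL Q$.

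\textbf{Identification of the minimizer.} By \eqref{eq:main-display}, $\cL Q=\kappa\cB_Q\Psi$ exactly. The residual therefore vanishes at $Z=\Psi$, so $\Psi$ is a minimizer, and uniqueness forces $\widehat Z=\Psi$. Equivalently, $\cB_Q^\dagger\cB_Q=I$ gives the same conclusion. The definition \eqref{eq:coordinate-matrices} then gives $\widehat Z_r=R^{-1}D_r(X)R$. Simultaneous similarity to the diagonal matrices $D_r(X)$ yields both conclusions in \eqref{eq:commuting-joint-spectrum}:
\begin{itemize}
\item the $\Psi_r$ commute, because the $D_r(X)$ commute;
\item the joint eigenvectors are the columns of $R^{-1}$, with joint eigenvalue triples $\bx_j$, so the joint spectrum is $X$.
\end{itemize}

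\textbf{Basis equivariance.} For the basis change, $QS=\Phi_X(RS)$ with $RS\in GL(s,\C)$. Applying the result just proved with $R$ replaced by $RS$ gives
\[
\Psi_r(QS)=(RS)^{-1}D_r(X)(RS)=S^{-1}\Psi_r(Q)S.
\]
This is \eqref{eq:basis-equivariance}. As an independent check, $\cB_{QS}Z=\cB_Q(SZ)$ blockwise and $\cL(QS)=(\cL Q)S$. Since any two bases of $U_X$ differ by such an $S$, the joint spectrum depends only on the subspace.

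\textbf{Where care is needed.} The only delicate point is making the pseudoinverse statement rigorous in infinite dimensions. Because the domain is finite-dimensional and $\cB_Q$ is injective, the range of $\cB_Q$ is closed and $\cB_Q^\dagger$ is bounded, so this step reduces to the injectivity theorem already established.
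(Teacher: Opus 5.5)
Your proposal is correct and follows the paper's proof: the zero residual comes from \eqref{eq:main-display}, uniqueness and the pseudoinverse formula come from block injectivity, simultaneous similarity gives commutativity and the joint spectrum, and replacing $R$ by $RS$ gives equivariance. Your remarks on the finite-dimensional domain and the closed range only make explicit what the paper leaves implicit.
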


\begin{proof}
By \eqref{eq:main-display}, the coordinate triple has zero residual.
Block injectivity makes it the unique minimizer and gives the
pseudoinverse formula.  Simultaneous similarity to the
three diagonal matrices proves commutativity and identifies the joint
spectrum.  Replacing \(R\) by \(RS\) gives
\(
(RS)^{-1}D_r(X)(RS)=S^{-1}\Psi_rS
\), proving equivariance.
\end{proof}

The Gram operator gives an equivalent characterization of block
injectivity.
\begin{proposition}
\label{prop:Gram-blocks}
The common Gram operator
\(
\Gamma_Q=\cT_Q^*\cT_Q=\cB_Q^*\cB_Q
\)
acts by
\begin{equation}
\label{eq:Gram-action}
  (\Gamma_QZ)_r=\sum_{t=1}^3H_{rt}(Q)Z_t
\end{equation}
with
\begin{equation}
\label{eq:Gram-blocks-Q}
  H_{rt}(Q)=
  \int_{\Sph^2}
  (\delta_{rt}-\omega_r\omega_t)
  Q(\bomega)^*Q(\bomega)
  \frac{\dd\bomega}{4\pi},\qquad r,t\in\{1,2,3\}. 
\end{equation}
In particular,
\begin{equation}
\label{eq:Gram-quadratic}
  \langle Z,\Gamma_QZ\rangle
  =\int_{\Sph^2}
    \|\bPi_{\bomega}\cQ_Z (\bomega)\|_{\rm F}^2
    \frac{\dd\bomega}{4\pi}
  =\int_{\Sph^2}
    \|\cJ\cQ_Z (\bomega)\|_{\rm F}^2
    \frac{\dd\bomega}{4\pi}.
\end{equation}
Thus \(\cT_Q\) and \(\cB_Q\) are injective if and only if
\(\lambda_{\min}(\Gamma_Q)>0\).
\end{proposition}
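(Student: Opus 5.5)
We compute both Gram operators directly from the geometric forms. The starting point is the identity \(\cB_QZ=\cJ\cQ_Z\) from \eqref{eq:qZ}. We also need the analogous tangential form \((\cT_QZ)(\bomega)=\bPi_{\bomega}\cQ_Z(\bomega)\), which is how the matrix design \(\cT_Q\) is read off from \eqref{eq:node-tangential-design} with \(\Phi_X\) replaced by \(Q\) and vector coefficients replaced by the stacked triple \(Z\). With these two forms in hand, the plan is to compute \(\langle Z,\Gamma Z\rangle\) pointwise in \(\bomega\), polarize to get the block action, and finally read off injectivity.

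\emph{Step 1: the quadratic form.} We have \(\langle \cB_QZ,\cB_QZ\rangle=\int\|\cJ\cQ_Z\|_{\rm F}^2\,\dd\bomega/(4\pi)\). At each \(\bomega\) the vector identity \(|\bomega\times\bw|=|\bPi_{\bomega}\bw|\) holds. We apply it to each column of the \(3\times s\) matrix \(\cQ_Z(\bomega)\), viewed as \(s\) vectors in \(\C^3\) built from the rows \(Q(\bomega)Z_r\). This gives \(\|\cJ\cQ_Z(\bomega)\|_{\rm F}=\|\bPi_{\bomega}\cQ_Z(\bomega)\|_{\rm F}\), and hence both expressions in \eqref{eq:Gram-quadratic}. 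It also gives \(\|\cB_QZ\|=\|\cT_QZ\|\) for every \(Z\), so by polarization \(\cB_Q^*\cB_Q=\cT_Q^*\cT_Q\). This is the "common Gram operator" claim. An alternative route is \(\cJ^*\cJ=\bPi_{\bomega}\) pointwise on \(\C^3\), which is the full-space version of \eqref{eq:Hodge-properties}; as operators on \(\C^3\) both \(\cJ^*\cJ\) and \(\bPi_{\bomega}^*\bPi_{\bomega}\) equal \(\bPi_{\bomega}\).

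\emph{Step 2: the block entries.} We expand \(\|\bPi_{\bomega}\cQ_Z\|_{\rm F}^2=\operatorname{tr}(\cQ_Z^*\bPi_{\bomega}\cQ_Z)\). Since \(\cQ_Z=\sum_r\be_rQZ_r\) and \(\be_r^T\bPi_{\bomega}\be_t=\delta_{rt}-\omega_r\omega_t\), this equals
\[
\sum_{r,t}(\delta_{rt}-\omega_r\omega_t)\operatorname{tr}\bigl(Z_r^*Q(\bomega)^*Q(\bomega)Z_t\bigr).
\]
Integrating over the sphere yields \(\sum_{r,t}\operatorname{tr}(Z_r^*H_{rt}(Q)Z_t)\) with \(H_{rt}\) as in \eqref{eq:Gram-blocks-Q}. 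The weight is real symmetric in \((r,t)\) and \(Q^*Q\) is Hermitian, so \(H_{rt}^*=H_{tr}\) and the form is Hermitian. Polarizing in the Frobenius inner product on \((\C^{s\times s})^3\) then identifies \((\Gamma_QZ)_r=\sum_tH_{rt}Z_t\).

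\emph{Step 3: injectivity.} The operator \(\Gamma_Q\) is a positive semidefinite operator on the finite-dimensional space \((\C^{s\times s})^3\). Hence \(\Gamma_QZ=0\) if and only if \(\langle Z,\Gamma_QZ\rangle=\|\cT_QZ\|^2=0\), that is, if and only if \(\cT_QZ=0\). So \(\ker\Gamma_Q=\ker\cT_Q=\ker\cB_Q\), and injectivity is equivalent to \(\lambda_{\min}(\Gamma_Q)>0\).

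The only mildly delicate point is bookkeeping in Step 1–2: the Frobenius norm of the matrix-valued field must pair the spatial \(\C^3\) index with the \(s\) output columns correctly, so that the pointwise projection identity applies column by column. Everything else is routine expansion.
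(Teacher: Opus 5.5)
Your proof is correct and follows essentially the same route as the paper. Both arguments expand the pointwise Frobenius norm with $\bPi_{\bomega}^*\bPi_{\bomega}=\bPi_{\bomega}$ to obtain the blocks $H_{rt}(Q)$, get the angular equality from $\cJ^*\cJ=\bPi_{\bomega}$ on $\C^3$ (which your columnwise identity $|\bomega\times\bw|=|\bPi_{\bomega}\bw|$ restates), and read off the block action; your polarization step and the argument $\ker\Gamma_Q=\ker\cT_Q=\ker\cB_Q$ simply spell out what the paper leaves implicit.
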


\begin{proof}
Expand the squared norm in \eqref{eq:Gram-quadratic} and use
\(
\bPi_{\bomega}^*\bPi_{\bomega}=\bPi_{\bomega}
\).
The angular equality follows from
\(
\cJ^*\cJ Z(\bomega)=\bPi_{\bomega}Z(\bomega)
\).
Reading off the coefficient of each \(Z_r\) gives
\eqref{eq:Gram-action}.
\end{proof}

The corresponding node Gram matrix can be expressed through the
matrix kernel
\begin{equation}
\label{eq:tangential-kernel}
 T_\kappa(\bd)
 =\int_{\Sph^2}\bPi_{\bomega}
 e^{i\kappa\bomega\cdot\bd}\frac{\dd\bomega}{4\pi}.
\end{equation}
The block Gram matrix $G_X=\cT_X^*\cT_X$ on $(\C^3)^s$ has blocks
\begin{equation}
\label{eq:whole-sphere-node-gram-blocks}
 (G_X)_{jk}=T_\kappa(\bx_k-\bx_j),
 \qquad (G_X)_{jj}=T_\kappa(0)=\frac23I_3.
\end{equation}
Set
\begin{equation}
\label{eq:whole-sphere-block-coherence}
 \eta_{\rm sph}(\kappa;X)
 =\max_{1\le j\le s}\sum_{k\ne j}
 \|T_\kappa(\bx_k-\bx_j)\|_2.
\end{equation}

\begin{corollary}
\label{cor:whole-sphere-conditioning}
The tangential node design and its angular Hodge rotation have the same
singular values, and
\begin{equation}
\label{eq:whole-sphere-singular-bounds}
 \frac23-\eta_{\rm sph}
 \le\sigma_{\min}(\cT_X)^2
 \le\sigma_{\max}(\cT_X)^2
 \le\frac23+\eta_{\rm sph}.
\end{equation}
In particular, if $\eta_{\rm sph}<2/3$, then
\begin{equation}
\label{eq:whole-sphere-condition-number}
 \cond(\cT_X)
 \le\sqrt{\frac{\frac23+\eta_{\rm sph}}
                   {\frac23-\eta_{\rm sph}}}.
\end{equation}
\end{corollary}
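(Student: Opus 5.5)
The plan mirrors the proof of \cref{cor:paired-circle-conditioning}, with the $3\times3$ blocks of $G_X=\cT_X^*\cT_X$ in place of scalar Gram entries.

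\textbf{Step 1: same singular values.} The angular node design is $\cJ\cT_X$, obtained by replacing $\bPi_{\bomega}$ with $\bomega\times$. Since $\bomega\times\bPi_{\bomega}\bw=\bomega\times\bw$, we have $\cJ\cT_X$ equal to that design. Pointwise, $|\bomega\times\bw|=|\bPi_{\bomega}\bw|$, which gives $(\cJ\cT_X)^*(\cJ\cT_X)=\cT_X^*\cJ^*\cJ\cT_X=\cT_X^*\cT_X$, by $\cJ^*\cJ=I$ on tangent fields from \eqref{eq:Hodge-properties}. Both designs therefore have the common Gram matrix $G_X$, so they have the same singular values.

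\textbf{Step 2: block structure of $G_X$.} I would verify the block formula \eqref{eq:whole-sphere-node-gram-blocks} directly. For a block pair $(j,k)$, expand $\langle \cT_X\bb,\cT_X\bb\rangle$ and use $\bPi_{\bomega}^*\bPi_{\bomega}=\bPi_{\bomega}$. This produces $\int\bPi_{\bomega}e^{i\kappa\bomega\cdot(\bx_k-\bx_j)}\,\dd\bomega/4\pi=T_\kappa(\bx_k-\bx_j)$. The diagonal block is $T_\kappa(0)=\int(I_3-\bomega\bomega^T)\,\dd\bomega/4\pi=I_3-\tfrac13I_3=\tfrac23I_3$, since $\int\omega_r\omega_t\,\dd\bomega/4\pi=\delta_{rt}/3$.

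\textbf{Step 3: block Gershgorin bound.} This is the main step, because a scalar Gershgorin count would lose the block structure. I would use a direct Rayleigh quotient estimate. Write $G_X=\tfrac23 I+E$, where $E$ has zero diagonal blocks and off-diagonal blocks $E_{jk}=T_\kappa(\bx_k-\bx_j)$. $G_X$ is Hermitian, and $T_\kappa(\bd)^*=T_\kappa(-\bd)$, so $E$ is Hermitian as well. For any $\bb$,
\[
|\langle\bb,E\bb\rangle|\le\sum_{j\ne k}\|E_{jk}\|_2|\bb_j||\bb_k|
\le\sum_{j\ne k}\|E_{jk}\|_2\frac{|\bb_j|^2+|\bb_k|^2}{2}.
\]
Because $\|E_{jk}\|_2=\|E_{kj}\|_2$, the right side equals $\sum_j|\bb_j|^2\sum_{k\ne j}\|E_{jk}\|_2\le\eta_{\rm sph}|\bb|^2$. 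This is the Schur test applied to the matrix of block norms. Hence the eigenvalues of $G_X$ lie in $[\tfrac23-\eta_{\rm sph},\tfrac23+\eta_{\rm sph}]$. Since $\sigma(\cT_X)^2$ are exactly the eigenvalues of $G_X$, this gives \eqref{eq:whole-sphere-singular-bounds}.

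\textbf{Step 4: condition number.} If $\eta_{\rm sph}<2/3$, the lower bound is positive, so $\cT_X$ is injective with the stated quantitative margin. The condition number bound \eqref{eq:whole-sphere-condition-number} is then the square root of the ratio of the upper and lower eigenvalue bounds.
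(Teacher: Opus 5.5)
Your proposal is correct and follows essentially the same route as the paper. You expand $\|\cT_X\bb\|^2$ into the diagonal term $\tfrac23\sum_j|\bb_j|^2$ plus cross terms, bound the cross terms by block operator norms and $2ab\le a^2+b^2$ to get $\eta_{\rm sph}\sum_j|\bb_j|^2$, and transfer the Gram matrix to the angular design via the pointwise Hodge isometry. Your explicit verification of the block formula and of the symmetric splitting via $\|E_{jk}\|_2=\|E_{kj}\|_2$ fills in details the paper leaves implicit.
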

\begin{proof}
For $\bb=(\bb_1,\ldots,\bb_s)$, expansion of the node-design norm gives
\[
 \|\cT_X\bb\|_2^2
 =\frac23\sum_j|\bb_j|^2
  +\sum_{j\ne k}\bb_j^*T_\kappa(\bx_k-\bx_j)\bb_k.
\]
Since $T_\kappa(-\bd)=T_\kappa(\bd)^*$, the block Gram matrix is
Hermitian. Bounding the cross terms by their operator norms and using
$2ab\le a^2+b^2$ gives
\[
 \left|\sum_{j\ne k}\bb_j^*T_\kappa(\bx_k-\bx_j)\bb_k\right|
 \le\eta_{\rm sph}\sum_j|\bb_j|^2.
\]
This proves the bounds. The pointwise Hodge isometry gives the same Gram
matrix for the angular design.
\end{proof}
For $\bd\ne0$, with $t=\kappa|\bd|$ and $\bu=\bd/|\bd|$, the kernel is
\begin{equation}
\label{eq:tangential-kernel-formula}
 T_\kappa(\bd)
 =\left(\sinc(t)+\frac{\sinc'(t)}t\right)(I_3-\bu\bu^T)
   -\frac{2\sinc'(t)}t\bu\bu^T.
\end{equation}
Here $\sinc(t)=\sin(t)/t$ for $t\ne0$ and $\sinc(0)=1$.
Hence
\[
 \|T_\kappa(\bd)\|_2
 =\max\left\{
 \left|\sinc(t)+\frac{\sinc'(t)}t\right|,
 \left|\frac{2\sinc'(t)}t\right|\right\}.
\]
As in the circle estimate, the bound discards cancellation among
off-diagonal interactions.

\begin{corollary}
\label{cor:atom-independence}
For distinct nodes and \(\kappa>0\), the synthesis operator \(\Phi_X\)
has full column rank.  Consequently the hypotheses of
\cref{thm:exact-whole-sphere} hold at continuous full aperture.
\end{corollary}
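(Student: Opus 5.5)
The plan is to reduce full column rank of \(\Phi_X\) to the injectivity already established in \cref{thm:qualitative-injectivity}, and then to verify the remaining hypothesis of \cref{thm:exact-whole-sphere}. Suppose \(\bc\in\C^s\) satisfies \(\Phi_X\bc=\sum_jc_j\varphi_{\bx_j}=0\) in \(\cH\). Since each atom is real-analytic on \(\Sph^2\), the sum vanishes identically, not merely almost everywhere.

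Next, I would choose a fixed vector \(\bw\in\C^3\), say \(\bw=\be_1\), and set \(\bb_j=c_j\bw\). Then
\[
 (\cT_X\bb)(\bomega)=\bPi_{\bomega}\,\bw\sum_{j}c_je^{i\kappa\bomega\cdot\bx_j}=0
\]
for every \(\bomega\). By \cref{thm:qualitative-injectivity}, which requires only distinct nodes and \(\kappa>0\), this forces \(\bb_j=c_j\bw=0\) for every \(j\). Since \(\bw\ne0\), we get \(c_j=0\), so \(\Phi_X\) is injective. Because the domain \(\C^s\) is finite-dimensional, the range is closed, and injectivity is the same as full column rank \(s\).

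For the consequence, \(\Phi_X\) now has full column rank, and the distinctness assumption of \cref{thm:qualitative-injectivity} holds. Hence for any \(R\in GL(s,\C)\) the basis \(Q=\Phi_XR\) satisfies every hypothesis of \cref{thm:exact-whole-sphere} with data on the whole sphere.

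There is no serious obstacle. The only point needing care is that the Rellich and singularity argument in \cref{thm:qualitative-injectivity} is applied to vector coefficients of the special form \(c_j\bw\), and that theorem covers arbitrary \(\bb\). A direct alternative avoids the vector machinery: the scalar field \(u=\sum_jc_jG_\kappa(\cdot-\bx_j)\) has far field proportional to \(\sum_jc_je^{-i\kappa\widehat{\bm y}\cdot\bx_j}\), which is zero. Rellich's lemma then gives \(u\equiv0\) off \(X\), and the \(1/(4\pi r)\) singularity at each \(\bx_j\) gives \(c_j=0\).
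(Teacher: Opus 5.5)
Your proof is correct and is essentially the paper's. You lift $\bc$ to $\bb_j=c_j\bw$ with a fixed nonzero $\bw\in\C^3$, note that $\cT_X\bb=0$, and apply \cref{thm:qualitative-injectivity} to conclude $c_j=0$, after which the hypotheses of \cref{thm:exact-whole-sphere} hold immediately. Your closing scalar argument (the far field of $\sum_jc_jG_\kappa(\cdot-\bx_j)$ together with the $1/(4\pi r)$ singularity) is also valid and avoids the vector curl argument, but it is not needed.
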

\begin{proof}
If \(\Phi_X\bc=0\), choose a fixed nonzero \(\bu\in\C^3\) and put
\(\bb_j=c_j\bu\).  Then \(\cT_X\bb=0\), so
\cref{thm:qualitative-injectivity} gives \(\bb_j=0\) and hence
\(c_j=0\) for every \(j\).
\end{proof}

\section{Hodge equivalence and the synchronized limit}
\label{sec:methods-relation}

The angular-generator and projected-gradient formulations give the
same whole-sphere least-squares estimate, even for a smooth empirical
frame. We prove this equivalence by a tangential Hodge isometry.
We then derive the local generator identity as a synchronized
differential limit of the paired-circle identity.

%\subsection{Hodge equivalence with tangential differential ESPRIT}
%\label{sec:hodge-equivalence}

Define the tangential design and intrinsic differential data by
\begin{equation}
\label{eq:TQ-GQ}
  \cT_QZ=\bPi_{\bomega}\cQ_Z ,
  \qquad
   G_Q=-i\nabla_{\Sph^2}Q.
\end{equation}

For $Q=\Phi_XR$ we have the equivalent intrinsic-gradient identity
\begin{equation}
\label{eq:gradient-intertwining}
-i\nabla_{\Sph^2}Q(\bomega)
  =\kappa\sum_{r=1}^3\bPi_{\bomega}\be_r\,Q(\bomega)\Psi_r
\end{equation}
with $\bPi_{\bomega}$ as in \cref{eq:tangent-projection}.

Both \(\cT_QZ\) and \( G_Q\) belong to the tangent-field space
\[
  L_T^2=\left\{
    F\in L^2(\Sph^2;\C^{3\times s}):
    \bomega^TF(\bomega)=0\ \text{a.e.}
  \right\}.
\]

\begin{theorem}
\label{thm:Hodge-equivalence}
For every trial triple \(Z\),
\begin{equation}
\label{eq:Hodge-factorizations}
  \cB_QZ=\cJ \cT_QZ,
  \qquad
   \cL Q =\cJ  G_Q,
\end{equation}
and therefore
\begin{equation}
\label{eq:residual-isometry}
   \cL Q -\kappa\cB_QZ
  =\cJ (G_Q-\kappa\cT_QZ).
\end{equation}
Consequently,
\begin{align}
  \cB_Q^*\cB_Q=\cT_Q^*\cT_Q,
  \qquad\cB_Q^* \cL Q =\cT_Q^* G_Q,\qquad
  \cB_Q^\dagger=\cT_Q^\dagger\cJ ^*,\\
  \widehat\Psi^{\rm ang}
  :=\kappa^{-1}\cB_Q^\dagger \cL Q
  =\kappa^{-1}\cT_Q^\dagger G_Q
  =:\widehat\Psi^{\rm tan}.
  \label{eq:equal-estimators}
\end{align}
\end{theorem}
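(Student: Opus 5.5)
The plan is to check the two factorizations in \eqref{eq:Hodge-factorizations} pointwise in $\bomega$. Everything else then follows from the isometry properties \eqref{eq:Hodge-properties} of $\cJ$ restricted to tangent fields.

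\textbf{The design factorization.} By \eqref{eq:qZ} we already have $\cB_QZ=\cJ\cQ_Z$. Since $\bomega\times\bomega Q(\bomega)Z_r$-type normal components vanish, $\bomega\times F=\bomega\times\bPi_{\bomega}F$ for every $F\in\C^{3\times s}$. Hence $\cJ\cQ_Z=\cJ\bPi_{\bomega}\cQ_Z=\cJ\cT_QZ$.

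\textbf{The data factorization.} By definition $\mathcal L_r=-i(\be_r\times\bomega)\cdot\nabla_{\Sph^2}$. Using the scalar triple product, $(\be_r\times\bomega)\cdot\bm g=\be_r\cdot(\bomega\times\bm g)$ for each column $\bm g$ of $\nabla_{\Sph^2}Q$. Stacking over $r$ gives $\cL Q=\bomega\times(-i\nabla_{\Sph^2}Q)=\cJ G_Q$. Subtracting $\kappa$ times the design factorization from the data factorization yields \eqref{eq:residual-isometry}.

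\textbf{Adjoint identities and pseudoinverses.} Both $\cT_QZ$ and $G_Q$ lie in $L_T^2$. On $L_T^2$, $\cJ$ is a pointwise rotation by $\pi/2$, so $\cJ^*\cJ=I$ there. This gives $\cB_Q^*\cB_Q=\cT_Q^*\cJ^*\cJ\cT_Q=\cT_Q^*\cT_Q$ and $\cB_Q^*\cL Q=\cT_Q^*G_Q$.

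For the pseudoinverse I would avoid a general formula and use injectivity from \cref{thm:qualitative-injectivity}, or \cref{prop:Gram-blocks} with $\lambda_{\min}(\Gamma_Q)>0$. Injectivity gives $\cB_Q^\dagger=\Gamma_Q^{-1}\cB_Q^*=\Gamma_Q^{-1}\cT_Q^*\cJ^*=\cT_Q^\dagger\cJ^*$, because $\cT_Q^\dagger=\Gamma_Q^{-1}\cT_Q^*$. If only closed range is available for a smooth empirical frame, the same identities follow from the normal equations. Applying this to $\cL Q=\cJ G_Q$ then gives \eqref{eq:equal-estimators}.

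\textbf{Expected obstacle.} The only delicate point is that $\cJ^*\cJ=I$ holds only on tangent fields; on all of $L^2(\Sph^2;\C^{3\times s})$ it equals $\bPi_{\bomega}$. So when writing $\cB_Q^\dagger=\cT_Q^\dagger\cJ^*$ I must check that $\cT_Q^\dagger$ is applied only to elements whose relevant part is tangent. This holds because $\cJ^*$ maps into tangent fields and $\cT_Q^*$ annihilates nothing tangent incorrectly, since $\cT_Q^*F=\cT_Q^*\bPi_{\bomega}F$.

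The rest is bookkeeping of the column-wise action and of the $\kappa$ factor. Note also that none of the steps uses $Q=\Phi_XR$ beyond smoothness and injectivity, which is consistent with the claim that the equivalence holds for a smooth empirical frame.
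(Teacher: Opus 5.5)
Your proposal is correct and follows the paper's proof. Both factorizations are checked pointwise, via \(\bomega\times\bm z=\bomega\times\bPi_{\bomega}\bm z\) and \(\cL=-i\cJ\nabla_{\Sph^2}\), and the normal-equation and pseudoinverse identities then come from \(\cJ\) being an isometry on \(L_T^2\). The only small difference is that you route the pseudoinverse through injectivity, writing \(\cB_Q^\dagger=\Gamma_Q^{-1}\cB_Q^*\), and this is not needed. Since \(A^\dagger b\) is the minimum-norm solution of \(A^*Ax=A^*b\), and the two normal systems coincide, \(\cB_Q^\dagger=\cT_Q^\dagger\cJ^*\) holds for every smooth frame, as your fallback remark already suggests.
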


\begin{proof}
For every \(\bm z\in\C^3\),
\[
  \bomega\times\bm z
  =\bomega\times\bPi_{\bomega}\bm z.
\]
This gives the first factorization. For a smooth frame $Q$,
the identity $\boldsymbol{\mathcal L}=-i\cJ\nabla_{\Sph^2}$ gives
\[
 \cL Q =-i\cJ\nabla_{\Sph^2}Q=\cJ G_Q.
\]
No finite-atom assumption is needed for this factorization.
Subtracting the two identities gives the residual relation.
Since \(\cJ\) is unitary on \(L_T^2\), the normal equations and
pseudoinverse identities follow.
\end{proof}
For an exact frame $Q=\Phi_XR$, both estimates in
\eqref{eq:equal-estimators} equal $\Psi$. Without the factor
$\kappa^{-1}$, the pseudoinverse output is $\kappa\Psi$.
Thus continuous whole-sphere differential ESPRIT and Lie-algebraic
ESPRIT are the same least-squares problem, expressed in two tangent
frames:
\[
  \text{projected gradient }\bPi_{\bomega}\cQ_Z 
  \quad\longleftrightarrow\quad
  \text{angular field }\cJ\cQ_Z .
\]
The angular formulation is convenient for harmonic discretization:
each \(\mathcal L_i\) preserves harmonic degree and has a sparse
matrix representation.

At a single aperture point, both designs have rank two:
\begin{equation}
\label{eq:pointwise-rank-two}
  \ker\bPi_{\bomega}
  =\ker(\bomega\times)=\operatorname{span}\{\bomega\},
  \qquad
  \sum_{i=1}^3\omega_i\mathcal L_iQ(\bomega)=0.
\end{equation}
The missing normal line varies with \(\bomega\).  In particular,
\begin{equation}
\label{eq:average-projector}
  \int_{\Sph^2}\bPi_{\bomega}\frac{\dd\bomega}{4\pi}
  =\frac23I_3.
\end{equation}
For a multi-atom design, full-aperture injectivity additionally follows from
\cref{thm:qualitative-injectivity}.

\begin{remark}\label{rem:discrete-equivalence}
For sampled directions \(\bomega_n\) with positive weights \(w_n\), let
\[
 W_h=\diag(w_1,\ldots,w_N)\otimes I_3,
 \qquad
 \cJ _h=\diag\bigl(
 [\bomega_1]_\times,\ldots,[\bomega_N]_\times\bigr).
\]
When the sum models the normalized surface measure
\(\dd\bomega/(4\pi)\), take \(\sum_nw_n=1\).
For a matrix-valued vector field \(F\), write
\(\operatorname{stack}_nF(\bomega_n)\) for node-major vertical stacking,
and define
\[
  \cT_hZ=\operatorname{stack}_{n=1}^N
      (\cT_QZ)(\bomega_n)\in\C^{3N\times s},
  \qquad
  \bm G_h=\operatorname{stack}_{n=1}^N
      \frac1{i}\nabla_{\Sph^2}Q(\bomega_n)
      \in\C^{3N\times s}.
\]
The sampled angular block and data are then
\[
  \cB_hZ=\cJ _h\cT_hZ,
  \qquad
   Y_h=\cJ _h\bm G_h.
\]
Exact discrete equivalence persists when
\[
  \cJ _h^*W_h\cJ _h=W_h
\]
on the discrete tangent space.  It may fail if the two derivatives are
discretized independently, radial leakage is not removed, or different
component weights or regularizers are used.  If an output projector
\(P_h\) preserves the discrete tangent space, define both angular
quantities by applying \(\cJ _h\) after \(P_h\) to retain the
isometry.  Applying \(P_h\) after the Hodge map instead requires
\(P_h\cJ _h=\cJ _hP_h\) for the two constructions to agree.
\end{remark}

%\subsection{Paired small circles and their great-circle limit}
%\label{sec:paired-circles}
%\subsection{Based tangent directions and the incidence manifold}
%\label{sec:incidence}

A tangent direction must be specified together with its base point
on the sphere. The unit-tangent incidence manifold is
\begin{equation}
\label{eq:incidence-manifold}
  \mathfrak F
  =\{(\bomega,\bv)\in\Sph^2\times\Sph^2:
       \bomega\cdot\bv=0\}
  \simeq\SO(3),
\end{equation}
where \((\bomega,\bv)\) is identified with the oriented frame
\([\bomega\quad \bv\quad \bomega\times\bv]\). 

Fixing \(\bv\) in \(\mathfrak F\) gives the equator $E_{\bv}$.
Along this equator, the derivative in direction \(\bv\) involves
the same coefficient matrix \(\Psi_{\bv}\).
Fixing \(\bomega\) instead gives all unit tangent directions at that
point. The paired-circle differential construction uses the first
description; the whole-sphere construction collects the second over
all base points.

Write the translation-character matrix as
\begin{equation}
\label{eq:Phi-h}
 \Phi(\bh)=R^{-1}\diag(e^{i\kappa\bh\cdot\bx_j})R
          =e^{i\kappa\Psi_{\bh}}.
\end{equation}
Its group law and infinitesimal generator are
\begin{equation}
\label{eq:Phi-group-law}
 \Phi(\bh_1+\bh_2)=\Phi(\bh_1)\Phi(\bh_2),
 \qquad
 \left.\frac1{i\kappa}\partial_t\Phi(t\bh)\right|_{t=0}
 =\Psi_{\bh},
\end{equation}
and the plane-wave identity gives
\begin{equation}
\label{eq:any-two-points}
 Q(\bomega')=Q(\bomega)\Phi(\bomega'-\bomega).
\end{equation}
For $0<t<1$ and $\boldsymbol\eta\in E_{\bv}$, use the paired paths
\begin{equation}
\label{eq:paired-points}
 \bomega_\pm(t,\boldsymbol\eta)
 =\sqrt{1-t^2}\,\boldsymbol\eta\pm t\bv.
\end{equation}
Their chord is $2t\bv$, so \eqref{eq:finite-small-circle-esprit} gives
\begin{equation}
\label{eq:paired-circle-pencil}
 Q(\bomega_+)=Q(\bomega_-)e^{2i\kappa t\Psi_{\bv}}.
\end{equation}
With $\ba_{\boldsymbol\eta}=\boldsymbol\eta\times\bv$ and
$\alpha=\arcsin t$, Rodrigues' formula yields
\begin{equation}
\label{eq:paired-rotations}
 \bomega_\pm=R_{\ba_{\boldsymbol\eta}}(\pm\alpha)\boldsymbol\eta,
 \qquad R_{\ba_{\boldsymbol\eta}}(2\alpha)\bomega_-=\bomega_+.
\end{equation}
Thus the paired identity synchronizes rotations whose axes vary along
the equator. In the cocycle notation of \cref{sec:compatibility},
\begin{equation}
\label{eq:synchronized-cocycle}
 \cC_{\ba_{\boldsymbol\eta}}(2\alpha,\bomega_-)
 =\Phi(2t\bv)=e^{2i\kappa t\Psi_{\bv}}.
\end{equation}
The constant chord makes the coefficient matrix independent of
$\boldsymbol\eta$.

\begin{theorem}
\label{thm:great-circle-limit}
For every \(\boldsymbol\eta\in E_{\bv}\),
\begin{equation}
\label{eq:great-circle-limit}
  \lim_{t\to0}
  \frac{Q(\bomega_+(t,\boldsymbol\eta))
       -Q(\bomega_-(t,\boldsymbol\eta))}
       {2it}
  =\kappa Q(\boldsymbol\eta)\Psi_{\bv} =
 \mathcal L_{\boldsymbol\eta\times\bv}Q
      (\boldsymbol\eta).
\end{equation}
Moreover,
\begin{equation}
\label{eq:pencil-generator-limit}
  \lim_{t\to0}
  \frac{e^{2i\kappa t\Psi_{\bv}}-I_s}{2i\kappa t}
  =\Psi_{\bv}.
\end{equation}
\end{theorem}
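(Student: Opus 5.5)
The plan has three parts: compute the limit directly on the atoms, identify it with the generator through \cref{prop:differential-lie-cocycle-bridge}, and treat the matrix limit \eqref{eq:pencil-generator-limit} as a scalar limit on the spectrum.

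\textbf{The limit.} I work columnwise and apply the basis change $R$ only at the end. For one atom $\varphi_{\bx}$, with $\bomega_\pm=\sqrt{1-t^2}\,\boldsymbol\eta\pm t\bv$, the difference quotient is
\[
 \frac{\varphi_{\bx}(\bomega_+)-\varphi_{\bx}(\bomega_-)}{2it}
 =e^{i\kappa\sqrt{1-t^2}\,\boldsymbol\eta\cdot\bx}\,
  \frac{e^{i\kappa t\bv\cdot\bx}-e^{-i\kappa t\bv\cdot\bx}}{2it}
 =e^{i\kappa\sqrt{1-t^2}\,\boldsymbol\eta\cdot\bx}\,
  \frac{\sin(\kappa t\,\bv\cdot\bx)}{t}.
\]
As $t\to0$ this tends to $\kappa(\bv\cdot\bx)\,\varphi_{\bx}(\boldsymbol\eta)$. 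Stacking the $s$ atoms gives $\Phi_X(\boldsymbol\eta)D_{\bv}(X)$. Multiplying on the right by $R$ and writing $D_{\bv}R=R\,\Psi_{\bv}$ yields $\kappa Q(\boldsymbol\eta)\Psi_{\bv}$.

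Alternatively, one can differentiate $t\mapsto Q(\bomega_\pm(t,\boldsymbol\eta))$ at $t=0$. The derivatives are $\pm\bv\cdot\nabla Q$, because the radial term $\frac{d}{dt}\sqrt{1-t^2}$ vanishes at $t=0$. Since $\bv\in T_{\boldsymbol\eta}\Sph^2$, \eqref{eq:local-tangent-pencil} then gives the same limit. I prefer the explicit scalar computation because it avoids any question of extending $Q$ off the sphere.

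\textbf{Identification with the generator.} Apply \eqref{eq:tangent-as-angular-generator} at the base point $\boldsymbol\eta$ with $\boldsymbol\tau=\bv$. The relevant axis is $\ba=\boldsymbol\eta\times\bv$, and $(\boldsymbol\eta\times\bv)\times\boldsymbol\eta=\bv$, so $\bv\cdot\nabla_{\Sph^2}Q(\boldsymbol\eta)=i\mathcal L_{\boldsymbol\eta\times\bv}Q(\boldsymbol\eta)$. Combining this with \eqref{eq:local-tangent-pencil} gives $i\kappa Q\Psi_{\bv}=i\mathcal L_{\boldsymbol\eta\times\bv}Q$. Dividing by $i$ gives the second equality in \eqref{eq:great-circle-limit}.

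Equivalently, \eqref{eq:fixed-axis-variable-pencil} with the fixed axis $\ba=\boldsymbol\eta\times\bv$, evaluated at $\bomega=\boldsymbol\eta$, gives $\kappa Q(\boldsymbol\eta)\Psi_{\ba\times\boldsymbol\eta}=\kappa Q(\boldsymbol\eta)\Psi_{\bv}$. The one point needing care is that the axis is frozen at the base point $\boldsymbol\eta$ before the operator $\mathcal L$ is evaluated. This is the sign and orientation bookkeeping, and I expect it to be the main (mild) obstacle. I would confirm the sign against \eqref{eq:paired-rotations} and \eqref{eq:forward-orbit-sign}: rotation by $+\alpha$ about $\ba_{\boldsymbol\eta}$ moves $\boldsymbol\eta$ toward $+\bv$, consistent with $\ba_{\boldsymbol\eta}\times\boldsymbol\eta=\bv$.

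\textbf{The matrix limit.} Diagonalize $\Psi_{\bv}=R^{-1}D_{\bv}R$. Then
\[
 \frac{e^{2i\kappa t\Psi_{\bv}}-I_s}{2i\kappa t}
 =R^{-1}\diag\!\left(\frac{e^{2i\kappa t\,\bv\cdot\bx_j}-1}{2i\kappa t}\right)R,
\]
and each diagonal entry tends to $\bv\cdot\bx_j$ as $t\to0$. Hence the matrix tends to $R^{-1}D_{\bv}R=\Psi_{\bv}$. This is also the $t$-derivative statement in \eqref{eq:Phi-group-law}, and it is consistent with $Q(\boldsymbol\eta)$ multiplying this limit via \eqref{eq:paired-circle-pencil} and \eqref{eq:any-two-points}.
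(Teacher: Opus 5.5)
Your proposal is correct and is essentially the paper's proof. The paper reads off the limit from the velocities $\pm\bv$ of the paired curves at $t=0$, which is your ``alternative'' route; it then invokes \cref{prop:differential-lie-cocycle-bridge} for both equalities and obtains \eqref{eq:pencil-generator-limit} by differentiating the exponential, and your explicit atomwise computation and diagonalization through $R$ are concrete versions of those same steps.
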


\begin{proof}
The curves in \eqref{eq:paired-points} have velocities
\(\pm\bv\) at \(t=0\).  Since \(\bv\in T_{\boldsymbol\eta}\Sph^2\), the
symmetric difference converges to
\(i^{-1}\bv\cdot\nabla_{\Sph^2}Q\).  Apply
\cref{prop:differential-lie-cocycle-bridge}.  The matrix limit follows by differentiating the
exponential.
\end{proof}

Collecting \eqref{eq:great-circle-limit} over all \(\bv\) and all
base points on \(E_{\bv}\) gives the local tangent identities on
\(\mathfrak F\). Expressing the tangent directions through the
fields \(K_i\) then yields the whole-sphere system
\eqref{eq:main-display}. In this sense, the paired-circle
differential identities are equatorial restrictions of the
whole-sphere generator identity.

\section{Guarded harmonic realization}
%\label{sec:harmonics}
%\section{Finite realization and identifiability}
\label{sec:finite-realization}

The exact atoms are smooth, but an empirical signal subspace requires
additional regularity to support differentiation or restriction.
Angular differentiation is unbounded on $L^2(\Sph^2)$, and restriction
to a circle is not defined for a general $L^2$ equivalence class.
A small global subspace error therefore does not control either
operation. A small high-frequency component can have a large
derivative, while a perturbation concentrated near a circle can have
a large trace. Neither finite signal rank nor smoothness without
a quantitative bound removes these obstructions.

A fixed harmonic cutoff makes both operations bounded. The resulting
constants depend on the cutoff: differentiation introduces the angular
bandwidth relative to $\kappa$, and restriction depends on the possible
concentration near the selected circle. Increasing the cutoff can
reduce representation error while admitting more sensitive noise
components. Alternatively, one can impose Sobolev bounds:
$H^1$ controls angular differentiation, and $H^\sigma$ with
$\sigma>1/2$ controls restriction to a fixed smooth circle.

For the generator method, projecting onto a finite harmonic space
before differentiation makes the derivative bounded. Retaining one
additional harmonic degree also preserves the projected generator
identity for exact atoms. Bandlimiting makes a circle restriction
bounded as well, but the restricted finite expansion need not satisfy
the original plane-wave shift identity exactly. Directly acquired
circle data can instead be assessed in the circle norm. In either
case, the linear system and the common eigenvalue calculation must be
well conditioned. The circle method also needs a correct phase branch.

Let
\begin{equation}
\label{eq:harmonic-space}
  \cH_L=\operatorname{span}\{Y_\ell^m:
       0\le\ell\le L,\ |m|\le\ell\},
  \qquad \dim\cH_L=(L+1)^2,
\end{equation}
where the spherical harmonics are orthonormal for the normalized measure
\(\dd\bomega/(4\pi)\), and let \(\Pi_L\) be the corresponding orthogonal
projection.  Define
\begin{equation}
\label{eq:truncated-synthesis}
  \Phi_{X,L}=\Pi_L\Phi_X,
  \qquad
  G_{X,L}=\Phi_{X,L}^*\Phi_{X,L}
           =\Phi_X^*\Pi_L\Phi_X.
\end{equation}
Assume that \(\Phi_{X,L}\) has full column rank.  Choose
\(R_L\in GL(s,\C)\) so that
\nopagebreak[4]
\begin{equation}
\label{eq:finite-orthonormal-frame}
  Q_L=\Phi_{X,L}R_L,
  \qquad Q_L^*Q_L=I_s,
  \qquad R_L^*G_{X,L}R_L=I_s.
\end{equation}
Define the finite coordinate matrices by
\begin{equation}
\label{eq:finite-coordinate-matrices}
  \Psi_{r,L}=R_L^{-1}D_r(X)R_L,
  \qquad r=1,2,3.
\end{equation}
We use the stacked and directional notation
\begin{equation}
\label{eq:finite-coordinate-stack}
 \Psi_L=
 \begin{bmatrix}\Psi_{1,L}\\\Psi_{2,L}\\\Psi_{3,L}\end{bmatrix},
 \qquad
 \Psi_{\bgamma,L}=\sum_{r=1}^3\gamma_r\Psi_{r,L},
 \qquad \bgamma\in\R^3.
\end{equation}
The triple still has joint spectrum \(X\).  

Choose an integer retained degree \(1\le K\le L-1\).  The degrees
\(K+1,\ldots,L\) are guard shells.  One guard shell is algebraically
sufficient; extra shells can protect against approximate transforms or
additional filtering.  Let \(C_L\in\C^{(L+1)^2\times s}\) be the harmonic
coefficient matrix of \(Q_L\), so \(C_L^*C_L=I_s\).  Let \(S_{K,L}\)
be the row selector that restricts harmonic coefficients from degrees
\(0,\ldots,L\) to degrees \(0,\ldots,K\), and set
\(C_K=S_{K,L}C_L\).  Define
\begin{equation}
\label{eq:finite-operators}
  L_i^{(K)}=\mathcal L_i|_{\cH_K},
  \qquad
  M_q^{K,L}=\Pi_KM_q|_{\cH_L}.
\end{equation}

\begin{theorem}
\label{thm:guard-shell} 
The finite coefficient matrices exactly satisfy
\begin{equation}
\label{eq:guard-shell-identity}
 Y_K=\mathbb B_{K,L}\Psi_L,
\qquad 1\le K\le L-1,
\end{equation}
where
\begin{equation}
\label{eq:finite-data-design}
\begin{split}
  Y_K&=\frac1\kappa
 \begin{bmatrix}L_1^{(K)}C_K\\L_2^{(K)}C_K\\L_3^{(K)}C_K\end{bmatrix},\\[1mm]
 \mathbb B_{K,L}&=
 \begin{bmatrix}
 0&-M_3^{K,L}C_L&M_2^{K,L}C_L\\
 M_3^{K,L}C_L&0&-M_1^{K,L}C_L\\
 -M_2^{K,L}C_L&M_1^{K,L}C_L&0
 \end{bmatrix}.
\end{split}
\end{equation} 
\end{theorem}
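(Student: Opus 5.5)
The plan is to project the continuous intertwining identity \eqref{eq:Q-intertwining} onto $\cH_K$ and use two degree facts. First, each $\mathcal L_i$ preserves harmonic degree, so it commutes with every $\Pi_\ell$. Second, each multiplier $M_q$ maps degree $\ell$ into degrees $\ell\pm1$; this follows from the recurrences for $\omega_qY_\ell^m$, or equivalently because $\omega_q$ is a degree-one harmonic and the Clebsch--Gordan selection rule applies.

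\textbf{Step 1: write the frame in terms of the atoms.} Put $Q_\infty:=\Phi_XR_L$, the untruncated atoms in the same basis. Then $Q_L=\Pi_LQ_\infty$. Since $R_L$ is invertible, \cref{prop:differential-lie-cocycle-bridge} applied with $R=R_L$ gives
\[
\tfrac1\kappa\mathcal L_iQ_\infty=\sum_{q,r}\epsilon_{iqr}M_qQ_\infty\Psi_{r,L}
\]
exactly, with $\Psi_{r,L}=R_L^{-1}D_r(X)R_L$.

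\textbf{Step 2: apply $\Pi_K$ to both sides.} On the left, $\Pi_K\mathcal L_iQ_\infty=\mathcal L_i\Pi_KQ_\infty$, and $\Pi_KQ_\infty=\Pi_KQ_L$ because $K\le L$. Its coefficient matrix is therefore $L_i^{(K)}S_{K,L}C_L=L_i^{(K)}C_K$. On the right, only degrees $\le K+1$ of $Q_\infty$ contribute to $\Pi_KM_qQ_\infty$, since $M_q$ raises degree by at most one. The hypothesis $K\le L-1$ gives $K+1\le L$, so
\[
\Pi_KM_qQ_\infty=\Pi_KM_q\Pi_{K+1}Q_\infty=\Pi_KM_q\Pi_LQ_\infty=M_q^{K,L}Q_L,
\]
whose coefficient matrix is $M_q^{K,L}C_L$. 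Stacking over $i=1,2,3$ and reading off the Levi-Civita pattern reproduces $\mathbb B_{K,L}$ exactly as in \eqref{eq:BQ-block}, and this yields \eqref{eq:guard-shell-identity}.

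\textbf{Main obstacle.} The only nontrivial point is the exact identity $\Pi_KM_q(I-\Pi_L)=0$ when $L\ge K+1$. I would justify it through its adjoint, $(I-\Pi_L)M_q\Pi_K=0$. This adjoint form holds because $M_q$ is self-adjoint and multiplies a polynomial of degree $\le K$ by a linear function, giving a polynomial of degree $\le K+1\le L$; restricted to the sphere, such a polynomial lies in $\cH_{K+1}\subset\cH_L$. Taking adjoints then gives the needed identity. This is also where the guard shell is indispensable: with $K=L$, the degree-$(L+1)$ content of the atoms would leak into $\Pi_LM_qQ_\infty$, and the identity would fail.
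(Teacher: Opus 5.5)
Your proposal is correct and follows the paper's proof. Like the paper, you project the exact intertwining identity \eqref{eq:Q-intertwining}, taken with right factor $R_L$, onto $\cH_K$, using that each $\mathcal L_i$ commutes with $\Pi_K$ and that $\Pi_KM_q$ sees only degrees $\le K+1\le L$; your adjoint/polynomial-degree argument for $\Pi_KM_q(I-\Pi_L)=0$ just makes explicit the degree-one selection rule the paper invokes.
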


\begin{proof}
Let \(Q^{(L)}=\Phi_XR_L\), whose degree-\(L\) projection is
\(Q_L=\Phi_{X,L}R_L\).  Each \(\mathcal L_i\) preserves harmonic degree, whereas
multiplication by \(\omega_q\), a degree-one spherical harmonic, couples
only degrees \(\ell\) and \(\ell\pm1\).  Hence
\[
  \Pi_K\mathcal L_iQ^{(L)}=\mathcal L_i\Pi_KQ_L,
  \qquad
  \Pi_KM_qQ^{(L)}=\Pi_KM_qQ_L
\]
whenever \(L\ge K+1\).  Projecting the exact continuous identity
\eqref{eq:Q-intertwining}, applied with right factor \(R_L\), to
\(\cH_K\) gives
\eqref{eq:guard-shell-identity}.
\end{proof}

The guard degree is essential to this identity. Multiplication can
map degree \(K+1\) to degree \(K\), so replacing the design by
\(\Pi_KM_q\Pi_KQ\) omits a retained contribution.

In the usual complex spherical-harmonic basis, the sparse
angular-momentum matrices are given by
\begin{align}
  \mathcal L_3Y_\ell^m&=mY_\ell^m,
  \label{eq:L3-harmonics}\\
  \mathcal L_\pm Y_\ell^m
  &=\sqrt{\ell(\ell+1)-m(m\pm1)}\,
    Y_\ell^{m\pm1},
  \label{eq:Lpm-harmonics}
\end{align}
where
\(
\mathcal L_\pm=\mathcal L_1\pm i\mathcal L_2
\).
These are the standard angular-momentum formulas; see
\cite{varshalovich1988} for conventions and coupling identities.
The multiplication matrices are Gaunt, equivalently
Clebsch--Gordan, couplings:
\begin{equation}
\label{eq:Gaunt}
  (M_q^{K,L})_{\ell m,\ell'm'}
  =\int_{\Sph^2}\overline{Y_\ell^m(\bomega)}\,
    \omega_qY_{\ell'}^{m'}(\bomega)\frac{\dd\bomega}{4\pi},
\end{equation}
and vanish unless \(\ell'=\ell\pm1\).

Write \(n_K=(K+1)^2\).  Then
\[
 \mathbb B_{K,L}\in\C^{3n_K\times3s},\qquad
  Y_K\in\C^{3n_K\times s},\qquad
 \Psi_L\in\C^{3s\times s}.
\]
\begin{corollary}
\label{cor:finite-exact-recovery}
Assume that \(\Phi_{X,L}\) has full column rank, \(L\ge K+1\), and
\(\rank\mathbb B_{K,L}=3s\).  Then
\begin{equation}
\label{eq:finite-exact-solve}
 \mathbb B_{K,L}^{\dagger} Y_K
 =\begin{bmatrix}\Psi_{1,L}\\\Psi_{2,L}\\\Psi_{3,L}\end{bmatrix},
 \qquad \Psi_{r,L}=R_L^{-1}D_r(X)R_L.
\end{equation}
%The joint spectral procedure in \cref{sec:joint-spectrum} recovers the point cloud from these three matrices.
\end{corollary}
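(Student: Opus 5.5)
The plan is to combine the exact finite identity of \cref{thm:guard-shell} with the elementary fact that a full-column-rank matrix has its pseudoinverse as a left inverse. Once \(Y_K=\mathbb B_{K,L}\Psi_L\) holds exactly and \(\mathbb B_{K,L}\) is injective, applying \(\mathbb B_{K,L}^\dagger\) returns \(\Psi_L\). So the work consists of checking that the hypotheses of \cref{thm:guard-shell} are met and then recording the left-inverse computation.

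First, full column rank of \(\Phi_{X,L}\) makes \(G_{X,L}=\Phi_{X,L}^*\Phi_{X,L}\) positive definite. Hence an \(R_L\in GL(s,\C)\) satisfying \eqref{eq:finite-orthonormal-frame} exists, for example \(R_L=G_{X,L}^{-1/2}\) or any Cholesky-type factor. This makes \(Q_L\), \(C_L\) and the matrices \(\Psi_{r,L}=R_L^{-1}D_r(X)R_L\) well defined. Second, since \(L\ge K+1\), \cref{thm:guard-shell} applies and gives \(Y_K=\mathbb B_{K,L}\Psi_L\) exactly, with no truncation residual.

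Third, the assumption \(\rank\mathbb B_{K,L}=3s\) means that \(\mathbb B_{K,L}\in\C^{3n_K\times3s}\) has full column rank. Then \(\mathbb B_{K,L}^\dagger=(\mathbb B_{K,L}^*\mathbb B_{K,L})^{-1}\mathbb B_{K,L}^*\) and \(\mathbb B_{K,L}^\dagger\mathbb B_{K,L}=I_{3s}\). The design acts on the stacked unknown \(Z\in\C^{3s\times s}\) column by column, so this left inverse applies to each of the \(s\) columns. It follows that \(\mathbb B_{K,L}^\dagger Y_K=\mathbb B_{K,L}^\dagger\mathbb B_{K,L}\Psi_L=\Psi_L\), which is \eqref{eq:finite-exact-solve}. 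Equivalently, \(\Psi_L\) is the unique zero-residual least-squares solution, as in the proof of \cref{thm:exact-whole-sphere}.

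I do not expect a genuine obstacle. The substantive content, exactness of the projected identity, is already supplied by \cref{thm:guard-shell}, and the rank condition is assumed rather than proved here; it is the subject of the later generic-rank result. The only point to handle carefully is that \(\mathbb B_{K,L}\) is built from the same \(C_L\) (hence the same \(R_L\)) that defines \(\Psi_L\), so that the identity and its inversion refer to a single basis.
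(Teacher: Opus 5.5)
Your proposal is correct and follows the paper's argument: full column rank gives \(\mathbb B_{K,L}^\dagger\mathbb B_{K,L}=I_{3s}\), and applying this left inverse to the exact guard-shell identity \(Y_K=\mathbb B_{K,L}\Psi_L\) of \cref{thm:guard-shell} yields \(\Psi_L\). The second equation, \(\Psi_{r,L}=R_L^{-1}D_r(X)R_L\), is just the definition, and your remark that an \(R_L\) exists because \(G_{X,L}\succ0\) makes explicit a detail the paper leaves implicit.
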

\begin{proof}
Apply the left inverse \(\mathbb B_{K,L}^{\dagger}\) to
\eqref{eq:guard-shell-identity}, then use simultaneous similarity to
the coordinate diagonals.
\end{proof}

The row count \((K+1)^2\ge s\) is necessary for full block rank.
A generic sufficient cutoff is given in
Section~\ref{sec:finite-rank-sufficiency}.

For an arbitrary coefficient matrix \(C\in\C^{(L+1)^2\times s}\), define
\(\mathbb B_{K,L}[C]\) and \( Y_K[C]\) by
\eqref{eq:finite-data-design}, replacing \(C_L\) with \(C\) and
\(C_K\) with \(S_{K,L}C\). 
\begin{theorem}
\label{thm:GL-block-equivariance}
Suppose \(\mathbb B_{K,L}[C]\) has full column rank, and let
\[
 \widehat Z(C)=\mathbb B_{K,L}[C]^{\dagger} Y_K[C]
 =\argmin_Z\| Y_K[C]-\mathbb B_{K,L}[C]Z\|_{\rm F}^2.
\]
For every \(S\in GL(s,\C)\),
\begin{equation}
\label{eq:GL-block-equivariance}
 \widehat Z_r(CS)=S^{-1}\widehat Z_r(C)S,
 \qquad r=1,2,3.
\end{equation}
Consequently the pencil spectra are independent of the chosen basis of the empirical subspace.
\end{theorem}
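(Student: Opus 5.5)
The plan is to check how the design and the data transform under the right multiplication $C\mapsto CS$, and then use uniqueness of the least-squares solution. First, from \eqref{eq:finite-data-design} with $C$ replaced by $CS$: each block of $\mathbb B_{K,L}[CS]$ has the form $\pm M_q^{K,L}CS$, and $S_{K,L}(CS)=(S_{K,L}C)S$. Hence
\[
 Y_K[CS]=Y_K[C]S,\qquad \mathbb B_{K,L}[CS]Z=\mathbb B_{K,L}[C]\,(SZ_1,SZ_2,SZ_3),
\]
where the second identity uses that the $(r,t)$ block of $\mathbb B_{K,L}[CS]$ multiplies $Z_t$ and equals the corresponding block of $\mathbb B_{K,L}[C]$ followed by $S$. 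Writing $\mathcal S Z=(SZ_1,SZ_2,SZ_3)$, we therefore have $\mathbb B_{K,L}[CS]=\mathbb B_{K,L}[C]\,\mathcal S$ as linear maps on $(\C^{s\times s})^3$. Because $\mathcal S$ is invertible, $\mathbb B_{K,L}[CS]$ also has full column rank, so $\widehat Z(CS)$ is well defined.

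Second, I would substitute the trial triple $Z=\mathcal S^{-1}\widehat Z(C)S$, that is, $Z_r=S^{-1}\widehat Z_r(C)S$, into the residual for $CS$. This gives
\[
 Y_K[C]S-\mathbb B_{K,L}[C]\widehat Z(C)S=\bigl(Y_K[C]-\mathbb B_{K,L}[C]\widehat Z(C)\bigr)S .
\]
Least-squares optimality of $\widehat Z(C)$ is characterized by the normal equations $\mathbb B_{K,L}[C]^*(Y_K[C]-\mathbb B_{K,L}[C]\widehat Z(C))=0$. This characterization is needed because the Frobenius norm is not invariant under a nonunitary $S$, so comparing norms directly does not work. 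The normal equations for $CS$ read $\mathcal S^*\mathbb B_{K,L}[C]^*(\text{residual})S=0$. These hold because the inner residual factor already vanishes after applying $\mathbb B_{K,L}[C]^*$, and the operator $\mathcal S^*$ acts blockwise on the left while $S$ acts on the right.

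Third, full column rank gives uniqueness of the normal-equation solution. That yields \eqref{eq:GL-block-equivariance}. The pencil statement then follows because any combination $\sum_r\gamma_r\widehat Z_r$ transforms by the same similarity, so its spectrum and the joint spectrum are unchanged.

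The main obstacle is keeping the operator bookkeeping straight. One has to separate the left action of $S$ on the unknown blocks (through $\mathcal S$) from the right action on the output columns, and avoid arguing by norm invariance.
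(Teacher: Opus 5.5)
Your proposal is correct and follows the paper's proof. Both arguments rest on the factorizations $\mathbb B_{K,L}[CS]=\mathbb B_{K,L}[C]\,\diag(S,S,S)$ and $Y_K[CS]=Y_K[C]S$; the only difference is that you verify the normal equations for the candidate $\diag(S,S,S)^{-1}\widehat Z(C)S$ and invoke uniqueness, whereas the paper applies the equivalent full-column-rank identity $(\mathbb B_{K,L}[C]\mathscr S)^{\dagger}=\mathscr S^{-1}\mathbb B_{K,L}[C]^{\dagger}$ directly.
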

\begin{proof}
Put \(\mathscr S=\diag(S,S,S)\).  The blocks satisfy
\[
 \mathbb B_{K,L}[CS]=\mathbb B_{K,L}[C]\mathscr S,
 \qquad  Y_K[CS]= Y_K[C]S.
\]
Since the design has full column rank,
\[
 (\mathbb B_{K,L}[C]\mathscr S)^{\dagger}
 =\mathscr S^{-1}\mathbb B_{K,L}[C]^{\dagger}.
\]
Hence \(\widehat Z(CS)=\mathscr S^{-1}\widehat Z(C)S\), which is
\eqref{eq:GL-block-equivariance}.  Under this similarity a right
eigenvector transforms by \(S^{-1}\) and a left eigenvector by \(S^*\),
so the coordinate quotients are unchanged.
\end{proof}

\subsection{Finite block identifiability and the generic cutoff}
\label{sec:finite-rank-sufficiency}

Let $n_K=(K+1)^2$. The guarded design has $3n_K$ rows and $3s$
columns, so full column rank requires
\begin{equation}
\label{eq:finite-block-necessary-count}
 n_K\ge s
\end{equation}
for $\rank\mathbb B_{K,L}=3s$. This count is not sufficient:
the cross-product structure may still produce a nontrivial kernel.
The two-multiplier criterion below is stronger. It requires
$\rank[A_1\ A_2]=2s$, and hence
\begin{equation}
\label{eq:finite-pair-necessary-count}
 n_K\ge2s.
\end{equation}
This count alone is also insufficient, since the multiplier blocks
may be singular or have intersecting ranges.
Theorem~\ref{thm:rank-generic-cutoff} supplies a generic sufficient
condition: $K\ge s$ implies full pair rank on a dense open set
with a measure-zero complement.

Thus $n_K\ge s$ is necessary for block rank, $n_K\ge2s$ is necessary
for the sufficient pair criterion, and $K\ge s$ guarantees that
criterion generically. The sharp cutoff may depend on $\kappa$,
the configuration, and the multiplier pair. Numerical conditioning
must be assessed from $\sigma_{\min}(\mathbb B_{K,L})$ or from
\eqref{eq:rank-pair-condition} and \eqref{eq:rank-angle-sufficient},
not from the dimension count alone.

%\subsubsection{A sufficient condition involving two multipliers}

Let \(L\ge K+1\), and let
\(C_L\in\C^{(L+1)^2\times s}\) be the harmonic coefficient matrix of a
signal basis.  Define
\begin{equation}
\label{eq:rank-multiplier-blocks}
 A_q=M_q^{K,L}C_L\in\C^{n_K\times s},\qquad q=1,2,3.
\end{equation}
The guarded design for one output column is
\begin{equation}
\label{eq:rank-guarded-design}
 \mathbb B_{K,L}=
 \begin{bmatrix}
  0&-A_3&A_2\\
  A_3&0&-A_1\\
  -A_2&A_1&0
 \end{bmatrix}\in\C^{3n_K\times3s}.
\end{equation}

\begin{proposition}
\label{prop:rank-two-multipliers}
If
\begin{equation}
\label{eq:rank-pair-condition}
 \rank[A_1\ A_2]=2s,
\end{equation}
then \(\rank\mathbb B_{K,L}=3s\).  The same statement holds with any
other pair of coordinate-multiplication blocks.
\end{proposition}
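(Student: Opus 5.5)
Suppose $\mathbb B_{K,L}Z=0$ with $Z=(Z_1,Z_2,Z_3)$. It suffices to treat a single output column, since the design acts columnwise. So take $\bz=(\bz_1,\bz_2,\bz_3)\in(\C^s)^3$ in the kernel of \eqref{eq:rank-guarded-design}. The three block rows read
\[
 -A_3\bz_2+A_2\bz_3=0,\qquad
 A_3\bz_1-A_1\bz_3=0,\qquad
 -A_2\bz_1+A_1\bz_2=0.
\]
The plan uses only the third row together with hypothesis \eqref{eq:rank-pair-condition}. The third row says $[A_1\ A_2]\,(\bz_2,-\bz_1)^T=0$. Full column rank $2s$ of $[A_1\ A_2]$ then forces $\bz_1=\bz_2=0$.

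With $\bz_1=\bz_2=0$, the first two rows reduce to $A_2\bz_3=0$ and $A_1\bz_3=0$. Since $[A_1\ A_2]$ has rank $2s$, each block separately has full column rank $s$, because its columns form a subset of linearly independent columns. Hence $A_1\bz_3=0$ gives $\bz_3=0$. Therefore the kernel is trivial and $\rank\mathbb B_{K,L}=3s$.

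For another pair, say $(A_q,A_p)$, the same argument uses the block row of \eqref{eq:rank-guarded-design} that involves only $A_q$ and $A_p$. This is the row indexed by the third coordinate $t\ne p,q$. It gives $\pm(A_q\bz_p-A_p\bz_q)=0$, so $\bz_p=\bz_q=0$. Either of the remaining rows then contains $A_p\bz_t$ or $A_q\bz_t$ alone, which forces $\bz_t=0$. This follows from the cyclic symmetry of the cross-product (Levi-Civita) structure.

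There is no real obstacle. The only point needing care is checking the signs and indices in the cyclic relabeling, which does not affect the rank conclusion.
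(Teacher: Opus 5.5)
Your proof is correct and is essentially the paper's argument: use the third block row $[A_1\ A_2](\bz_2,-\bz_1)^T=0$ to force $\bz_1=\bz_2=0$, then use that the first two rows reduce to $A_1\bz_3=A_2\bz_3=0$ with each $A_q$ of full column rank to get $\bz_3=0$. The paper handles the other coordinate pairs by the same index permutation, and your check of the signs in that relabeling is accurate.
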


\begin{proof}
Suppose \(\mathbb B_{K,L}(z_1,z_2,z_3)^T=0\), where
\(z_r\in\C^s\).  The third block equation is
\[
 A_1z_2-A_2z_1
 =[A_1\ A_2]\begin{bmatrix}z_2\\-z_1\end{bmatrix}=0.
\]
Condition \eqref{eq:rank-pair-condition} gives \(z_1=z_2=0\).
The remaining equations imply \(A_1z_3=A_2z_3=0\).  Each \(A_q\),
\(q=1,2\), has full column rank under \eqref{eq:rank-pair-condition},
so \(z_3=0\).  Thus the kernel is trivial.  Permuting the coordinate
indices gives the other cases.
\end{proof}

Condition \eqref{eq:rank-pair-condition} is equivalent to injectivity of
both \(A_1\) and \(A_2\), together with
\begin{equation}
\label{eq:rank-range-intersection}
 \Range A_1\cap\Range A_2=\{0\}.
\end{equation}
It requires \(n_K\ge2s\), but that stronger dimension count alone is
still insufficient.  The condition is sufficient, not necessary, for
full rank of the three-coordinate block.  It is invariant under an
invertible change of signal basis, because replacing \(C_L\) by
\(C_LS\) right-multiplies \([A_1\ A_2]\) by \(\diag(S,S)\).

\begin{corollary}
\label{cor:rank-normalized-gram}
Suppose
\[
 G_1=A_1^*A_1\succ0,\qquad G_2=A_2^*A_2\succ0,
\]
and put
\begin{equation}
\label{eq:rank-angle-certificate}
 E=G_1^{-1/2}A_1^*A_2G_2^{-1/2},\qquad \rho=\|E\|_2.
\end{equation}
Then
\begin{equation}
\label{eq:rank-pair-singular-bound}
 \sigma_{\min}([A_1\ A_2])^2
 \ge(1-\rho)\min\{\lambda_{\min}(G_1),
                         \lambda_{\min}(G_2)\}
\end{equation}
and hence by \cref{prop:rank-two-multipliers} 
\begin{equation}
\label{eq:rank-angle-sufficient}
 \rho<1\quad\Longrightarrow\quad\rank\mathbb B_{K,L}=3s.
\end{equation}

\end{corollary}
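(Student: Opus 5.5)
The plan is to factor $[A_1\ A_2]$ through a block-diagonal normalization and reduce the singular-value bound to the spectrum of a $2\times2$ block matrix with identity diagonal blocks. Put
\[
 N=\diag(G_1^{1/2},G_2^{1/2}),\qquad W=[A_1G_1^{-1/2}\ \ A_2G_2^{-1/2}],
\]
so that $[A_1\ A_2]=WN$. The columns $W_1=A_1G_1^{-1/2}$ and $W_2=A_2G_2^{-1/2}$ are isometries, since $W_q^*W_q=G_q^{-1/2}G_qG_q^{-1/2}=I_s$. The Gram matrix of $W$ is therefore
\[
 W^*W=\begin{bmatrix}I_s&E\\E^*&I_s\end{bmatrix}.
\]
Its eigenvalues are $1\pm\sigma_k(E)$, together with $1$ if needed. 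This follows from the standard dilation argument: the Hermitian matrix $\begin{bmatrix}0&E\\E^*&0\end{bmatrix}$ has spectrum $\{\pm\sigma_k(E)\}$. Hence
\[
 \lambda_{\min}(W^*W)=1-\rho,\qquad \sigma_{\min}(W)^2=1-\rho.
\]

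Next, combine the two factors. For any unit vector $\bz=(z_1,z_2)$,
\[
 \|[A_1\ A_2]\bz\|^2=\|WN\bz\|^2\ge(1-\rho)\|N\bz\|^2,
\]
and
\[
 \|N\bz\|^2=z_1^*G_1z_1+z_2^*G_2z_2\ge\min\{\lambda_{\min}(G_1),\lambda_{\min}(G_2)\}.
\]
Taking the infimum over unit $\bz$ gives \eqref{eq:rank-pair-singular-bound}. When $\rho\ge1$ the bound is trivially true, since the right-hand side is nonpositive, so no case split is really needed.

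For the implication \eqref{eq:rank-angle-sufficient}, suppose $\rho<1$. The hypotheses $G_q\succ0$ make the minimum of $\lambda_{\min}(G_q)$ positive, so $\sigma_{\min}([A_1\ A_2])>0$. This means $\rank[A_1\ A_2]=2s$, and \cref{prop:rank-two-multipliers} then gives $\rank\mathbb B_{K,L}=3s$.

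The only step needing care is the eigenvalue identification for $W^*W$, and the dilation fact handles it directly. I do not expect a genuine obstacle. One could also note that $\rho$ is the cosine of the largest canonical angle between $\Range A_1$ and $\Range A_2$, which explains the name "angle certificate."
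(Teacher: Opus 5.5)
Your proof is correct and follows the paper's argument essentially verbatim. Both use the normalized factorization $[A_1\ A_2]=U\,\diag(G_1^{1/2},G_2^{1/2})$ and the Gram bound $U^*U\succeq(1-\rho)I_{2s}$; your dilation argument only spells out the eigenvalue step the paper leaves implicit. One small correction to your closing aside: $\rho=\|E\|_2$ is the cosine of the \emph{smallest} principal angle between $\Range A_1$ and $\Range A_2$, not the largest.
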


\begin{proof}
The normalized concatenation
\[
 U=[A_1G_1^{-1/2}\ A_2G_2^{-1/2}]
\]
has Gram matrix
\[
 U^*U=\begin{bmatrix}I_s&E\\E^*&I_s\end{bmatrix}
 \succeq(1-\rho)I_{2s}.
\]
Since \([A_1\ A_2]=U\diag(G_1^{1/2},G_2^{1/2})\), this proves
\eqref{eq:rank-pair-singular-bound}.  If \(\rho<1\), the pair has full
column rank, and \cref{prop:rank-two-multipliers} applies.
\end{proof}

For a point configuration \(X=(\bx_1,\ldots,\bx_s)\), define the raw
atom multiplier maps
\begin{equation}
\label{eq:rank-raw-multipliers}
 N_{q,K}(X)=\Pi_KM_q\Phi_X:\C^s\longrightarrow\cH_K.
 \qquad q=1,2,3,
\end{equation}
We identify these maps with their harmonic coefficient matrices when
discussing matrix rank.  Multiplication raises harmonic degree by at most
one, so
\begin{equation}
\label{eq:rank-raw-guard}
 N_{q,K}(X)=\Pi_KM_q\Pi_L\Phi_X,
 \qquad L\ge K+1.
\end{equation}
If \(Q_L=\Pi_L\Phi_XR_L\), with \(R_L\) invertible, then
\begin{equation}
\label{eq:rank-basis-factorization}
 [A_1\ A_2]
 =[N_{1,K}(X)\ N_{2,K}(X)]\diag(R_L,R_L).
\end{equation}
Consequently the pair-rank condition is independent of the chosen signal
basis and, for the exact model, of \(L\) once \(L\ge K+1\).

There is no need to assume scalar full rank before checking the raw
condition.  Indeed, if \([N_{1,K}(X)\ N_{2,K}(X)]\) has rank \(2s\)
and \(\Pi_L\Phi_X\bc=0\), then \eqref{eq:rank-raw-guard} gives
\(N_{1,K}(X)\bc=0\).  Injectivity of \(N_{1,K}(X)\) implies
\(\bc=0\).  Thus the raw pair condition itself proves that
\(\Phi_{X,L}=\Pi_L\Phi_X\) has full column rank and that an invertible
orthonormalizing factor \(R_L\) exists.

%\subsubsection{A generic sufficient cutoff}

The following theorem gives an a priori sufficient degree for generic
configurations. 
\begin{theorem}
\label{thm:rank-generic-cutoff}
Fix \(\kappa>0\), an integer \(s\ge1\), and \(K\ge s\).
In the ordered configuration space
\[
 \Omega_s=\{(\bx_1,\ldots,\bx_s)\in B_1(0)^s:
                   \bx_j\ne\bx_k\text{ for }j\ne k\},
\]
there is a dense open subset whose complement has Lebesgue measure zero
on which
\begin{equation}
\label{eq:rank-generic-pair}
 \rank[N_{1,K}(X)\ N_{2,K}(X)]=2s.
\end{equation}
For every configuration in this subset and every \(L\ge K+1\), the
truncated scalar synthesis has rank \(s\), and the guarded ESPRIT block
formed from any basis of its range has rank \(3s\).
\end{theorem}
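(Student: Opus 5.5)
The plan is the standard real-analytic genericity argument: the rank condition is an analytic condition, so it suffices to exhibit one configuration where it holds.

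\textbf{Analytic setup.} The entries of \(N_{q,K}(X)\) are
\[
 \int\overline{Y_\ell^m}\,\omega_q e^{i\kappa\bomega\cdot\bx_j}\,\frac{d\bomega}{4\pi}.
\]
By the plane-wave expansion, each entry is a finite combination of \(j_{\ell'}(\kappa|\bx_j|)\,Y_{\ell'}^{m'}(\hat\bx_j)\) with \(\ell'=\ell\pm1\). It is therefore a real-analytic (indeed entire) function of \(\bx_j\in\R^3\). Consequently the sum \(F(X)\) of squared moduli of all \(2s\times2s\) minors of \([N_{1,K}(X)\ N_{2,K}(X)]\) is real-analytic on the connected open set \(\Omega_s\). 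The set \(\{F\ne0\}\) is open. If \(F\not\equiv0\), its zero set is a proper analytic subset, hence has measure zero and empty interior. So the complement is dense, which gives the required dense open set of full measure.

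\textbf{Exhibiting a good configuration.} The main obstacle is to produce one \(X\) (or a limiting family) with rank \(2s\) when \(K\ge s\). I would try a degeneration to small points. Put \(\bx_j=\eps\by_j\) and let \(\eps\to0\). Expanding \(e^{i\kappa\bomega\cdot\bx}=\sum_n (i\kappa\eps)^n(\bomega\cdot\by)^n/n!\), column \(j\) of \(N_{q,K}\) becomes \(\Pi_K\omega_q\sum_n(i\kappa\eps)^n(\bomega\cdot\by_j)^n/n!\). Since \(K\ge s\), the polynomials \(\omega_q(\bomega\cdot\by)^n\) for \(n\le s-1\) have degree at most \(s\le K\), so they are unaffected by \(\Pi_K\).

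A confluent Vandermonde argument then applies. Take linear combinations of columns (divided differences in \(\eps\)), which amounts to a column operation by an invertible triangular matrix with \(\eps\)-dependent scaling. With \(\by_j=t_j\be_3\) and distinct \(t_j\), the \(2s\) columns become, to leading order, \(\omega_1\omega_3^n\) and \(\omega_2\omega_3^n\) for \(n=0,\ldots,s-1\). These functions are linearly independent on the sphere: restricting \(\omega_1 p(\omega_3)+\omega_2 r(\omega_3)=0\) to \(\omega_1=\pm\sqrt{1-\omega_3^2}\cos\phi\) and \(\omega_2=\sqrt{1-\omega_3^2}\sin\phi\) forces \(p=r=0\). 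The rescaled minor is therefore nonzero in the limit, so for small \(\eps\) the original minor is nonzero and \(F\not\equiv0\). These small configurations lie in \(B_1(0)\) and have distinct points, so they belong to \(\Omega_s\). The careful point is bookkeeping the \(\eps\)-powers so that the leading determinant term is exactly this independent family. I would do this by writing \([N_1\ N_2]=\mathcal P\,V(\eps)\), where \(\mathcal P\) collects the monomials of degree at most \(s\) and \(V(\eps)\) is a block Vandermonde matrix in the \(\eps t_j\) plus higher-order terms, and then factor out \(\diag(\eps^n)\).

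\textbf{Consequences.} On this set, the paragraph preceding the theorem already shows that injectivity of \(N_{1,K}\) forces \(\Pi_L\Phi_X\) to have rank \(s\) for every \(L\ge K+1\), using \eqref{eq:rank-raw-guard}. The factorization \eqref{eq:rank-basis-factorization} shows that the pair rank transfers to any basis \(C_L\) of the range. \Cref{prop:rank-two-multipliers} then gives \(\rank\mathbb B_{K,L}=3s\).
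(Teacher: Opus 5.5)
Your proposal is correct and follows essentially the same route as the paper. Both arguments use real-analyticity of the minors together with a single collinear small-scale witness $\bx_j=\epsilon t_j\be_3$, whose determinant has a nonvanishing Vandermonde leading term, and then conclude via \eqref{eq:rank-raw-guard}, \eqref{eq:rank-basis-factorization}, and \cref{prop:rank-two-multipliers}. The difference is only in bookkeeping. The paper tests against the fixed harmonics $\omega_a P_\ell'(\omega_3)$, $1\le\ell\le s$, to obtain the block-diagonal matrix $\diag(D_sT_s,D_sT_s)$ with entries $j_\ell(t_j)/t_j$, and then uses their small-argument asymptotics. You instead Taylor-expand the exponential and use linear independence of $\omega_a\omega_3^n$, which is the same computation expressed in a different basis.
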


The proof is given in \cref{app:generic-cutoff}.

%\subsubsection{Exceptional configurations and numerical interpretation}

The generic cutoff does not exclude exceptional configurations.
Consider \(s=1\), \(K=1\), and a positive zero \(t\)
of \(j_1\).  Choose \(\kappa>t\), so that
\(\bx=(t/\kappa)\be_3\in B_1(0)\), and write
\(f(\bomega)=e^{it\omega_3}\).  Direct projection onto degrees zero
and one gives
\begin{align}
 \Pi_1(\omega_1f)&=3\frac{j_1(t)}t\omega_1=0,
 \label{eq:rank-resonance-first}\\
 \Pi_1(\omega_2f)&=3\frac{j_1(t)}t\omega_2=0,
 \label{eq:rank-resonance-second}\\
 \Pi_1(\omega_3f)
 &=ij_1(t)+3\left(j_0(t)-\frac{2j_1(t)}t\right)\omega_3
 =3j_0(t)\omega_3\ne0.
 \label{eq:rank-resonance-third}
\end{align}
Indeed, \(j_1(t)=(\sin t-t\cos t)/t^2=0\) implies
\(j_0(t)=\cos t\ne0\).  The scalar synthesis through any degree
\(L\ge2\) consequently has rank one.  Nevertheless, its normalized
multiplier blocks satisfy \(A_1=A_2=0\) and \(A_3\ne0\), so
\begin{equation}
\label{eq:rank-resonance-block}
 \rank\begin{bmatrix}0&-A_3&0\\A_3&0&0\\0&0&0\end{bmatrix}=2<3.
\end{equation}
Thus even \(K\ge s\) and scalar full rank do not guarantee guarded-block
rank for every configuration and frequency.

The finite coefficient matrix can be tested by
\eqref{eq:rank-pair-condition} or \eqref{eq:rank-angle-sufficient}.
For exact Fourier atoms, Theorem~\ref{thm:rank-generic-cutoff}
gives a sufficient cutoff outside an exceptional set.
Configurations near that set or near node collisions may nevertheless
be poorly conditioned.

A perturbation estimate requires alignment of the empirical frame
with an exact orthonormal frame $Q$. Such a change of basis preserves
the exact joint spectrum. Choose a Procrustes factor
\begin{equation}
\label{eq:frame-Procrustes-alignment}
 U_\star\in\argmin_{U\in\mathrm \cU(s)}
 \|\widehat Q-QU\|_{L_{\rm F}^2},
\end{equation}
where \(\cU(s)\) is the unitary group of degree \(s\).

The following bound compares the exact and empirical coordinate
stacks after their degree-$L$ frames have been aligned. The matrix
$\Psi_L$ is expressed in that aligned exact basis.
\begin{proposition}
\label{prop:finite-perturbation}
Let $C_L$ be the coefficient matrix of an exact frame satisfying
\[
Y_K[C_L]=\mathbb B_{K,L}[C_L]\Psi_L.
\qquad L\ge K+1,
\]
Let $\widehat C_L=C_L+E$ be the aligned empirical coefficient matrix.
If $$\|E\|_2<\sigma_{\min}(\mathbb B_{K,L}[C_L])$$ then
$\mathbb B_{K,L}[\widehat C_L]$ has full column rank, and
\[
\widehat\Psi_L
:=
\mathbb B_{K,L}[\widehat C_L]^\dagger
Y_K[\widehat C_L]
\]
satisfies
\[
\|\widehat\Psi_L-\Psi_L\|_{\mathrm F}
\le
\frac{
\kappa^{-1}\sqrt{K(K+1)}\,\|E\|_{\rm F}
+\|E\|_2\|\Psi_L\|_{\rm F}
}{
\sigma_{\min}(\mathbb B_{K,L}[C_L])-\|E\|_2
}.
\]
\end{proposition}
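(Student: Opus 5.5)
The plan is the standard perturbation argument for a full-column-rank least-squares problem, applied to the guarded identity of \cref{thm:guard-shell}. The key structural fact is that both $\mathbb B_{K,L}[C]$ and $Y_K[C]$ are \emph{linear} in the coefficient matrix $C$. Hence
\[
\mathbb B_{K,L}[\widehat C_L]=\mathbb B_{K,L}[C_L]+\mathbb B_{K,L}[E],
\qquad
Y_K[\widehat C_L]=Y_K[C_L]+Y_K[E].
\]
Write $\mathbb B=\mathbb B_{K,L}[C_L]$ and $\widehat{\mathbb B}=\mathbb B_{K,L}[\widehat C_L]$. Two operator-norm estimates are needed:
\[
\|\mathbb B_{K,L}[E]\|_2\le\|E\|_2,
\qquad
\|Y_K[E]\|_{\rm F}\le\kappa^{-1}\sqrt{K(K+1)}\,\|E\|_{\rm F}.
\]

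For the first estimate, I would interpret $\mathbb B_{K,L}[E]$ geometrically as in \eqref{eq:qZ}. Identify $E$ with a field $F$ in $\cH_L$, and a trial vector $z=(z_1,z_2,z_3)$ with the vector field $\sum_r\be_rF z_r$. Then $\mathbb B_{K,L}[E]z$ is the harmonic coefficient vector of
\[
\Pi_K\Bigl(\bomega\times\sum_r\be_rF(\bomega)z_r\Bigr).
\]
The projection $\Pi_K$ is a contraction, and the pointwise bound $|\bomega\times\bw|\le|\bw|$ makes $\cJ$ a contraction on $L^2(\Sph^2;\C^3)$. Since harmonic coefficients are isometric to $L^2$ for the normalized measure,
\[
\|\mathbb B_{K,L}[E]z\|^2\le\sum_r\|Ez_r\|^2\le\|E\|_2^2\|z\|^2 .
\]
This step is the one to handle carefully, because the naive bound from the nine blocks $M_q^{K,L}E$ would lose a constant. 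The Hodge/contraction viewpoint avoids that loss.

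With this estimate, Weyl's inequality gives
\[
\sigma_{\min}(\widehat{\mathbb B})\ge\sigma_{\min}(\mathbb B)-\|E\|_2>0 .
\]
So $\widehat{\mathbb B}$ has full column rank, $\widehat{\mathbb B}^\dagger\widehat{\mathbb B}=I$, and $\|\widehat{\mathbb B}^\dagger\|_2\le(\sigma_{\min}(\mathbb B)-\|E\|_2)^{-1}$.

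Next comes the residual identity. Since $\widehat{\mathbb B}^\dagger\widehat{\mathbb B}=I$,
\[
\widehat\Psi_L-\Psi_L=\widehat{\mathbb B}^\dagger\bigl(Y_K[\widehat C_L]-\widehat{\mathbb B}\Psi_L\bigr).
\]
Using the exact identity $Y_K[C_L]=\mathbb B\Psi_L$ and linearity, the bracket equals $Y_K[E]-\mathbb B_{K,L}[E]\Psi_L$.

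The second term is bounded column by column with the contraction estimate:
\[
\|\mathbb B_{K,L}[E]\Psi_L\|_{\rm F}\le\|E\|_2\|\Psi_L\|_{\rm F}.
\]
For $Y_K[E]$, the operators $L_i^{(K)}$ act on $\cH_K$, and on degree $\ell$ one has $\sum_i\|\mathcal L_if\|^2=\ell(\ell+1)\|f\|^2$, since $\sum_i\mathcal L_i^2$ is the Casimir. Therefore
\[
\sum_i\|L_i^{(K)}S_{K,L}E\|_{\rm F}^2\le K(K+1)\,\|S_{K,L}E\|_{\rm F}^2\le K(K+1)\,\|E\|_{\rm F}^2 ,
\]
and the factor $1/\kappa$ gives the first term of the numerator. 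Multiplying the sum of the two residual bounds by $\|\widehat{\mathbb B}^\dagger\|_2$ yields the stated estimate.
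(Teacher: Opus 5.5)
Your proposal is correct and follows the paper's proof step for step. Both arguments use the same chain: linearity of $\mathbb B_{K,L}[\cdot]$ and $Y_K[\cdot]$; the contraction bound $\|\mathbb B_{K,L}[E]\|_2\le\|E\|_2$ from pointwise cross multiplication by $\bomega$ followed by harmonic projection; Weyl's inequality; the Casimir bound $\|Y_K[E]\|_{\rm F}\le\kappa^{-1}\sqrt{K(K+1)}\,\|E\|_{\rm F}$; and the residual identity $\widehat\Psi_L-\Psi_L=\mathbb B_{K,L}[\widehat C_L]^\dagger\bigl(Y_K[E]-\mathbb B_{K,L}[E]\Psi_L\bigr)$. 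You simply write out details the paper leaves implicit, namely the identification of $\mathbb B_{K,L}[E]z$ with $\Pi_K\bigl(\bomega\times\sum_r\be_rFz_r\bigr)$ and the degree-by-degree angular-momentum computation.
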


\begin{proof}
Pointwise cross multiplication by $\omega$ and orthogonal
harmonic projection are contractions. Consequently,
\[
\|\mathbb B_{K,L}[E]\|_2\le \|E\|_2. 
\]
The singular-value perturbation inequality therefore gives
\[
\sigma_{\min}(\mathbb B_{K,L}[\widehat C_L])
\ge \sigma_{\min}(\mathbb B_{K,L}[C_L])-\|E\|_2>0.
\]
The angular-momentum identity on harmonics of degree at most $K$
also yields
\[
\|Y_K[E]\|_{\rm F}
\le \kappa^{-1}\sqrt{K(K+1)}\ \|E\|_{\rm F}.
\]
By linearity of the data and design maps,
\[
\widehat\Psi_L-\Psi_L
=
\mathbb B_{K,L}[\widehat C_L]^\dagger
\bigl(Y_K[E]-\mathbb B_{K,L}[E]\Psi_L\bigr).
\]
Taking Frobenius norms proves the estimate.
\end{proof}

At a fixed cutoff, the estimate is locally Lipschitz in the aligned
coefficient frame as long as its perturbation is smaller than the
least singular value of the exact block. The differentiation term contains
$\sqrt{K(K+1)}/\kappa$; the design-perturbation term need not decay
with increasing wavenumber.

\section{Algorithms}
\label{sec:algorithms-perturbations}

Both algorithms use one calibrated empirical signal frame
$\widehat Q$. Such a frame can be obtained from snapshots
\begin{equation}
 f_n(\bomega)=\sum_{j=1}^s a_{jn}\varphi_{\bx_j}(\bomega),
 \qquad n=1,\ldots,N_{\rm snap}.
 \label{eq:snapshots}
\end{equation}
Here the amplitude matrix $A=(a_{jn})$ has row rank $s$. 

For perturbation analysis, replace $Q$ by $QU_\star$, with
$U_\star$ as in \eqref{eq:frame-Procrustes-alignment}, and conjugate
the exact coordinate and shift matrices by $U_\star^*(\cdot)U_\star$.
In this aligned basis, write
\begin{equation}
\label{eq:aligned-frame-error}
 \widehat Q=Q+E.
\end{equation}
If
$\mathcal B_{\widehat Q}$ has full column rank and
$\widehat\Psi=\kappa^{-1}\mathcal B_{\widehat Q}^\dagger
\mathcal L\widehat Q$, then
\[
\widehat\Psi-\Psi
=
\mathcal B_{\widehat Q}^{\dagger}
\left(\kappa^{-1}\mathcal LE-\mathcal B_E\Psi\right).
\]
For a paired-circle relation $Q_+=Q_-F$, write
$\widehat Q_\pm=Q_\pm+E_\pm$. If $\widehat Q_-$ has full column rank,
$\widehat F=\widehat Q_-^\dagger\widehat Q_+$ satisfies
\[
\widehat F-F
=
\widehat Q_-^\dagger(E_+-E_-F).
\]
The generator error involves angular derivatives of $E$, whereas
the finite-shift error involves its paired restrictions. The alignment
is needed to compare coefficient matrices in the same basis.

\subsection{Practical paired-circle algorithm}

\begin{enumerate}[leftmargin=*,label=\textbf{Step \arabic*.}]
\item 
Estimate the signal rank and obtain one basis $\widehat Q$ whose
restrictions can be evaluated consistently at all paired points.

\item 
Use three orthonormal directions or a well-conditioned spanning family.
Choose $0<\eps_0<\min\{\pi/\kappa,2\}$ in the unit-ball model, set
$\bh_{0,r}=\eps_0\bv_r$, and put $q_0=\kappa \eps_0$.

\item 
For every direction compute
$\widehat \Phi_{\eps_0\bv_r}=\widehat Q_{r,-}^{\dagger}\widehat Q_{r,+}, r=1,2,3$.
Test a prescribed finite list of normalized pencil coefficient
vectors. One rule, independent of the unknown targets, is to diagonalize
$P_m=\sum_r\alpha_r^{(m)}\widehat \Phi_{\eps_0\bv_r}
=V_m\Lambda_mV_m^{-1}$, normalize the columns of $V_m$ to unit
Euclidean norm, and maximize
\begin{equation}
\label{eq:common-pencil-score}
 S_m=
 \frac{\min_{j\ne k}|\lambda_j^{(m)}-\lambda_k^{(m)}|}
 {\max\{\|P_m\|_2,\epsilon_{\rm mach}\}\,\cond(V_m)}.
\end{equation}
Reject candidates whose eigenvalue separation is unresolved, and
choose the first maximizing index in a tie. Any alternative rule
must specify the coefficient distribution, normalization, rejection
criterion, and tie-break. Let $V$ be the selected basis and use
the diagonal entries of $V^{-1}\widehat F V$, equivalently the
two-sided estimates.

\item 
At each scalar chord length $0<\eps_m<2$, put $q_m=\kappa \eps_m$ and
$\bh_{m,r}=\eps_m\bv_r$. Compute the paired matrices and read all diagonal
phases in the same basis $V$. For each target and direction, let
$\theta_m$ be the current principal phase and $\widehat t_{m-1}$ the
previous projection estimate. Initialize $\widehat t_0=\theta_0/q_0$
at the baseline shift, and for $m\ge1$ set
\begin{equation}
 n_m=\round\!\left(\frac{q_m\widehat t_{m-1}-\theta_m}{2\pi}\right),
 \qquad \widehat t_m=\frac{\theta_m+2\pi n_m}{q_m}.
 \label{eq:cmp-unwrap}
\end{equation}
Increase shifts gradually and check restriction conditioning at every
stage (see the discussion in \cref{sec:long-shift}). 

\item 
After unwrapping, let $\widehat t_{j,r}$ be the estimated projection
of target $j$ onto direction $\bv_r$. For three orthonormal directions,
set $\widehat\bx_j=\sum_{r=1}^3\widehat t_{j,r}\bv_r$.
For a larger spanning family, find $\widehat\bx_j$ by least squares from
the equations $\bv_r\cdot\widehat\bx_j\approx\widehat t_{j,r}$.

\end{enumerate}

\subsection{Practical whole-sphere generator algorithm}
\label{sec:algorithm}

The input is an estimated $s$-dimensional subspace of $\cH_L$.
If the snapshot coefficient matrix \(A\) in \eqref{eq:snapshots}
has row rank \(s\) and \(\Phi_{X,L}\) has column rank \(s\),
the exact projected snapshots span \(\Range\Phi_{X,L}\).
An SVD of their harmonic coefficients gives an orthonormal basis.
For noisy snapshots, the same procedure gives the empirical coefficient
matrix \(\widetilde C_L\in\C^{(L+1)^2\times s}\).
Any other subspace-estimation or lifting procedure that supplies
such a matrix can be used.

\begin{enumerate}[leftmargin=*,label=\textbf{Step \arabic*.}]
\item
Obtain \(\widetilde C_L\in\C^{(L+1)^2\times s}\) with
\(\widetilde C_L^*\widetilde C_L=I_s\).  

\item 
Choose $K=\max\{s,K_{\rm energy}\}$ and $L\ge K+1$ when
using the generic sufficient cutoff of \cref{thm:rank-generic-cutoff}.
Here $K_{\rm energy}$ resolves the useful signal and noise bandwidth.
Set $\widetilde C_K=S_{K,L}\widetilde C_L$ and retain degree $K+1$
for multiplication before output projection. 
\item 
Form
\begin{equation}
\label{eq:empirical-generator-data}
 \widetilde Y_K=\frac1\kappa
 \begin{bmatrix}
 L_1^{(K)}\widetilde C_K\\
 L_2^{(K)}\widetilde C_K\\
 L_3^{(K)}\widetilde C_K
 \end{bmatrix},
 \qquad T_q=M_q^{K,L}\widetilde C_L,
 \quad q=1,2,3.
\end{equation}
Then assemble
\begin{equation}
\label{eq:empirical-block}
 \widetilde{\mathbb B}_{K,L}=
 \begin{bmatrix}
 0&-T_3&T_2\\ T_3&0&-T_1\\-T_2&T_1&0
 \end{bmatrix}.
\end{equation}

\item 
Compute a thin SVD or a rank-revealing QR factorization of
\(\widetilde{\mathbb B}_{K,L}\); see \cite{golubvanloan2013}.
For a full-rank solve, set
\begin{equation}
\label{eq:practical-LS}
 \widetilde\Psi=
 \begin{bmatrix}\widetilde\Psi_1\\\widetilde\Psi_2\\\widetilde\Psi_3\end{bmatrix}
 =\widetilde{\mathbb B}_{K,L}^{\dagger}\widetilde Y_K.
\end{equation}
If the required rank is not resolved, reconsider
the retained degree, the estimated signal rank, or the available data.

The normalized commutators
\begin{equation}
\label{eq:commutator-diagnostic}
 c_{rt}=
 \frac{\|[\widetilde\Psi_r,\widetilde\Psi_t]\|_{\rm F}}
 {\max\{\|\widetilde\Psi_r\|_{\rm F}\|\widetilde\Psi_t\|_{\rm F},
          \epsilon_{\rm mach}\}},\qquad r<t,
\end{equation}
vanish for the exact full-rank model. Here
\(\epsilon_{\rm mach}\) prevents division by zero.
These quantities provide numerical consistency checks
(\cref{sec:compatibility}).

\item Apply the prespecified common-pencil rule, such as
\eqref{eq:common-pencil-score}, to normalized combinations of
$\widetilde\Psi_1,\widetilde\Psi_2,\widetilde\Psi_3$. Let \(v_j,w_j\) be right and left eigenvectors of the selected pencil
\[
\widetilde \Psi_\alpha=\sum_{r=1}^3\alpha_r\widetilde\Psi_r.
\] Let \begin{equation}
\label{eq:noisy-left-right-readout}
z_{j,r}=\frac{w_j^*\widetilde\Psi_rv_j}{w_j^*v_j},\quad r=1,2,3,\quad j=1,\dots, s.
\end{equation}
Take
\begin{equation}
\label{eq:noisy-left-right-readout2}
\widehat x_{j,r}=\operatorname{Re}(z_{j,r}),
\end{equation}
as the original coordinate estimate. The two-sided formula applies
to nonnormal pencils \cite[Section~4]{hekressnerplestenjak2025}.
For exact eigenvectors of the selected pencil,

\begin{equation}
\label{eq:projected-readout-check}
\sum_r\alpha_r z_{j,r}=\lambda_j(\widetilde\Psi_\alpha)
\end{equation}
so a discrepancy measures numerical inconsistency in this calculation.

When a smooth empirical aperture frame in the same coefficient basis
is available, \cref{sec:phase-readout} gives an alternative coordinate
estimate from these same right eigenvectors.
\end{enumerate}

For directions \(\bomega_n\) with positive weights \(w_n\), set
\(W_h=\diag(w_1,\ldots,w_N)\otimes I_3\), as in
\cref{rem:discrete-equivalence}.  Let \(\widetilde Q\) be one common
smooth interpolant of the empirical signal basis.  Form node-major blocks
\[
 \widetilde{\cB}_hZ
 =\operatorname{stack}_n\left(
 \bomega_n\times\sum_{r=1}^3\be_r
       \widetilde Q(\bomega_n)Z_r\right),
 \qquad
 \widetilde Y_h
 =\operatorname{stack}_n\frac1\kappa
    (\boldsymbol{\mathcal L}\widetilde Q)(\bomega_n),
\]
and solve
\begin{equation}
\label{eq:nodal-weighted-LS}
 \min_Z\|W_h^{1/2}
   (\widetilde Y_h-\widetilde{\cB}_hZ)\|_{\rm F}^2.
\end{equation}

The nodal system is not the guarded harmonic system. To realize
the latter at nodes, first form the degree-\(K\) data and design
in \eqref{eq:finite-data-design}, then evaluate those projected
fields. A quadrature exact through degree \(2K\) reproduces the
harmonic Frobenius objective, since products of retained components
have degree at most \(2K\). Projection of the tangential fields
followed by the Hodge map generally gives a different operator:
componentwise harmonic projection need not commute with multiplication
by \(\bomega\).

\subsection{MUSIC initialization and refinement}
\label{sec:music-initialization}
We use a common MUSIC objective to compare the ESPRIT initializers
\cite{schmidt1986,fannjiangnguyen2026}. The relevant issues are the
size of the initial error relative to a MUSIC well, coverage of
distinct targets, and the errors remaining after local refinement.

For a single noiseless atom at $\bx$, the MUSIC objective is
\begin{equation}
 J_{\bx}(\by)=1-\sinc^2(\kappa|\by-\bx|).
 \label{eq:music-spatial-scale}
\end{equation}
Its central well has spatial scale $1/\kappa$. An error that is small
in absolute units can therefore become large relative to that well
as frequency increases.

The off-model mixture $f=A\varphi_\bx+B\varphi_\bz$, with complex
coefficients and $\bz\ne\bx$, illustrates why coordinate readout
matters. When weighted cross correlations in the normal equations are
negligible, the original generator estimate is approximately
\[
 \by_g\simeq\frac{|A|^2\bx+|B|^2\bz}{|A|^2+|B|^2}.
\]
This displacement need not shrink with $\kappa$ and can exceed the
narrowing MUSIC scale. Reliable long paired shifts and phase recovery
after demixing can produce errors on that scale under their respective
hypotheses (\cref{app:scalar-kernel,app:phase-normalization}). 

All reported initializations are refined against the same empirical
objective,
\begin{equation}
 J_{\widehat Q}(\by)=1-\norm{\widehat Q^*\varphi_\by}^2,
 \qquad \widehat Q^*\widehat Q=I.
 \label{eq:cmp-music}
\end{equation}
Accurate proposals near one target do not compensate for a missing
target. Conversely, if two initializations enter the same attraction
regions, refinement may remove most of their initial accuracy
difference, although their iteration counts may differ. We therefore
report both raw error and distinct-target coverage.

\subsection{Phase-gradient recovery after demixing}
\label{sec:phase-readout}

The common eigenbasis may separate approximate atoms even when
the coordinates in \eqref{eq:noisy-left-right-readout}--\eqref{eq:noisy-left-right-readout2}
are biased. Let $V=[\bv_1\ \cdots\ \bv_s]$ be the selected generator
eigenbasis. For the exact continuous model, $V=R^{-1}\Lambda$ up to permutation,
with $\Lambda$ nonsingular diagonal, so $Q\bv_j=a_j\varphi_{\bx_j}$.

Let $F$ be a smooth empirical frame in the same coefficient
basis. Form the demixed aperture functions
\begin{equation}
 f_j(\bomega)=F(\bomega)\bv_j.
 \label{eq:phase-demixing}
\end{equation}
Amplitude normalization preserves the phase of $f_j$.
For a nonvanishing demixed function, define the phase-gradient estimate
\begin{equation}
 \bx_j^{\,\rm ph}
 =\frac{3}{2\kappa}\int_{\Sph^2}
 \operatorname{Im}\left(
       \frac{\nabla_{\Sph^2}f_j}{f_j}\right)\frac{d\bomega}{4\pi}.
 \label{eq:phase-readout-main}
\end{equation}
The estimate is invariant under nonzero rescaling of $\bv_j$ and is exact
for a pure atom. 

If, for a permutation $\pi$,
\[
 f_j=a_j\varphi_{\bx_{\pi(j)}}(1+\xi_j),
 \qquad a_j\ne0,\qquad
 \xi_j\in C^1(\Sph^2;\C),\quad \|\xi_j\|_\infty\le\rho_j<1,
\]
then \cref{thm:hybrid-pointwise-dominance} gives
\[
 |\bx_j^{\rm ph}-\bx_{\pi(j)}|
 \le\frac{3}{2\kappa}\arcsin\rho_j.
\]
\Cref{app:phase-normalization} gives the proof, a sufficient coefficient
condition, the effect of uniform reconstruction error, and the sharper
two-atom bound. 

To implement the phase-gradient estimate, specify a smooth representation
of $F$ and check convergence of the spherical quadrature. If only
harmonic coefficients are available, let $\widehat C_L$ denote the
observed frame through the guard degree and define
\begin{equation}
 f_{j,L}(\bomega)
 =\sum_{\ell=0}^{L}\sum_{m=-\ell}^{\ell}
       (\widehat C_L\bv_j)_{\ell m}Y_{\ell m}(\bomega).
 \label{eq:phase-harmonic-reconstruction}
\end{equation}
Both tangential derivatives can be synthesized from the same
coefficients by the angular-momentum recurrences. This is exact for the
finite expansion, but it does not make the quotient
$\nabla_{\Sph^2}f_{j,L}/f_{j,L}$ bandlimited. Harmonic truncation and
spherical quadrature are therefore separate approximation mechanisms.

\section{Numerical experiments}
\label{sec:matched-regimes}

The comparison uses matched synthetic signal and noise realizations for
both initializers and the same subsequent MUSIC iteration. The harmonic
cutoff of the whole-sphere realization in the primary comparison
satisfies $K\ge s$. In the baseline comparisons,
``whole sphere'' denotes the original diagonal coordinate estimate
\eqref{eq:noisy-left-right-readout}-\eqref{eq:noisy-left-right-readout2}. We preserve those results and
then test the phase-normalized estimate separately in
\cref{sec:phase-readout-experiment}, holding the generator matrices
and selected common eigenbasis fixed.

Here ``matched'' means that the two methods share the same target cloud,
auxiliary centers, mixing coefficients, and nominal perturbation level.
The baseline experiments compare the two specified
native implementations. The additional benchmark in
\cref{sec:common-harmonic-benchmark} uses identical retained harmonic
data throughout. 
\subsection{Experimental protocol}

The ten target clouds are fixed across wavenumbers and noise models.
Each contains $s=125$ points obtained from a centered $5\times5\times5$
Cartesian scaffold of spacing $0.2$. Independent jitters are uniform in
volume in the shell with radii $0.05$ and $0.10$; their sample mean is
removed before applying a Haar-distributed rotation and a translation
uniform in the ball of radius $0.02$. The resulting clouds lie in the
unit ball. Their minimum separations $\delta_X$ range from $0.0325123$ to
$0.0731591$, with median $0.0592493$. The wavenumbers are
$\kappa=10,20,40,80,160,320,640,1280$. These parameters are outside of 
the certified regime for MUSIC in \cite{fannjiangnguyen2026}, which
requires $\kappa\delta_X\ge\frac92s^{2/3}=112.5$ at the minimum.

This deliberately structured ensemble permits a controlled comparison
over frequency, but it is not a representative distribution over all
point clouds. In particular, it does not test strongly clustered,
adversarially projected, or broadly unstructured configurations. The
paired-circle implementation also fixes the three Cartesian axes rather
than optimizing or randomizing them for each cloud. Its reported
restriction failures therefore apply to this axis choice and should not
be interpreted as failures of every paired-circle design.

Both noise models use 250 auxiliary centers sampled uniformly in the
ball of radius $0.9$, rejecting a candidate within distance $0.04$ of
any target or previously accepted auxiliary center. A $250\times125$
complex Gaussian matrix, with independent entries
$(g_1+ig_2)/\sqrt2$ and $g_1,g_2\sim N(0,1)$, is orthonormalized to
supply the auxiliary mixing matrix $M$. The same auxiliary centers and
mixing are used across frequencies and perturbation levels for each cloud.
Let $H_X$ and $H_A$ denote the target and auxiliary harmonic dictionaries
through degree $L$. Additive perturbations use 125 snapshots
\[
 S_\varepsilon=H_X+\varepsilon\alpha H_AM,
 \qquad \alpha=\frac{\|H_X\|_{\rm F}}{\|H_AM\|_{\rm F}},
\]
followed by orthonormalization of their range. Thus the stated additive
percentage is the relative Frobenius perturbation before subspace
extraction, not the resulting subspace angle.

For equal-angle perturbations, an orthonormal frame $Q$ for
$\Range H_X$ and a frame $W$ obtained by projecting $H_AM$ onto its
orthogonal complement and orthonormalizing satisfy
$Q^*W=0$ and $Q^*Q=W^*W=I$. We use
\[
 \widehat Q=\sqrt{1-\varepsilon^2}\,Q+\varepsilon W.
\]
Every principal sine is then $\varepsilon$: $0.01$ or $0.05$ in the
reported noisy cases. Orthogonality, Frobenius normalization, and
principal angles are defined in the finite harmonic representation
through degree $L$, not separately imposed in continuous
$L^2(\mathbb S^2)$. Both perturbation models use smooth Fourier
components from the specified auxiliary dictionary.
Thus these are structured, smooth, finite-dictionary aperture
perturbations. They are not independent sample noise, arbitrary
measurement error, or adversarial model mismatch. All conclusions below
are conditional on this perturbation construction.

For common-eigenbasis selection, the whole-sphere method tests
128 random real linear combinations of its coordinate matrices.
The paired-circle method tests 127 random complex linear combinations
of its baseline-shift matrices and one fixed combination with coefficients
$(1,\sqrt{2},\sqrt{3})$, and reuses the selected basis at subsequent shifts.
Paired circle estimates use 4000 samples per
circle. \commentout{
The different real and complex candidate laws are part of the reported
protocol and may influence spectral conditioning. A strict pencil-only
benchmark would use the same normalized coefficient law for both methods.

The stored numerical record fixes these candidate lists and seeds but
does not retain a complete candidate score, rejection rule, and tie-break
independently of the implementation. The candidate counts alone therefore
do not make the spectral-selection step reproducible from the manuscript.
For a new implementation, \eqref{eq:common-pencil-score} supplies one
fully specified truth-independent rule; an archival release should record
the exact rule used together with the candidate law and normalization.
}

Both initializers are followed by the same 200 unprojected MUSIC
gradient steps with step size $3/(2\kappa^2)$. The scale is motivated by
\cite[Theorem~2.2(v)]{fannjiangnguyen2026}, but that theorem's
separation, initial-basin, and perturbation hypotheses are not verified
for these runs, and the iterates are not projected onto certified wells.
MUSIC is therefore an empirical common postprocessor here, not a
theorem-backed recovery guarantee.

For two clouds of $s$ points, the reported error is the exact bottleneck
distance
\begin{equation}
 E_\infty(\widehat X,X)=
 \min_{\pi}\max_j|\widehat\bx_j-\bx_{\pi(j)}|
 \label{eq:cmp-bottleneck}
\end{equation}
where $\pi$ runs through the permutation group. 
Complete recovery means all 125 points admit a one-to-one matching
within $\rho=7/(12\kappa)$ (see \cite{fannjiangnguyen2026}, Corollary 2.3). This is an operational tolerance. A
numerical rank failure is recorded as no output. 

The harmonic cutoff uses the maximum radius $R_{\mathcal Z}$ of the
combined target and auxiliary centers $\cZ$. With $b=\kappa R_{\mathcal Z}$,
\begin{equation}
 K_{\rm energy}=\max\{15,\lceil b\rceil+
                     \max\{20,\lceil8b^{1/3}\rceil\}\},
 \qquad K=\max\{s,K_{\rm energy}\},\qquad L=K+1.
 \label{eq:cmp-cutoff-rule}
\end{equation}
This is the bandwidth rule used in the computations; $K\ge s$ enforces
the generic sufficient degree in \cref{thm:rank-generic-cutoff}.
The resulting degrees are listed in \cref{tab:cmp-cutoffs}.
\begin{table}[ht]
\centering\small
\begin{tabular}{rrrrr}
\toprule
$\kappa$ & $K_{\rm energy}$ & $K$ & $L$ & Input coefficients per column\\
\midrule
10 &29 &125 &126 &16129\\
20 &39 &125 &126 &16129\\
40 &63 &125 &126 &16129\\
80 &106&125 &126 &16129\\
160&186&186 &187 &35344\\
320&340--341&340--341&341--342&116964--117649\\
640&641--643&641--643&642--644&413449--416025\\
1280&1232--1236&1232--1236&1233--1237&1522756--1532644\\
\bottomrule
\end{tabular}
\caption{Retained degree $K$, guard degree $L$, and harmonic input size
$(L+1)^2$ per frame column. Ranges reflect the maximum dictionary radius
across the ten clouds.}
\label{tab:cmp-cutoffs}
\end{table}
The condition $K\ge s$ uses the generic rank theorem rather than a
cloud-dependent rank test. At $\kappa\le80$, it raises $K$ from the
energy cutoff $29$--$106$ to $125$. The additional modes can establish
exact rank while carrying little energy; their inclusion does not by
itself imply stable recovery.

The native representations also have substantially different sizes. At
$\kappa=1280$, the 125-column whole-sphere input contains between
$190{,}344{,}500$ and $191{,}580{,}500$ complex harmonic coefficients,
about $3.05$--$3.07$ GB in complex double precision before work arrays.
At one displacement, the Cartesian paired-circle implementation evaluates
$6\times4000=24{,}000$ aperture values per frame column, although it
repeats this operation over several displacements. These counts are not
a complete complexity comparison because the implementations stream and
reuse data differently. Wall-clock time, peak memory, and total arithmetic
were not recorded, so no computational-efficiency conclusion is drawn.

The primary comparison comprises noiseless, 1\% additive, and 1\%
equal-angle cases at all eight wavenumbers: 240 data sets and 480
method/case records. All 240 whole-sphere blocks have numerical rank
$3s=375$. The 5\% additive comparison uses
$\kappa=160,320,640,1280$; the additional 5\% equal-angle comparison
uses $\kappa=1280$.

\begin{table}[htbp]
\centering\small
\begin{tabular}{@{}p{0.24\textwidth}p{0.70\textwidth}@{}}
\toprule
Parameter & Specification\\
\midrule
Random seeds & Trial $j=0,\ldots,9$: geometry seed
$20260901+997j$; auxiliary dictionary and mixing seed equal to the
geometry seed plus $1000003$; pencil seed $314159+j$.
NumPy's \texttt{default\_rng} is used.\\
Circle geometry & Cartesian axes; 4000 equally spaced samples per
circle, with angles $2\pi(n+0.371)/4000$, $n=0,\ldots,3999$.\\
Phase shifts & Baseline $q_0=\kappa \eps_0=3$. Subsequent $q$ values are the sorted,
distinct members of
$\{50,500,\min(5000,0.75\kappa),1.5\kappa\}$ lying in $(3,2\kappa)$.
The aperture chord length is $q/\kappa$; the baseline eigenbasis is
reused for phase continuation.\\
Rank decisions & Direct generator and circle least-squares solves use
relative singular-value cutoff $10^{-12}$. At $\kappa=320,640,1280$,
the generator normal matrix is accumulated in degree blocks and accepted
only when its smallest-to-largest eigenvalue ratio exceeds $10^{-6}$.\\
Refinement & 200 unprojected MUSIC steps of size $3/(2\kappa^2)$;
no adaptive retries, shift retuning, clipping, or truth-based stopping.\\
\bottomrule
\end{tabular}
\caption{Reproducibility parameters common to the reported comparisons.
The signal rank is fixed at the known value $s=125$.}
\label{tab:cmp-protocol}
\end{table}

The baseline methods use the same synthetic frame but different
evaluations of it. Write $F=\Phi_{\mathcal Z}C$, where $\mathcal Z$
contains the target and auxiliary centers. The whole-sphere estimate
uses the normalized harmonic frame $\widehat Q_L=\Pi_LF$ and the
generator equations through degree $K$. For $\kappa>10$, the
paired-circle method evaluates $F$ directly on the circles, and MUSIC
uses the continuous sinc kernel with the same coefficients $C$.
Thus the circle restrictions are not computed from $\widehat Q_L$.
At $\kappa=10$, harmonic evaluation is used for numerical stability,
with additional evaluation compression bounded uniformly by $10^{-12}$.

The baseline comparison therefore matches the underlying synthetic
realization, not the discrete input at every stage.
Section~\ref{sec:common-harmonic-benchmark} gives a separate comparison
in which the generator actions, circle restrictions, and MUSIC objective
all use one empirical harmonic frame. It also measures the defect in
the noiseless shift identity caused by this finite representation.

\subsection{Baseline matched recovery results}

\begin{table}[htbp]
\centering\small
\begin{tabular}{rcccccc}
\toprule
&\multicolumn{2}{c}{Noiseless}&\multicolumn{2}{c}{1\% additive}
&\multicolumn{2}{c}{1\% equal angle}\\
\cmidrule(lr){2-3}\cmidrule(lr){4-5}\cmidrule(lr){6-7}
$\kappa$&Whole&Circle&Whole&Circle&Whole&Circle\\
\midrule
10 &10/10&0/10$^{\rm R}$&0/10&0/10$^{\rm R}$&0/10&0/10$^{\rm R}$\\
20 &10/10&0/10$^{\rm R}$&0/10&0/10$^{\rm R}$&0/10&0/10$^{\rm R}$\\
40 &10/10&0/10$^{\rm R}$&4/10&0/10$^{\rm R}$&3/10&0/10$^{\rm R}$\\
80 &10/10&0/10$^{\rm R}$&10/10&0/10&9/10&0/10\\
160&10/10&10/10&10/10&10/10&10/10&10/10\\
320&10/10&10/10&10/10&10/10&10/10&10/10\\
640&10/10&10/10&10/10&10/10&10/10&10/10\\
1280&10/10&10/10&10/10&10/10&10/10&10/10\\
\bottomrule
\end{tabular}
\caption{Baseline complete recovery after 200 identical MUSIC steps.
Whole sphere uses the original diagonal estimate. Superscript R denotes
ten numerical restriction-rank failures and hence no circle proposals. Other zero entries denote
outputs that fail complete-cloud matching.}%Source: \cite{esprit-comparison-data}.}
\label{tab:cmp-success}
\end{table}

At $\kappa=80$, the whole-sphere initializer gives more accurate
recovery than the paired-circle implementation. Under 1\% equal-angle noise the median
maximum circle restriction condition number is about $2.36\times10^{10}$,
whereas at $\kappa=160$ it is about $233$. Noise can raise the numerical rank of a restricted matrix without
making its inversion stable, as the noisy $\kappa=80$ circle outputs
illustrate. The corresponding median
whole-sphere block singular-value ratios are approximately $0.721$ and
$0.854$ in the equal-angle cases.

\begin{table}[htbp]
\centering\small
\begin{tabular}{rlrrr}
\toprule
$\kappa$&Initializer&Raw median $E_\infty$&After MUSIC&Complete\\
\midrule
10 &Whole sphere&$7.6498\times10^{-1}$&$3.6467\times10^{-1}$&0/10\\
20 &Whole sphere&$2.7858\times10^{-1}$&$2.9928\times10^{-1}$&0/10\\
40 &Whole sphere&$1.3839\times10^{-1}$&$1.6680\times10^{-1}$&3/10\\
80 &Whole sphere&$1.7044\times10^{-2}$&$1.0147\times10^{-5}$&9/10\\
80 &Paired circle&$5.1636\times10^{-1}$&$5.1977\times10^{-1}$&0/10\\
160&Whole sphere&$2.9542\times10^{-3}$&$2.4311\times10^{-6}$&10/10\\
160&Paired circle&$2.5342\times10^{-4}$&$2.4311\times10^{-6}$&10/10\\
320&Whole sphere&$7.3870\times10^{-4}$&$5.9843\times10^{-7}$&10/10\\
320&Paired circle&$8.5294\times10^{-6}$&$5.9843\times10^{-7}$&10/10\\
640&Whole sphere&$2.8103\times10^{-4}$&$1.4371\times10^{-7}$&10/10\\
640&Paired circle&$1.8663\times10^{-6}$&$1.4371\times10^{-7}$&10/10\\
1280&Whole sphere&$1.5793\times10^{-4}$&$3.8619\times10^{-8}$&10/10\\
1280&Paired circle&$6.4099\times10^{-7}$&$3.8619\times10^{-8}$&10/10\\
\bottomrule
\end{tabular}
\caption{Baseline localization under 1\% equal-angle random-complement
noise. Whole sphere uses the original diagonal estimate. At $\kappa\le40$
the circle method has no outputs. }% Source: \cite{esprit-comparison-data}.}
\label{tab:cmp-angle-errors}
\end{table}

For $\kappa\ge160$, the raw circle advantage in
\cref{tab:cmp-angle-errors} disappears after common refinement. At
$\kappa=160$, both methods recover all ten clouds within 12 MUSIC steps,
and a separate point-set comparison gives final agreement to about
$1.1\times10^{-16}$. At $\kappa=80$, the one
whole-sphere fixed-angle outlier still matches only 123 of 125 targets
after refinement, with bottleneck error about $0.05610$, even at
$K=125$. A small median therefore does not imply complete success.
In some low-frequency cases the MUSIC objective decreases while the
bottleneck error increases, as is possible when the objective and the
truth-based matching loss are different.

The 1\% additive cases show the same pattern. At $\kappa=80$, the
whole-sphere raw/final medians are $1.8032\times10^{-2}$ and
$1.0100\times10^{-5}$, with all ten clouds recovered. At
$\kappa=160,320,640,1280$, respectively, the raw whole-sphere/circle
medians are
\[
 (2.7290\times10^{-3},2.5920\times10^{-4}),\quad
 (7.8466\times10^{-4},7.9223\times10^{-6}),
\]
\[
 (2.7897\times10^{-4},1.9432\times10^{-6}),\quad
 (1.5793\times10^{-4},6.4037\times10^{-7}).
\]
Both methods recover all ten clouds at each of these frequencies; their
common final medians are $2.4115\times10^{-6}$,
$5.9282\times10^{-7}$, $1.4452\times10^{-7}$, and
$3.8539\times10^{-8}$.

%\subsubsection{A tested regime where paired circles do win after refinement}

The supplementary 5\% additive comparison uses the same noise
construction at $\kappa=160,320,640,1280$ on all ten clouds:
\begin{table}[htbp]
\centering\small
\begin{tabular}{rlrrc}
\toprule
$\kappa$&Initializer&Raw median $E_\infty$&After MUSIC&Complete\\
\midrule
160&Whole sphere&$1.1576\times10^{-1}$&$1.1646\times10^{-1}$&0/10\\
160&Paired circle&$1.2887\times10^{-3}$&$1.2133\times10^{-5}$&10/10\\
320&Whole sphere&$4.1272\times10^{-2}$&$4.3803\times10^{-2}$&1/10\\
320&Paired circle&$1.1538\times10^{-4}$&$2.9569\times10^{-6}$&10/10\\
640&Whole sphere&$3.3877\times10^{-2}$&$3.2866\times10^{-2}$&0/10\\
640&Paired circle&$1.8918\times10^{-5}$&$7.2257\times10^{-7}$&10/10\\
1280&Whole sphere&$2.0863\times10^{-2}$&$2.0845\times10^{-2}$&0/10\\
1280&Paired circle&$5.4692\times10^{-6}$&$1.9262\times10^{-7}$&10/10\\
\bottomrule
\end{tabular}
\caption{Baseline localization under 5\% additive snapshot noise,
normalized in the finite harmonic Frobenius norm before
orthonormalization. Whole sphere uses the original diagonal estimate.}
\label{tab:cmp-additive-five}
\end{table}
At 5\% additive perturbation, the circle advantage persists after the
common refinement, under the fixed protocol in \cref{tab:cmp-protocol}.

A 5\% equal-angle random-complement comparison at
$\kappa=1280$ holds every principal sine at $0.05$. The original
whole-sphere diagonal estimate has raw and refined medians $2.1089\times10^{-2}$ and
$2.0964\times10^{-2}$, with complete recovery in
$0/10$ clouds; the paired-circle medians are
$5.4871\times10^{-6}$ and $1.9333\times10^{-7}$, with
$10/10$ complete recoveries.
The difference therefore persists when the subspace angle is fixed. All 50
whole-sphere estimates at this wavenumber have numerical rank $375$ and
block singular-value ratios between $0.977$ and $0.982$: the incomplete
recoveries occur with well-conditioned coupled systems.
\Cref{sec:phase-readout-experiment} revisits these same generator eigenbases.

%\subsubsection{Noise labels and timing do not support broader claims}

At 1\% additive noise, the median realized subspace errors are approximately $0.995$, $0.0607$, $0.0176$, $0.0130$,
and $0.0114$ as $\kappa$ increases through the values $10,20,40,80,160$. Under the equal-angle model they are $0.01$ by
construction. At $\kappa=320,640$, the additive-noise medians are $0.01066$ and $0.01033$, with true synthesis singular-value ratios $0.900$ and $0.948$, respectively. At $\kappa=1280$, the corresponding subspace-error median and synthesis singular-value ratio are
$0.01017$ and $0.974$.
The true synthesis singular-value ratios at these five wavenumbers are approximately $1.50\times10^{-5}$, $0.0695$, $0.394$,
$0.652$, and $0.811$. Thus very small $\kappa$ is not automatically a
good practical regime for either method: signal-subspace extraction can
be ill conditioned, and at a fixed small subspace angle the algebraic
and spectral stages can still fail.

\subsection{A common harmonic input at two representative frequencies}
\label{sec:common-harmonic-benchmark}

We next ask whether the baseline comparison persists when both
initializers and MUSIC use the same finite harmonic data. We retain
the original diagonal generator coordinates and test a single cloud at
$\kappa=80$ and $1280$. The geometry, auxiliary centers, mixing matrix,
cutoff rule, candidate pencils, circle nodes, phase continuation, and
200-step refinement are unchanged. At each frequency we test noiseless
data and both 5\% perturbation models; at $\kappa=80$ we also include
the 1\% cases. The retained degrees are $(K,L)=(125,126)$ and
$(1236,1237)$, respectively.

For each case, form one orthonormal empirical coefficient frame
$\widehat Q_L=H_LC$ and its finite aperture expansion
$F_L(\bomega)=\sum_{\ell\le L,m}
Y_{\ell m}(\bomega)(\widehat Q_L)_{\ell m,:}$.
The generator equations use the guarded actions of this frame through
degree $K=L-1$. Every paired-circle sample is synthesized from the same
$\widehat Q_L$. Since $4000>2L$ at both frequencies, the uniform circle
grid integrates products of these retained restrictions exactly in exact
arithmetic. The common refinement objective is
\begin{equation}
 J_L(\by)=1-\norm{\widehat Q_L^*a_L(\by)}^2,
 \qquad
 a_L(\by)=\bigl(\langle Y_{\ell m},\varphi_{\by}\rangle
                  \bigr)_{\ell\le L,m}.
 \label{eq:common-harmonic-music}
\end{equation}
The leading $1$ is the known full-sphere norm squared of the atom, as in
\eqref{eq:cmp-music}; the projected correlation uses only retained
coefficients. For efficiency, these correlations and their derivatives
are evaluated from the factorization $H_LC$ by the finite addition
theorem, summing degrees $0,\ldots,L$, rather than the continuous sinc
kernel. 
\begin{table}[htbp]
\centering\small
\begin{tabular}{rllrrc}
\toprule
$\kappa$ & Perturbation & Initializer & Raw $E_\infty$ & After MUSIC & Coverage\\
\midrule
80 & None & Generator & $1.38\times10^{-15}$ & $1.14\times10^{-16}$ & $125\to 125$\\
 &  & Circles & --- & --- & No output\\
80 & 1\% additive & Generator & $1.34\times10^{-2}$ & $7.82\times10^{-6}$ & $123\to 125$\\
 &  & Circles & $5.85\times10^{-1}$ & $5.78\times10^{-1}$ & $0\to 2$\\
80 & 1\% equal-angle & Generator & $1.19\times10^{-2}$ & $8.18\times10^{-6}$ & $123\to 125$\\
 &  & Circles & $5.46\times10^{-1}$ & $5.41\times10^{-1}$ & $0\to 3$\\
80 & 5\% additive & Generator & $1.51\times10^{-1}$ & $1.41\times10^{-1}$ & $112\to 118$\\
 &  & Circles & $5.47\times10^{-1}$ & $5.47\times10^{-1}$ & $0\to 2$\\
80 & 5\% equal-angle & Generator & $1.64\times10^{-1}$ & $1.76\times10^{-1}$ & $111\to 118$\\
 &  & Circles & $5.42\times10^{-1}$ & $5.41\times10^{-1}$ & $0\to 4$\\
1280 & None & Generator & $8.29\times10^{-16}$ & $1.24\times10^{-16}$ & $125\to 125$\\
 &  & Circles & $3.39\times10^{-16}$ & $1.24\times10^{-16}$ & $125\to 125$\\
1280 & 5\% additive & Generator & $7.02\times10^{-2}$ & $7.02\times10^{-2}$ & $3\to 113$\\
 &  & Circles & $3.11\times10^{-6}$ & $2.21\times10^{-7}$ & $125\to 125$\\
1280 & 5\% equal-angle & Generator & $7.14\times10^{-2}$ & $7.14\times10^{-2}$ & $3\to 113$\\
 &  & Circles & $3.12\times10^{-6}$ & $2.21\times10^{-7}$ & $125\to 125$\\
\bottomrule
\end{tabular}
\caption{One common harmonic frame for both initializers and MUSIC. }
\label{tab:common-harmonic-input}
\end{table}

Table~\ref{tab:common-harmonic-input} gives the results. At $\kappa=80$
and 1\% noise, the whole-sphere initializations recover all targets after
refinement, whereas the circle initializations do not. At 5\% noise,
neither method recovers the entire cloud. At $\kappa=1280$ and 5\%
noise, the circle initializations recover all 125 targets under both
perturbation models, whereas the diagonal generator coordinates miss
targets. Thus the difference observed in the baseline comparison also
occurs with a common finite input; continuous circle evaluation and
continuous-kernel refinement do not alone account for it.

Restriction of $F_L$ does not, in general, preserve the exact
plane-wave shift identity. We therefore measure its defect using the
known continuous transfer matrix,
\begin{equation}
 d_L(q,\ba)=
 \frac{\norm{F_{L,+}-F_{L,-}T_{q,\ba}}_{\rm F}}
      {\norm{F_{L,+}}_{\rm F}},\qquad
 T_{q,\ba}=C_X^{-1}
       \diag\!\left(e^{\mathrm i q\ba\cdot\bx_j}\right)C_X,
 \label{eq:common-truncation-defect}
\end{equation}
where $F_{L,\pm}$ are sampled on the paired circles and $C_X$ is the
square signal coefficient factor in the noiseless frame
$\widehat Q_L=H_{X,L}C_X$. 

The maximum measured identity defects over all prescribed axes and
shifts are $8.42\times10^{-14}$ at $\kappa=80$ and $8.79\times10^{-13}$
at $\kappa=1280$. These measurements include harmonic-synthesis and
floating-point errors; they are not estimates of truncation error alone.
At $\kappa=80$, the first failed check is at $q=50$ on the second
coordinate axis: its restriction has numerical rank 124 at relative
threshold $10^{-12}$, with smallest-to-largest singular-value ratio
$3.74\times10^{-13}$. The continuous noiseless control fails at the
same step. At $\kappa=1280$, all prescribed noiseless restrictions pass
the rank check, and both methods recover the cloud to numerical
precision. Thus this cutoff yields no material noiseless localization
loss in the well-conditioned case. 
Perturbations can lift small restriction singular values and allow the
rank check to pass without yielding accurate circle coordinates, as the
noisy $\kappa=80$ results illustrate.

\commentout{
\subsection{Continuous whole-sphere control}
\label{par:cmp-continuous-control}

A separate control uses trial $j=0$ at $\kappa=1280$ under noiseless,
1\% additive, 1\% equal-angle, 5\% additive, and 5\% equal-angle
conditions. The original diagonal coordinate estimate is retained.
Only the generator solve is replaced: its continuous normal
equations are evaluated from the sinc kernel and its derivatives,
without harmonic truncation. The synthetic frame coefficients, noise
calibration, pencil candidates, and MUSIC objective and iteration are
held fixed. Across these five cases, the maximum bottleneck change in
the generator output is $2.43\times10^{-13}$ before refinement and
$2.45\times10^{-13}$ afterward. Both realizations recover all 125
targets in the noiseless and 1\% cases, and match 113 of 125 targets
in each 5\% case after refinement.

Thus harmonic truncation does not explain the stronger-noise failures
of the original diagonal estimate in this single-cloud control. This
conclusion is not a ten-cloud
continuous comparison, and the noise was not recalibrated in the
continuous norm. The observation is consistent with the continuous
one-atom bias mechanism in
\cref{eq:cmp-generator-mixture,eq:cmp-generator-floor}. It does not
imply that the same common basis is unusable for a different estimate,
as the next experiment demonstrates.
Analytic derivatives of the known synthetic Fourier frame are available
in this control; estimation of derivatives from measured samples is a
separate issue.

The degree-block implementation retains every harmonic row and the guard
shell. It agrees with a direct $\kappa=160$ calculation to within
$1.4\times10^{-14}$ in raw and refined bottleneck distance, and the
streamed harmonic generation reproduces all 32 raw and refined point
sets in a $\kappa=640$ validation trial exactly. 
}
\subsection{Phase-gradient  readout at \texorpdfstring{$\kappa=1280$}{kappa = 1280}}
\label{sec:phase-readout-experiment}

For the continuous-frame control, we evaluate \(f_j(\boldsymbol\omega)=F(\boldsymbol\omega)\boldsymbol v_j\) and \(\nabla_{\mathbb S^2}f_j(\boldsymbol\omega)=(\nabla_{\mathbb S^2}F(\boldsymbol\omega))\boldsymbol v_j\) from the synthetic Fourier frame at the same quadrature nodes, using the same selected eigenvector \(\boldsymbol v_j\) as in the harmonic calculation.

We test \eqref{eq:phase-readout-main} on the same ten clouds under both
5\% noise models, holding the generator matrices, selected common-pencil
eigenvectors, and MUSIC iteration fixed.  The finite-data
implementation forms \eqref{eq:phase-harmonic-reconstruction}
through the full input degree $L=K+1$ and synthesizes both angular
derivatives from those coefficients. Across the ten clouds,
$L\in\{1233,1234,1235,1237\}$ selected by the rule \eqref{eq:cmp-cutoff-rule}. 

The main quadrature uses $N=262144$ equal-area Fibonacci sphere nodes,
rotated using random seed 1729. With
\[
 \ba_{j,n}=\frac1\kappa\operatorname{Im}
       \left(\frac{\nabla_{\Sph^2}f_j(\bomega_n)}
                        {f_j(\bomega_n)}\right),
 \qquad P_n=I_3-\bomega_n\bomega_n^T,
\]
we compute
\begin{equation}
 \bx_{j,N}^{\,\rm ph}
 =\left(\sum_{n=1}^N P_n\right)^{-1}
       \sum_{n=1}^N\ba_{j,n}.
 \label{eq:phase-readout-discrete}
\end{equation}
All estimates receive the same 200 MUSIC steps and
are scored at the same tolerance $\rho=7/(12\kappa)$.

\begin{table}[htbp]
\centering\small
\begin{tabular}{llrrcc}
\toprule
Noise & Generator estimate & Raw median $E_\infty$ & After MUSIC
& Raw complete & Final complete\\
\midrule
Additive & Diagonal & $2.08626\times10^{-2}$ & $2.08449\times10^{-2}$ &0/10&0/10\\
Additive & Phase, continuous & $2.28345\times10^{-4}$ & $1.92618\times10^{-7}$ &9/10&10/10\\
Additive & Phase, harmonic & $2.28345\times10^{-4}$ & $1.92618\times10^{-7}$ &9/10&10/10\\
Equal angle & Diagonal & $2.10891\times10^{-2}$ & $2.09643\times10^{-2}$ &0/10&0/10\\
Equal angle & Phase, continuous & $2.30431\times10^{-4}$ & $1.93334\times10^{-7}$ &9/10&10/10\\
Equal angle & Phase, harmonic & $2.30431\times10^{-4}$ & $1.93334\times10^{-7}$ &9/10&10/10\\
\bottomrule
\end{tabular}
\caption{Whole-sphere generator coordinates at $\kappa=1280$ and 5\%
perturbation. The harmonic rows reconstruct the demixed functions and
their angular derivatives solely from the retained coefficients.}
\label{tab:phase-normalized-1280}
\end{table}

In \cref{tab:phase-normalized-1280}, phase-gradient readout increases complete
recovery after refinement from $0/10$ to $10/10$ under each noise model,
without changing the generator eigenbasis. The harmonic and
continuous-frame coordinates differ pointwise by at most
$4.92\times10^{-15}$ over both quadrature grids and all twenty cases;
their final coordinates agree exactly in the saved arithmetic. The
twenty refined point sets also agree with the paired-circle results to
within $5.56\times10^{-17}$ in bottleneck distance. On these fixed grids,
the retained harmonic coefficients therefore suffice to evaluate the
phase-gradient readout to the reported accuracy, without analytic differentiation
of the underlying synthetic frame.

\commentout{
Independent checks on three demixed functions per case, each at 256 new
nodes, give maximum discrepancies $3.72\times10^{-12}$ in function
value and $3.46\times10^{-9}$ in angular derivative, or
$2.71\times10^{-12}$ after division by $\kappa$. A resolved one-atom
test recovers its coordinate to $1.50\times10^{-15}$, while a deliberately
underresolved test reproduces the truncated harmonic function rather
than the continuous plane wave. The minimum sampled demixed amplitude
on the main grid is $0.38075$; this is a diagnostic, not a global
nonvanishing certificate.
}

Quadrature remains a separate limitation. Repeating all twenty cases at
$N=65536$ changes a raw coordinate by as much as
$8.99\times10^{-4}$. At $\kappa=1280$ the scoring tolerance is
\[
 \rho=\frac{7}{12\kappa}=4.56\times10^{-4},
\]
so the observed grid-to-grid displacement is about twice the localization
tolerance. \commentout{Grid convergence of the raw phase integral has therefore not
been demonstrated at the scale used to declare recovery. The refined point sets nevertheless
change by at most $1.11\times10^{-16}$, showing only that both grids put
one proposal in each of the same empirical MUSIC attraction basins. Thus
the experiment supports basin coverage after refinement, not
high-accuracy convergence of the unrefined phase estimator. It does not
repair the original diagonal generator coordinates, whose median errors
are about two orders of magnitude larger and whose target coverage is
incomplete.
}

A post-hoc examination of the demixed coefficients shows that the
largest coefficient in each of the 125 columns corresponds to a distinct
target in every case. This observation distinguishes the quality of the
common eigenbasis from that of its diagonal coordinates: the same basis
gives successful initializations when the coordinates are read from the
normalized phases.

\section{Conclusion and prospects}
\label{sec:outlook}
The two spherical ESPRIT constructions recover the same commuting
coordinate matrices from different aperture identities. Paired
small-circle restrictions determine their exponentials, whereas the
whole-sphere generator equations determine the matrices directly.
The Hodge formulation and the synchronized great-circle limit relate
these exact identities. The two constructions thus offer complementary ways
of organizing the geometric information in a spherical Fourier subspace. 

The phase-gradient method separates the use of a generator eigenbasis for
demixing from the use of its diagonal entries for localization.
Under pointwise dominance, the phase-gradient estimate has an explicit
$O(\kappa^{-1})$ error bound; for a two-atom mixture with fixed
nonzero separation, the sharper bound is $O(\kappa^{-2})$.
The numerical examples show that retained harmonic coefficients can
produce useful phase initializations even when the original diagonal
coordinates fail. This suggests
using the whole-sphere generator equations for demixing and then recovering locations from the
phase variation of the demixed functions.  

The local form of the generator identities also permits a formulation
on an incomplete aperture. For forward-hemisphere observations, take
the incident direction as the north axis and set
$\Gamma=\{\bomega\in\Sph^2:\omega_3>0\}$. The generator equations
remain valid on $\Gamma$, using the observed frame and its local
angular derivatives. In the exact finite-atom model, a null field on
this open aperture extends analytically to a null field on the whole
sphere; \cref{thm:qualitative-injectivity} then gives injectivity.
This argument is qualitative and does not give an aperture-dependent
conditioning bound. Alternatively,
\cref{rem:hemispherical-paired-arcs} gives a derivative-free paired-arc
construction on the forward hemisphere, with a tradeoff between arc
coverage and phase sensitivity.
\appendix

\section{Aperture algebra and commutators}
\label{app:aperture-algebra}

We collect the aperture commutators here to distinguish them from
the commuting coefficient matrices in the two recovery constructions.
All identities hold on the common smooth domain
$C^\infty(\Sph^2)$.

\begin{proposition}
\label{prop:se3}
Define the aperture-side coordinate multiplication 
$$
  M_rf(\bomega)=\omega_rf(\bomega),\quad r=1,2,3,
 $$
 on \(C^\infty(\Sph^2)\). 
Then 
\begin{align}
  [\mathcal L_i,\mathcal L_j]
  &=i\sum_{k=1}^3\epsilon_{ijk}\mathcal L_k,
  \label{eq:LL-commutator}\\
  [\mathcal L_i,M_q]
  &=i\sum_{r=1}^3\epsilon_{iqr}M_r,
  \label{eq:LM-commutator}\\
  [M_q,M_r]&=0.
  \label{eq:MM-commutator}
\end{align}
Thus \((\boldsymbol{\mathcal L},\bm M)\) realizes the semidirect
product \(\mathfrak{so}(3)\ltimes\R^3=\mathfrak{se}(3)\) on the aperture.
\end{proposition}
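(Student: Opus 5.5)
Since $\mathcal L_r=-i(\be_r\times\bomega)\cdot\nabla_{\Sph^2}$ is a first-order differential operator on $C^\infty(\Sph^2)$, all three identities reduce to the action of $\mathcal L_i$ on the coordinate functions $\omega_q$ together with the Lie bracket of the Killing fields $K_i(\bomega)=\be_i\times\bomega$.

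The identity \eqref{eq:MM-commutator} is immediate, because multiplication operators commute. For \eqref{eq:LM-commutator}, I apply the Leibniz rule: for smooth $f$, $[\mathcal L_i,M_q]f=(\mathcal L_i\omega_q)f$. I then compute $\mathcal L_i\omega_q$. The tangential gradient of the restricted linear function $\omega_q$ is $\bPi_{\bomega}\be_q$, and $K_i$ is tangent, so
\[
\mathcal L_i\omega_q=-i(\be_i\times\bomega)\cdot\be_q=-i\,\epsilon_{irq}\omega_r=i\sum_r\epsilon_{iqr}\omega_r,
\]
using $(\be_i\times\bomega)_q=\sum_r\epsilon_{irq}\omega_r$ and antisymmetry. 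This gives \eqref{eq:LM-commutator}.

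For \eqref{eq:LL-commutator}, I would avoid vector-field bookkeeping by extending $f$ homogeneously of degree zero to $\R^3\setminus\{0\}$. There $\mathcal L_i=-i\sum_{a,b}\epsilon_{iab}y_a\partial_b$, since $(\be_i\times\by)\cdot\nabla$ is tangent to spheres and agrees with the spherical operator. Because the extension commutes with these operators, the spherical identity follows from the Euclidean one. The Euclidean one is the standard computation: the commutator of two first-order operators is first order, so I only need the bracket of the vector fields $X_i=\sum\epsilon_{iab}y_a\partial_b$. This bracket is $[X_i,X_j]=-\sum_k\epsilon_{ijk}X_k$, so $[\mathcal L_i,\mathcal L_j]=(-i)^2[X_i,X_j]=i\sum_k\epsilon_{ijk}\mathcal L_k$.

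A cleaner check of the bracket uses the representation: \eqref{eq:one-parameter-unitary} makes $\ba\mapsto -i\mathcal L_{\ba}$ the derived action of $\SO(3)$. I can also test directly on the coordinate functions, using the previous step: $[\mathcal L_i,\mathcal L_j]\omega_q$ can be computed from $\mathcal L_i\omega_q$ and compared with $i\sum_k\epsilon_{ijk}\mathcal L_k\omega_q$. A first-order operator without zeroth-order term is determined by its action on the $\omega_q$, because the tangential gradients $\bPi_{\bomega}\be_q$ span each tangent plane. That check reduces to the Jacobi identity for the cross product, $\be_i\times(\be_j\times\bw)-\be_j\times(\be_i\times\bw)=(\be_i\times\be_j)\times\bw$.

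The main obstacle is sign bookkeeping: the factor $-i$, the orientation of $\be_i\times\bomega$ versus $\bomega\times\be_i$, and the order of $\epsilon$ indices. I would therefore verify with $i=1,j=2$ on $\omega_1$ using the Jacobi identity route. The $\mathfrak{se}(3)$ statement then follows from matching \eqref{eq:LL-commutator}--\eqref{eq:MM-commutator} to the semidirect product brackets, with the $M_r$ spanning an abelian ideal on which $\mathfrak{so}(3)$ acts by its vector representation.
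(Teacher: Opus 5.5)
Your proof is correct and follows essentially the same route as the paper. The mixed relation comes from the Leibniz rule $[\mathcal L_i,M_q]f=(\mathcal L_i\omega_q)f$, and the angular relation from the bracket $[K_i,K_j]=-\sum_k\epsilon_{ijk}K_k$ of the rotation fields $K_i=\be_i\times\bomega$, with $(-i)^2$ fixing the sign. The difference is cosmetic: the paper evaluates that bracket on the sphere via the linear-field formula $([\be_j]_\times[\be_i]_\times-[\be_i]_\times[\be_j]_\times)\bomega$, whereas you justify an ambient-coordinate computation through the degree-zero extension (or the coordinate-function/Jacobi check).
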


\begin{proof}
Put $D_i=(\be_i\times\bomega)\cdot\nabla_{\Sph^2}$, so
$\mathcal L_i=-iD_i$. For the linear vector fields
$K_i(\bomega)=[\be_i]_\times\bomega$, the vector-field bracket is
\[
 [K_i,K_j](\bomega)
 =\bigl([\be_j]_\times[\be_i]_\times
       -[\be_i]_\times[\be_j]_\times\bigr)\bomega
 =-\sum_k\epsilon_{ijk}K_k(\bomega).
\]
Consequently $[D_i,D_j]=-\sum_k\epsilon_{ijk}D_k$, and
$[\mathcal L_i,\mathcal L_j]
=- [D_i,D_j]=i\sum_k\epsilon_{ijk}\mathcal L_k$.
For the mixed relation,
\[
  [\mathcal L_i,M_q]f
  =(\mathcal L_i\omega_q)f
  =i\sum_r\epsilon_{iqr}\omega_rf.
\]
The multiplication operators commute, proving the last relation.
\end{proof}

The commuting half of this algebra exponentiates to aperture modulations
\begin{equation}
\label{eq:aperture-translations}
  T_{\bh}=\exp\!\left(i\kappa\sum_{r=1}^3h_rM_r\right),
  \qquad
  (T_{\bh}f)(\bomega)=e^{i\kappa\bomega\cdot\bh}f(\bomega),
\end{equation}
and \(T_{\bh}\varphi_{\bx}=\varphi_{\bx+\bh}\).  The rotation and
translation parts therefore have different geometric roles even though they
belong to one Euclidean Lie algebra.

\section{Cocycle intertwining and algebraic compatibility}
%\subsection{Lie-algebra compatibility}
\label{sec:compatibility}

Let
\(
  \mathsf R_{\ba}(\theta)=\exp(\theta[\ba]_\times)
\)
for \(\ba\in\Sph^2\), and define
\begin{equation}
\label{eq:finite-rotation-cocycle}
  \boldsymbol\delta_{\ba}(\theta,\bomega)
  :=\mathsf R_{\ba}(\theta)\bomega-\bomega,
  \qquad
  \mathcal C_{\ba}(\theta,\bomega)
  :=\Phi\bigl(\boldsymbol\delta_{\ba}(\theta,\bomega)\bigr).
\end{equation}

\begin{proposition}
\label{prop:rotation-cocycle}
The finite rotation obeys the pointwise intertwining relation
\begin{equation}
\label{eq:rotation-pointwise}
  Q(R_{\ba}(\theta)\bomega)
  =Q(\bomega)\cC_{\ba}(\theta,\bomega).
\end{equation}
The multiplier is a transformation-group cocycle:
\begin{equation}
\label{eq:cocycle-law}
  \cC_{\ba}(\theta+\sigma,\bomega)
  =\cC_{\ba}(\theta,\bomega)
   \cC_{\ba}(\sigma,R_{\ba}(\theta)\bomega).
\end{equation}
Its first two derivatives at the identity are
\begin{align}
  \left.\frac1{i\kappa}\partial_\theta
  \cC_{\ba}(\theta,\bomega)\right|_{\theta=0}
  &=\Psi_{\ba\times\bomega},
  \label{eq:cocycle-first}\\
  \left.\partial_\theta^2
  \cC_{\ba}(\theta,\bomega)\right|_{\theta=0}
  &=i\kappa\Psi_{\ba\times(\ba\times\bomega)}
   -\kappa^2(\Psi_{\ba\times\bomega})^2.
  \label{eq:cocycle-second}
\end{align}
\end{proposition}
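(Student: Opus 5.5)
The plan is to reduce everything to the plane-wave identity \eqref{eq:any-two-points} and the group law \eqref{eq:Phi-group-law} for the translation-character matrix $\Phi(\bh)=e^{i\kappa\Psi_{\bh}}$, where $\bh\mapsto\Psi_{\bh}$ is linear and the $\Psi_{\bh}$ commute.

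\textbf{Step 1: the intertwining relation.} Apply \eqref{eq:any-two-points} with $\bomega'=R_{\ba}(\theta)\bomega$. This gives $Q(R_{\ba}(\theta)\bomega)=Q(\bomega)\Phi(\bdelta)$ with $\bdelta=\boldsymbol\delta_{\ba}(\theta,\bomega)$, which is \eqref{eq:rotation-pointwise} by definition of $\cC_{\ba}$. I would note that $\mathsf R_{\ba}$ and $R_{\ba}$ denote the same rotation.

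\textbf{Step 2: the cocycle law.} Write the displacement additively,
\[
 \boldsymbol\delta_{\ba}(\theta+\sigma,\bomega)
 =\bigl(R_{\ba}(\theta)\bomega-\bomega\bigr)
 +\bigl(R_{\ba}(\sigma)R_{\ba}(\theta)\bomega-R_{\ba}(\theta)\bomega\bigr).
\]
This uses only $R_{\ba}(\theta+\sigma)=R_{\ba}(\sigma)R_{\ba}(\theta)$, which holds for a one-parameter subgroup. The right-hand side equals $\boldsymbol\delta_{\ba}(\theta,\bomega)+\boldsymbol\delta_{\ba}(\sigma,R_{\ba}(\theta)\bomega)$. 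Applying $\Phi$ and the group law \eqref{eq:Phi-group-law} gives \eqref{eq:cocycle-law}. The order of the factors does not matter, since all the $\Phi(\bh)$ commute.

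\textbf{Step 3: the derivatives.} Since the $\Psi$'s commute and $\bh\mapsto\Psi_{\bh}$ is linear, the map $\theta\mapsto\kappa\Psi_{\bdelta(\theta)}$ takes values in a commutative algebra. Hence
\[
 \partial_\theta\cC=i\kappa\Psi_{\bdelta'}\,\cC,
 \qquad
 \partial_\theta^2\cC=\bigl(i\kappa\Psi_{\bdelta''}-\kappa^2\Psi_{\bdelta'}^2\bigr)\cC.
\]
From Rodrigues' formula, $R_{\ba}(\theta)=\exp(\theta[\ba]_\times)$. At $\theta=0$ this gives $\bdelta(0)=0$ (so $\cC=I_s$), $\bdelta'(0)=\ba\times\bomega$, and $\bdelta''(0)=\ba\times(\ba\times\bomega)$. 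Substituting yields \eqref{eq:cocycle-first} and \eqref{eq:cocycle-second}.

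The only point needing care is the step that turns the derivatives of a matrix exponential into scalar-style chain rules, and commutativity of the $\Psi_r$ settles it. I expect no substantive obstacle beyond tracking the sign convention $\partial_\theta R_{\ba}(\theta)\bomega|_0=\ba\times\bomega$. This convention is consistent with \eqref{eq:fixed-axis-variable-pencil} and \eqref{eq:forward-orbit-sign}, which I would use as a cross-check.
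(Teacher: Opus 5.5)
Your proposal is correct and matches the paper's proof step for step. You obtain the intertwining relation from \eqref{eq:any-two-points} with $\bomega'=R_{\ba}(\theta)\bomega$, the cocycle law from the additive splitting of the displacement together with \eqref{eq:Phi-group-law}, and the two derivative formulas by differentiating $e^{i\kappa\Psi_{\boldsymbol\delta}}$, using, as the paper also does, that every exponent is a linear combination of the commuting $\Psi_r$.
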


\begin{proof}
Equation \eqref{eq:rotation-pointwise} is \eqref{eq:any-two-points} with
\(\bomega'=R_{\ba}(\theta)\bomega\).  The displacement identity
\[
  \boldsymbol\delta_{\ba}(\theta+\sigma,\bomega)
  =\boldsymbol\delta_{\ba}(\theta,\bomega)
   +\boldsymbol\delta_{\ba}
      (\sigma,R_{\ba}(\theta)\bomega)
\]
together with \eqref{eq:Phi-group-law} proves the cocycle law.  Finally,
\[
  \boldsymbol\delta_{\ba}'(0,\bomega)=\ba\times\bomega,
  \qquad
  \boldsymbol\delta_{\ba}''(0,\bomega)
  =\ba\times(\ba\times\bomega),
\]
and differentiation of \(e^{i\kappa\Psi_{\boldsymbol\delta}}\) proves
\eqref{eq:cocycle-first}--\eqref{eq:cocycle-second}.  All matrices in the
exponent commute because they are linear combinations of the \(\Psi_r\).
\end{proof}

For a fixed axis \(\ba\), define the matrix field
\begin{equation}
\label{eq:F-a}
  F_{\ba}(\bomega):=\Psi_{\ba\times\bomega}.
\end{equation}
Then \eqref{eq:fixed-axis-variable-pencil} reads
\begin{equation}
\label{eq:LQ-QF}
  \mathcal L_{\ba}Q=\kappa QF_{\ba}.
\end{equation}
The fields \(F_{\ba}\) vary with the base point even though the underlying
Cartesian matrices \(\Psi_r\) are constant.

\begin{proposition}[Maurer--Cartan compatibility]
\label{prop:Maurer-Cartan}
For \(\ba,\bb\in\R^3\),
\begin{equation}
\label{eq:F-derivative-identity}
  \mathcal L_{\ba}F_{\bb}
  -\mathcal L_{\bb}F_{\ba}
  =iF_{\ba\times\bb}.
\end{equation}
For the ESPRIT fields \eqref{eq:F-a}, the vanishing commutators
\([\Psi_r,\Psi_t]=0\) imply the Maurer-Cartan compatibility relation
\begin{equation}
\label{eq:Maurer-Cartan}
  \mathcal L_{\ba}F_{\bb}
  -\mathcal L_{\bb}F_{\ba}
  -iF_{\ba\times\bb}
  +\kappa[F_{\ba},F_{\bb}]=0.
\end{equation}

\end{proposition}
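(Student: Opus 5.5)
The plan is a direct computation. The point is that $F_{\bb}(\bomega)=\Psi_{\bb\times\bomega}=\sum_r(\bb\times\bomega)_r\Psi_r$ is a \emph{linear} function of $\bomega$ with constant matrix coefficients. So $\mathcal L_{\ba}$ acts only on the scalar coordinate functions. No property of $Q$ is needed for \eqref{eq:F-derivative-identity}; commutativity of the $\Psi_r$ enters only for \eqref{eq:Maurer-Cartan}.

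\textbf{Step 1: action on linear functions.} I first compute $\mathcal L_{\ba}$ on a linear function $\bomega\cdot\bc$ with $\bc\in\R^3$. Since $\ba\times\bomega$ is tangent to the sphere, the tangential gradient may be replaced by the ambient gradient of the linear extension. This gives
\[
\mathcal L_{\ba}(\bomega\cdot\bc)=-i(\ba\times\bomega)\cdot\bc .
\]
This is the same computation as the mixed commutator \eqref{eq:LM-commutator}. Applying it to $(\bb\times\bomega)_r=\bomega\cdot(\be_r\times\bb)$ and summing against $\Psi_r$ gives
\[
\mathcal L_{\ba}F_{\bb}=-i\,\Psi_{\bb\times(\ba\times\bomega)} .
\]

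\textbf{Step 2: the Jacobi identity.} Antisymmetrizing Step 1 gives
\[
\mathcal L_{\ba}F_{\bb}-\mathcal L_{\bb}F_{\ba}
=-i\,\Psi_{\bb\times(\ba\times\bomega)-\ba\times(\bb\times\bomega)} .
\]
By the Jacobi identity with third vector $\bomega$,
\[
\bb\times(\ba\times\bomega)-\ba\times(\bb\times\bomega)=\bomega\times(\ba\times\bb)=-(\ba\times\bb)\times\bomega .
\]
Substituting, the right-hand side equals $i\,\Psi_{(\ba\times\bb)\times\bomega}=iF_{\ba\times\bb}$, which is \eqref{eq:F-derivative-identity}. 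The only place an error can plausibly enter is the sign bookkeeping, namely the factor $-i$ in $\mathcal L$ and the order of the cross products. I would check it against \eqref{eq:LL-commutator} with $\ba=\be_1$, $\bb=\be_2$.

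\textbf{Step 3: the commutator term.} For every $\bomega$, both $F_{\ba}(\bomega)$ and $F_{\bb}(\bomega)$ are real linear combinations of $\Psi_1,\Psi_2,\Psi_3$. These matrices commute by \eqref{eq:commuting-joint-spectrum}, so $[F_{\ba},F_{\bb}]=0$ pointwise. Adding this vanishing term to \eqref{eq:F-derivative-identity} yields \eqref{eq:Maurer-Cartan}.

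As an optional consistency remark, \eqref{eq:Maurer-Cartan} is exactly the integrability condition obtained by differentiating \eqref{eq:LQ-QF}. Apply $\mathcal L_{\ba}$ to $\mathcal L_{\bb}Q=\kappa QF_{\bb}$, antisymmetrize, and use $[\mathcal L_{\ba},\mathcal L_{\bb}]=i\mathcal L_{\ba\times\bb}$ together with injectivity of $Q$ from \cref{cor:atom-independence}. That route only recovers the identity multiplied on the left by $Q$, however, so the direct computation above is the proof.
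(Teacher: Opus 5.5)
Your proof is correct and follows the paper's argument essentially step for step. Direct differentiation gives $\mathcal L_{\ba}F_{\bb}=-i\Psi_{\bb\times(\ba\times\bomega)}$, the triple-product (Jacobi) identity turns the antisymmetrized vector into $(\ba\times\bb)\times\bomega$, and pointwise commutativity of the $\Psi_r$ removes $[F_{\ba},F_{\bb}]$.

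You were right not to rely on the optional integrability route. Injectivity of $\Phi_X$ on constant coefficient vectors would not by itself cancel the factor $Q(\bomega)$ in front of an $\bomega$-dependent matrix field.
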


\begin{proof}
Because $\Psi_{\mathbf z}$ is linear in $\mathbf z$, direct
differentiation gives
\[
 \mathcal L_\ba F_\bb(\bomega)
 =-i\Psi_{\bb\times(\ba\times\bomega)}.
\]
Hence
\begin{align*}
 \mathcal L_\ba F_\bb-\mathcal L_\bb F_\ba
 &=-i\Psi_{\bb\times(\ba\times\bomega)
                 -\ba\times(\bb\times\bomega)}\\
 &=i\Psi_{(\ba\times\bb)\times\bomega}
 =iF_{\ba\times\bb},
\end{align*}
where the second equality is the vector triple-product identity. Since
each $F_\ba(\bomega)$ is a linear combination of the commuting matrices
$\Psi_1,\Psi_2,\Psi_3$, one has $[F_\ba,F_\bb]=0$. Substitution into
\eqref{eq:Maurer-Cartan} completes the proof.
\end{proof}
In noisy computations, both algebraic structures become useful diagnostics:
\(
[\widehat\Psi_r,\widehat\Psi_t]
\)
tests the translation representation, while a discrete residual in
\eqref{eq:Maurer-Cartan} tests rotational compatibility.

\section{Proof of the generic harmonic cutoff}
\label{app:generic-cutoff}

\begin{proof}[Proof of \cref{thm:rank-generic-cutoff}]
We first exhibit one configuration for which a fixed collection of
harmonic tests gives a nonsingular matrix.  Choose collinear nodes
\[
 \bx_j=\frac{t_j}{\kappa}\be_3,
\]
and, for \(\ell\ge1\), introduce the real degree-\(\ell\) harmonics
\begin{equation}
\label{eq:rank-test-harmonics}
 h_\ell^{(1)}(\bomega)=\omega_1P_\ell'(\omega_3),
 \qquad
 h_\ell^{(2)}(\bomega)=\omega_2P_\ell'(\omega_3).
\end{equation}
They are proportional to the real and imaginary harmonics of angular
order one.  With the inner product conjugate-linear in its first argument
and normalized surface measure, azimuthal integration gives
\begin{equation}
\label{eq:rank-test-integral}
 \left\langle h_\ell^{(a)},\omega_b e^{it\omega_3}\right\rangle
 =\delta_{ab}\,d_\ell\frac{j_\ell(t)}t,
 \qquad
 d_\ell=\frac{\ell(\ell+1)}2\,i^{\ell-1},
 \quad a,b\in\{1,2\}.
\end{equation}
For completeness, the diagonal integral equals
\[
 \frac14\int_{-1}^1(1-u^2)P_\ell'(u)e^{itu}\,du.
\]
The Legendre equation and integration by parts yield, for \(t\ne0\),
\[
 \int_{-1}^1(1-u^2)P_\ell'(u)e^{itu}\,du
 =\frac{\ell(\ell+1)}{it}\int_{-1}^1P_\ell(u)e^{itu}\,du
 =2\ell(\ell+1)i^{\ell-1}\frac{j_\ell(t)}t.
\]
Here the last identity follows from the plane-wave expansion; see
\href{https://dlmf.nist.gov/10.60.E7}{DLMF, equation 10.60.7}.
The formulas extend continuously to \(t=0\).

Define
\begin{equation}
\label{eq:rank-bessel-matrix}
 T_s(t_1,\ldots,t_s)
 =\left[\frac{j_\ell(t_j)}{t_j}\right]_{\ell,j=1}^s.
\end{equation}
At zero, the quotient has value \(1/3\) for \(\ell=1\) and zero for
\(\ell\ge2\).  Since \(K\ge s\), all test functions
\eqref{eq:rank-test-harmonics} with \(1\le\ell\le s\) belong to
\(\cH_K\).  Taking inner products with these test functions is therefore
unaffected by the retained orthogonal projection.  Testing the concatenated raw multiplier map
against these \(2s\) functions produces the square matrix
\begin{equation}
\label{eq:rank-tested-pair}
 \begin{bmatrix}D_sT_s&0\\0&D_sT_s\end{bmatrix},
 \qquad D_s=\diag(d_1,\ldots,d_s).
\end{equation}
All \(d_\ell\) are nonzero.  Hence nonsingularity of \(T_s\)
establishes \eqref{eq:rank-generic-pair} for this collinear configuration.

Choose distinct real numbers \(a_1,\ldots,a_s\), and put
\(t_j=\epsilon a_j\).  The spherical-Bessel power series gives
\[
 \frac{j_\ell(t)}t
 =\frac{t^{\ell-1}}{(2\ell+1)!!}\bigl(1+O(t^2)\bigr),
 \qquad \ell\ge1;
\]
see \href{https://dlmf.nist.gov/10.53.E1}{DLMF, equation 10.53.1}.
Factoring \(\epsilon^{\ell-1}\) from row \(\ell\) gives
\begin{equation}
\label{eq:rank-vandermonde-asymptotics}
 \det T_s(\epsilon a_1,\ldots,\epsilon a_s)
 =\epsilon^{s(s-1)/2}
 \left[
 \frac{\displaystyle\prod_{p<q}(a_q-a_p)}
      {\displaystyle\prod_{\ell=1}^s(2\ell+1)!!}
 +O(\epsilon^2)\right].
\end{equation}
The leading coefficient is nonzero, including when one of the \(a_j\)
is zero.  Thus \(T_s\) is nonsingular for all sufficiently small
positive \(\epsilon\).  Taking also
\(\epsilon\max_j|a_j|<\kappa\) puts every node in this configuration inside
\(B_1(0)\).

For arbitrary node positions, use the same fixed \(2s\) harmonic tests
to form a square matrix from
\([N_{1,K}(X)\ N_{2,K}(X)]\).  Its entries are integrals of fixed
polynomials times \(e^{i\kappa\bomega\cdot\bx_j}\), hence are real
analytic in the \(3s\) real node coordinates.  Let \(d(X)\) be its
determinant.  The collinear construction proves that \(|d(X)|^2\) is a
nontrivial real-analytic function on \(\R^{3s}\).  Its zero set has
empty interior and Lebesgue measure zero.  The set
\(\{X\in\Omega_s:d(X)\ne0\}\) is therefore dense and open in
\(\Omega_s\), with a measure-zero complement, and satisfies
\eqref{eq:rank-generic-pair}.

The argument following \eqref{eq:rank-basis-factorization} now gives
full rank of \(\Phi_{X,L}\) for every \(L\ge K+1\).
Equation \eqref{eq:rank-basis-factorization} and
\cref{prop:rank-two-multipliers} prove the asserted guarded-block rank.
\end{proof}

\section{One-atom bias }
\label{app:scalar-kernel}

We compare the two estimates when a one-target signal subspace is
perturbed by an atom at another location. The observed subspace is
one-dimensional, so neither eigenbasis selection nor coordinate pairing
is involved. The generator method fits an intensity-weighted
differential relation, whereas the paired-circle method estimates a
finite-shift phase. The resulting formulas identify the bias in the
first estimate and the conditions under which the second gives an error
of order $\kappa^{-1}$.

Let $\kappa>0$, let $\bx\ne\bz$ be points in the unit ball, and define
\[
 \varphi_{\bx}(\bomega)=e^{i\kappa\bomega\cdot\bx},\qquad
 \bomega\in\Sph^2,\qquad d\mu(\bomega)=\frac{d\bomega}{4\pi}.
\]
All spherical inner products below use $d\mu$ and are conjugate-linear
in the first argument. In particular, every atom has unit norm. The true
signal subspace is $\operatorname{span}\{\varphi_{\bx}\}$, while the
observed subspace is generated by
\begin{equation}
 f=A\varphi_{\bx}+B\varphi_{\bz},\qquad A,B\in\C,\qquad f\ne0
 \label{eq:cmp-one-mixture}
\end{equation}
where $\bx$ is the target and the term at $\bz$ represents a perturbing
component. 

Put $\bd=\bz-\bx$, $r=|\bd|$, and $\bu=\bd/r$. The atom correlation  is
\[
 a_0=\langle\varphi_{\bx},\varphi_{\bz}\rangle
     =\sinc(\kappa r),\qquad
 \sinc(t)=\frac{\sin t}{t},\quad \sinc(0)=1.
\]
To prescribe the subspace perturbation exactly, choose $0\le\eta<1$
and set $c=\sqrt{1-\eta^2}$. Since $|a_0|<1$, define
\begin{equation}
 \begin{aligned}
 \psi&=\frac{\varphi_{\bz}-a_0\varphi_{\bx}}{\sqrt{1-a_0^2}},
 & f&=c\varphi_{\bx}+\eta \psi,\\
 A&=c-\frac{\eta a_0}{\sqrt{1-a_0^2}},
 & B&=\frac{\eta}{\sqrt{1-a_0^2}}.
 \end{aligned}
 \label{eq:cmp-one-equal-angle}
\end{equation}
Then $\psi\perp\varphi_{\bx}$, $\|\psi\|=\|f\|=1$, and the angle between
the true and observed subspaces has sine $\eta$. 

\subsection{Whole-sphere least-squares estimate}

Write $P_{\bomega}=I_3-\bomega\bomega^T$ for the tangent-plane projector.
For an unperturbed atom,
\[
 \frac{1}{i\kappa}\nabla_{\Sph^2}\varphi_{\bx}
       =P_{\bomega}\bx\,\varphi_{\bx}.
\]
The generator construction seeks a single constant vector $\by$ that
best satisfies the same relation for the observed function $f$:
\begin{equation}
 \by_g=\operatorname*{argmin}_{\by\in\R^3}
 \int_{\Sph^2}
 \left|\frac{1}{i\kappa}\nabla_{\Sph^2}f(\bomega)
              -P_{\bomega}\by\,f(\bomega)\right|^2\,d\mu.
 \label{eq:cmp-generator-objective}
\end{equation}
Applying $\bomega\times$ to this tangential residual preserves its norm,
so this is also the angular-generator least-squares problem. For the
mixture in \eqref{eq:cmp-one-mixture}, its residual is
\[
 P_{\bomega}\bigl[A(\bx-\by)\varphi_{\bx}
                     +B(\bz-\by)\varphi_{\bz}\bigr].
\]
Thus the derivative relation contains contributions from both locations,
although the requested output is one vector.

The interaction of these contributions is described by the real,
symmetric matrix kernel $T_\kappa$ in
\eqref{eq:tangential-kernel}. Put
\[
 c_{AB}=\operatorname{Re}(\overline A B).
\]
The averaging identity
$\int P_{\bomega}\,d\mu=2I_3/3$ gives the following exact expression.

\begin{proposition}
\label{prop:cmp-scalar-generator}
The minimizer in \eqref{eq:cmp-generator-objective} is unique and satisfies
\begin{equation}
  \by_g=H^{-1}\left[
       \frac23(|A|^2\bx+|B|^2\bz)
                  +c_{AB}T_\kappa(\bd)(\bx+\bz)\right]
 \label{eq:cmp-exact-yG}
\end{equation}
where 
$$H=\frac23(|A|^2+|B|^2)I_3+2c_{AB}T_\kappa(\bd)
$$ is positive definite. 

\end{proposition}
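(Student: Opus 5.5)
The objective in \eqref{eq:cmp-generator-objective} is a real quadratic function of $\by\in\R^3$. I will expand it and read off the normal equations. Write $g=(i\kappa)^{-1}\nabla_{\Sph^2}f$. For the mixture \eqref{eq:cmp-one-mixture}, the atom identity gives
\[
 g(\bomega)=P_{\bomega}\bigl(A\bx\,\varphi_{\bx}+B\bz\,\varphi_{\bz}\bigr).
\]
The residual is therefore $P_{\bomega}\bw(\bomega)$ with $\bw=A(\bx-\by)\varphi_{\bx}+B(\bz-\by)\varphi_{\bz}$. Since $P_{\bomega}$ is a real orthogonal projector, $|P_{\bomega}\bw|^2=\bw^*P_{\bomega}\bw$. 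Expanding gives four terms.

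\textbf{Diagonal terms.} Since $|\varphi_{\bx}|=|\varphi_{\bz}|=1$, the averaging identity \eqref{eq:average-projector} turns them into
\[
 \tfrac23\bigl(|A|^2|\bx-\by|^2+|B|^2|\bz-\by|^2\bigr).
\]

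\textbf{Cross terms.} The key step is to identify the kernel. The cross term is
\[
 \overline AB\,(\bx-\by)^T\Bigl(\int P_{\bomega}e^{i\kappa\bomega\cdot\bd}\,d\mu\Bigr)(\bz-\by)
 =\overline AB\,(\bx-\by)^TT_\kappa(\bd)(\bz-\by),
\]
and its complex conjugate is also present. By \eqref{eq:tangential-kernel-formula}, $T_\kappa(\bd)$ is real symmetric; equivalently, $T_\kappa(-\bd)=T_\kappa(\bd)^*=T_\kappa(\bd)$ because the integrand is even under $\bomega\mapsto-\bomega$. Hence the two cross terms sum to $2c_{AB}(\bx-\by)^TT_\kappa(\bd)(\bz-\by)$. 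So the objective is a real quadratic in $\by$ with Hessian $2H$, where $H=\tfrac23(|A|^2+|B|^2)I_3+2c_{AB}T_\kappa(\bd)$.

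\textbf{Normal equations.} I differentiate in $\by$, using symmetry of $T_\kappa$:
\[
 -\tfrac43\bigl(|A|^2(\bx-\by)+|B|^2(\bz-\by)\bigr)-2c_{AB}T_\kappa(\bd)\bigl((\bx-\by)+(\bz-\by)\bigr)=0 .
\]
Rearranging gives $H\by=\tfrac23(|A|^2\bx+|B|^2\bz)+c_{AB}T_\kappa(\bd)(\bx+\bz)$, which is \eqref{eq:cmp-exact-yG} once $H$ is invertible.

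\textbf{Positive definiteness of $H$ (the main obstacle).} A bound on the eigenvalues of $T_\kappa$ alone will not work, because $2|c_{AB}|$ can exceed $\tfrac23(|A|^2+|B|^2)$ times something. Instead I use the quadratic form directly. For real $\bv\in\R^3$,
\[
 \bv^TH\bv=\int_{\Sph^2}\bigl|P_{\bomega}\bv\,(A\varphi_{\bx}+B\varphi_{\bz})\bigr|^2\,d\mu=\int_{\Sph^2}|P_{\bomega}\bv|^2|f|^2\,d\mu\ge0,
\]
which is just the computation above with $\bx-\by$ and $\bz-\by$ both replaced by $\bv$. For strict positivity, suppose $\bv\ne0$. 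Then $P_{\bomega}\bv=0$ only at $\bomega=\pm\bv/|\bv|$. Also $f$ is a nonzero real-analytic function, since $f\ne0$ and $\varphi_{\bx},\varphi_{\bz}$ are independent by \cref{cor:atom-independence}. Its zero set therefore has measure zero, so the integral is positive. Thus $H\succ0$, the quadratic objective is strictly convex, and its minimizer is unique and given by the formula above.
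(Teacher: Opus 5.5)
Your proof is correct and follows essentially the same route as the paper: expanding the residual yields the normal equations with matrix $H=\int_{\Sph^2}|f|^2P_{\bomega}\,d\mu$, and positive definiteness follows from $\ba^TH\ba=\int_{\Sph^2}|f|^2|P_{\bomega}\ba|^2\,d\mu>0$, because $f$ is a nonzero analytic function and $P_{\bomega}\ba$ vanishes only at $\pm\ba/|\ba|$. One correction to your aside, which does not affect the argument: an eigenvalue bound would also work, since $\|T_\kappa(\bd)\|_2<2/3$ for $\bd\ne0$ (because $|\ba^TT_\kappa(\bd)\ba|\le\int|P_{\bomega}\ba|^2|\cos(\kappa\bomega\cdot\bd)|\,d\mu$, which is strictly less than $2/3$ for unit $\ba$) and $2|c_{AB}|\le|A|^2+|B|^2$.
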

\begin{proof}
Expanding the residual gives the normal matrix
$H=\int|f|^2P_{\bomega}\,d\mu$ and the right-hand vector
\[
 \operatorname{Re}\int\overline f\,
 P_{\bomega}(A\bx\varphi_{\bx}+B\bz\varphi_{\bz})\,d\mu.
\]
The two pure-atom terms and the two cross terms give
\eqref{eq:cmp-exact-yG}. For any nonzero real vector $\ba$,
$\ba^TH\ba=\int |f|^2|P_{\bomega}\ba|^2\,d\mu>0$:
$f$ is a nonzero analytic function, and $P_{\bomega}\ba$ vanishes
only at the two poles parallel to $\ba$. Hence $H$ is positive definite.
\end{proof}

When the weighted cross-correlation contributions to both the normal
matrix and the right-hand vector are negligible,
\eqref{eq:cmp-exact-yG} reduces to
\begin{equation}
 \by_g\simeq\frac{|A|^2\bx+|B|^2\bz}{|A|^2+|B|^2}.
 \label{eq:cmp-generator-mixture}
\end{equation}
The estimate is approximately an energy-weighted average of the two
locations. Decay of the cross correlation does not remove the term
$|B|^2\bz$. Consequently, increasing the wavenumber need not remove the
localization bias.

The kernel can equivalently be computed by
\[
 T_\kappa(\bd)=\sinc(\kappa r)I_3
               +\kappa^{-2}\nabla_{\bd}^2\sinc(\kappa r).
\]
For $t=\kappa r>0$, the radial Hessian and
$\sinc''(t)+2\sinc'(t)/t+\sinc(t)=0$ give
\eqref{eq:tangential-kernel-formula}.
The expansion $\sinc(t)=1-t^2/6+O(t^4)$ gives
$T_\kappa(0)=2I_3/3$. For fixed $\bd\ne0$, both coefficients in
\eqref{eq:tangential-kernel-formula} tend to zero as $\kappa\to\infty$.

The formula also makes the finite-frequency bias explicit. Define
$\tau_\parallel(t)=-2\sinc'(t)/t$, the eigenvalue of $T_\kappa(\bd)$
in the direction $\bd$. Subtracting $H\bx$ from the normal equations gives
\begin{equation}
 \by_g-\bx=
 \frac{\frac23|B|^2+c_{AB}\tau_\parallel(t)}
      {\frac23(|A|^2+|B|^2)+2c_{AB}\tau_\parallel(t)}\,(\bz-\bx).
 \label{eq:cmp-generator-displacement}
\end{equation}
Thus, also for complex coefficients, the estimate lies on the line through the
two locations, although cross terms can prevent it from being a convex
average at finite frequency.

In the exact equal-angle construction, keep $\bx$, $\bz$, and $\eta$
fixed while increasing $\kappa$. Then $a_0\to0$, $A\to c$, and
$B\to\eta$, so
\begin{equation}
 \by_g-\bx\longrightarrow\eta^2(\bz-\bx).
 \label{eq:cmp-generator-floor}
\end{equation}
Thus a fixed subspace perturbation produces a nonzero limiting bias
even though $H\to2I_3/3$. The bias is already present in the continuous
one-dimensional problem; it does not arise from harmonic truncation,
numerical differentiation, or an ill-conditioned normal matrix.

\subsection{Paired-small-circle estimate}

Choose a unit vector $\bv$ and a chord length $0<h<2$, independently
of the perturbation size $\eta$. Set
\[
 q=\kappa h,\qquad
 E_{\bv}=\{\bomega\in\Sph^2:\bomega\cdot\bv=0\},\qquad
 \bomega_\pm=\sqrt{1-h^2/4}\,\bomega\pm\frac h2\bv.
\]
The paired points lie on the sphere and differ by $h\bv$. Define
$f_\pm(\bomega)=f(\bomega_\pm)$ on the common parameter circle
$E_{\bv}$, equipped with normalized arclength. An unperturbed atom obeys
\[
 (\varphi_{\bx})_+=e^{iqt_{\bx}}(\varphi_{\bx})_-,
 \qquad t_{\bx}=\bv\cdot\bx.
\]
For the observed function, the paired-circle construction estimates this
phase multiplier by
\begin{equation}
 \lambda_{h\bv}
 =\operatorname*{argmin}_{\lambda\in\C}\|f_+-\lambda f_-\|^2
 =\frac{\langle f_-,f_+\rangle}{\|f_-\|^2},
 \qquad \|f_-\|>0.
 \label{eq:cmp-circle-estimator}
\end{equation}
Only its phase is used to recover the projected location. 

Put
\[
 \delta_{\bv}=\bv\cdot(\bz-\bx),\qquad
 b_{\bv}=|(I_3-\bv\bv^T)(\bz-\bx)|,\qquad
 b_h=J_0\!\left(\kappa\sqrt{1-h^2/4}\,b_{\bv}\right),
\]
where $J_0$ is the cylindrical Bessel function. Here $b_{\bv}$ is the
separation after projection onto the circle plane, while $b_h$ is the
resulting circular correlation. They are different from the spherical
overlap $a_0=\sinc(\kappa r)$.

\begin{proposition}[Paired-circle phase and projection error]
\label{prop:cmp-scalar-circle}
The estimate in \eqref{eq:cmp-circle-estimator} is
\begin{equation}
 \lambda_{h\bv}=e^{iqt_{\bx}}
 \frac{|A|^2+|B|^2e^{iq\delta_{\bv}}
       +2c_{AB}b_h e^{iq\delta_{\bv}/2}}
      {|A|^2+|B|^2
       +2b_h\operatorname{Re}(\overline A B e^{-iq\delta_{\bv}/2})}.
 \label{eq:cmp-exact-lambda}
\end{equation}
If $A\ne0$ and
 \begin{equation}
 \beta_h=\frac{|B|^2+2|A|\,|B|\,|b_h|}{|A|^2}<1,
 \label{eq:cmp-scalar-beta}
\end{equation}
then the wrapped phase error obeys
$|\arg(\lambda_{h\bv}e^{-iqt_{\bx}})|
\le\arcsin\beta_h$. Once the correct phase branch has been chosen,
\begin{equation}
 |\widehat t_{\bx}-t_{\bx}|
 \le\frac{\arcsin\beta_h}{q}
 =\frac{\arcsin\beta_h}{\kappa h}.
 \label{eq:cmp-scalar-phase-bound}
\end{equation}
\end{proposition}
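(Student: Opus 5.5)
The plan is a direct computation: restrict the two atoms to the paired circles, express $\langle f_-,f_+\rangle$ and $\|f_-\|^2$ through the circle Gram entries \eqref{eq:paired-restriction-gram}, and then bound the argument of a quantity of the form $1+w$.

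\textbf{Step 1 (shift relation).} By the one-atom shift relation, $(\varphi_{\bx})_+=e^{iqt_{\bx}}(\varphi_{\bx})_-$ and $(\varphi_{\bz})_+=e^{iqt_{\bx}}e^{iq\delta_{\bv}}(\varphi_{\bz})_-$, since $t_{\bz}=t_{\bx}+\delta_{\bv}$. Hence
\[
 f_+=e^{iqt_{\bx}}\bigl(A(\varphi_{\bx})_-+Be^{iq\delta_{\bv}}(\varphi_{\bz})_-\bigr).
\]

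\textbf{Step 2 (circle Gram entries).} Apply \eqref{eq:paired-restriction-gram} with $\eps=h$, $\bx_j=\bx$, $\bx_k=\bz$. The diagonal entries are $1$, and
\[
 \langle(\varphi_{\bx})_-,(\varphi_{\bz})_-\rangle=e^{-iq\delta_{\bv}/2}\,b_h .
\]
The conjugate entry carries $e^{+iq\delta_{\bv}/2}$; $b_h$ is real because $J_0$ is real.

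\textbf{Step 3 (formula for $\lambda_{h\bv}$).} Expanding sesquilinearly,
\[
 \|f_-\|^2=|A|^2+|B|^2+2b_h\operatorname{Re}\bigl(\overline A Be^{-iq\delta_{\bv}/2}\bigr),
\]
and $e^{-iqt_{\bx}}\langle f_-,f_+\rangle$ equals
\[
 |A|^2+|B|^2e^{iq\delta_{\bv}}
 +\overline A B\,e^{iq\delta_{\bv}}e^{-iq\delta_{\bv}/2}b_h
 +A\overline B\,e^{iq\delta_{\bv}/2}b_h .
\]
The last two terms combine to $2c_{AB}b_he^{iq\delta_{\bv}/2}$. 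Dividing by $\|f_-\|^2$ gives \eqref{eq:cmp-exact-lambda}.

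\textbf{Step 4 (phase bound).} The denominator in \eqref{eq:cmp-exact-lambda} is $\|f_-\|^2>0$, so it does not affect the phase. Therefore $\arg(\lambda_{h\bv}e^{-iqt_{\bx}})$ is the argument of the numerator. Dividing the numerator by $|A|^2>0$ writes it as $1+w$. Since $|c_{AB}|\le|A||B|$,
\[
 |w|\le\frac{|B|^2+2|A||B||b_h|}{|A|^2}=\beta_h .
\]
For $|w|\le\beta<1$, the point $1+w$ lies in the disk of radius $\beta$ about $1$. The tangent lines from the origin to that disk make angle $\arcsin\beta$ with the real axis. Hence $|\arg(1+w)|\le\arcsin\beta_h$, with the principal branch, because $\operatorname{Re}(1+w)>0$.

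\textbf{Step 5 (projection error).} Once the branch is chosen correctly, $\widehat t_{\bx}=t_{\bx}+\arg(\lambda_{h\bv}e^{-iqt_{\bx}})/q$. Dividing the phase bound by $q=\kappa h$ gives \eqref{eq:cmp-scalar-phase-bound}.

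\textbf{Main obstacle.} The only point needing care is Step 2: getting the sign and the factor $1/2$ in the Gram phase right, which must be consistent with the conjugate-linear convention and with the offset $-\tfrac h2\bv$ of the minus circle. I would verify this directly from the parametrization: $\overline{(\varphi_{\bx})_-}(\varphi_{\bz})_-$ has the constant factor $e^{-i\kappa h\delta_{\bv}/2}$ times $e^{i\kappa r_h\bomega\cdot(\bz-\bx)}$, and averaging the latter over $E_{\bv}$ yields $J_0$ of $\kappa r_h b_{\bv}$.
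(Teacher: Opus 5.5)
Your proposal is correct and follows the paper's proof. You expand $\langle f_-,f_+\rangle$ and $\|f_-\|^2$ through the circle Gram entries, note that the positive denominator carries no phase, bound the argument of the normalized numerator by the angle that the disk of radius $\beta_h$ about $1$ subtends at the origin, and then divide by $q$. Your explicit checks of the cross-term phase $e^{-iq\delta_{\bv}/2}b_h$ and of $|c_{AB}|\le|A||B|$ fill in details that the paper leaves implicit.
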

\begin{proof}
Expanding $\langle f_-,f_+\rangle$ gives the two pure-atom contributions
 $|A|^2e^{iqt_{\bx}}$ and
$|B|^2e^{iq(t_{\bx}+\delta_{\bv})}$. The two averaged cross terms sum to
$2c_{AB}b_h e^{iq(t_{\bx}+\delta_{\bv}/2)}$. The denominator in
\eqref{eq:cmp-exact-lambda} is $\|f_-\|^2>0$.
After removing $e^{iqt_{\bx}}$ and dividing the numerator by $|A|^2$,
the remaining complex number lies in the disk centered at $1$ with
radius $\beta_h$. If $\beta_h<1$, this disk avoids zero and subtends
an angle of at most $\arcsin\beta_h$. The positive real denominator
changes no phase. Dividing the correctly unwrapped phase error by $q$
proves \eqref{eq:cmp-scalar-phase-bound}.
\end{proof}

\subsection{A long shift reduces coordinate error}
\label{sec:long-shift}
A larger shift converts a phase error into a smaller coordinate error once the correct phase branch is known. 

For fixed $h\in(0,2)$ and fixed projected separation $b_{\bv}>0$,
$b_h\to0$ as $\kappa\to\infty$. In the equal-angle model, neglecting
the correlations gives
\begin{equation}
 \begin{aligned}
 \lambda_{h\bv}&\simeq e^{iqt_{\bx}}
             \bigl[(1-\eta^2)+\eta^2e^{iq\delta_{\bv}}\bigr],\\
 |\widehat t_{\bx}-t_{\bx}|&\lesssim
       \frac{1}{\kappa h}
       \arcsin\!\left(\frac{\eta^2}{1-\eta^2}\right),
       \qquad 0\le\eta<\frac1{\sqrt2}.
 \end{aligned}
 \label{eq:cmp-circle-approx}
\end{equation}
The exact bound has $\beta_h\to\eta^2/(1-\eta^2)$, so it gives
$O(\kappa^{-1})$ projection error for fixed admissible $\eta$ and $h$,
on the correct branch. Its limiting small-noise coefficient is of order
$\eta^2/h$. The gain comes from dividing a controlled phase error by a
long Fourier displacement $q=\kappa h$.

This conclusion has three qualifications. First, at finite frequency the
cross term of size $O(\eta|b_h|)$ can exceed the $O(\eta^2)$ contribution.
The latter dominates only when circular correlation is sufficiently small
relative to the perturbation; spherical overlap also affects $A$ and $B$.
Second, projected separation matters: if $b_{\bv}=0$, then $b_h=1$ and
circular decorrelation never occurs. Allowing $h$ to approach $2$ can
likewise shrink the circles too quickly for decorrelation. Third, a long
shift supplies the projection only modulo $2\pi/q$. A phase bound alone
does not select the correct integer.

A baseline shift $0<q_0=\kappa \eps_0<\min\{\pi,2\kappa\}$ is  unaliased for an exact atom in the unit ball. With a phase
error bounded by $e_0$, the margin $q_0|t_{\bx}|+e_0<\pi$ also prevents
an initial branch crossing. Continuation to larger shifts uses the
previous projection estimate to choose the next integer. If the wrapped phase error at shift $q_m$ is
bounded by $e_m$, a sufficient condition for that choice to be correct is
\[
 q_m|\widehat t_{m-1}-t_{\bx}|+e_m<\pi.
\]
Thus longer shifts require reliable continuation. In contrast, expanding
the decorrelated expression in \eqref{eq:cmp-circle-approx} for small $q$ gives
$\arg[(1-\eta^2)+\eta^2e^{iq\delta_{\bv}}]/q
\to\eta^2\delta_{\bv}$. In this approximation, small shifts retain the same weighted-average
bias. The improvement requires a long shift with the correct phase branch.

For the observed subspace $\operatorname{span}\{f\}$, the normalized
MUSIC objective is
\begin{equation}
 \begin{aligned}
 J_f(\by)=1-
 \frac{\left|A\sinc(\kappa|\by-\bx|)
              +B\sinc(\kappa|\by-\bz|)\right|^2}
 {|A|^2+|B|^2+2a_0\operatorname{Re}(\overline A B)}
 \end{aligned}
 \label{eq:cmp-one-music}
\end{equation}

The implications of these coordinate errors for MUSIC's $1/\kappa$
spatial scale and target coverage are discussed in
\cref{sec:music-initialization}.

\section{Phase-based estimate with amplitude normalization}
\label{app:phase-normalization}

The bias in \eqref{eq:cmp-generator-floor} is a property of the original
least-squares coordinate estimate. To see how its weighting enters,
write $Df=(i\kappa)^{-1}\nabla_{\Sph^2}f$. Wherever $f\ne0$, its objective
has integrand
\[
 |Df-P_{\bomega}\by f|^2
 =|f|^2\left|\frac{Df}{f}-P_{\bomega}\by\right|^2.
\]
Thus the observed intensity weights the local logarithmic derivative.
Removing this weight changes the coordinate normal equations. For real
$\by$, the amplitude-gradient component of $Df/f$ is purely imaginary
and contributes only a term independent of $\by$; taking the real phase
field, equivalently using $g=f/|f|$, removes that contribution.

\begin{proposition}
\label{prop:phase-readout}
Let $f\in C^1(\Sph^2;\C)$ be nonvanishing, and put $g=f/|f|$.
The unique real-coordinate minimizer
\begin{equation}
\bx^{\rm ph}
 =\operatorname*{argmin}_{\by\in\R^3}
 \int_{\Sph^2}\left|
       \frac{1}{i\kappa}\nabla_{\Sph^2}g
       -P_{\bomega}\by\,g\right|^2d\mu
 \label{eq:phase-normalized-objective}
\end{equation}
is
\begin{equation}
 \bx^{\rm ph}
 =\frac{3}{2\kappa}\int_{\Sph^2}
       \operatorname{Im}\left(\frac{\nabla_{\Sph^2}f}{f}\right)d\mu
 \label{eq:phase-normalized-estimator}
\end{equation}
where $d\mu$ is the normalized surface element of $\Sph^2$. 
In particular, $\bx^{\rm ph}=\bx$ for every
$f=a\varphi_{\bx}$ with $a\ne0$.
\end{proposition}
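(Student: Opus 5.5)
Since $|g|=1$, the plan is to reduce the least-squares problem to a tangential projection of a real vector field and then use the averaging identity \eqref{eq:average-projector}.

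First, rewrite the integrand without $g$. From $|g|^2=1$ we get $\operatorname{Re}(\bar g\nabla_{\Sph^2}g)=\tfrac12\nabla_{\Sph^2}|g|^2=0$. Hence
\[
 \bar g\,\frac{1}{i\kappa}\nabla_{\Sph^2}g=\frac1\kappa\boldsymbol\theta,
 \qquad
 \boldsymbol\theta:=\operatorname{Im}\bigl(\bar g\nabla_{\Sph^2}g\bigr),
\]
and $\boldsymbol\theta$ is a real tangent field. Multiplying the residual by $\bar g$ does not change its modulus. The integrand therefore becomes $|\kappa^{-1}\boldsymbol\theta-P_{\bomega}\by|^2$, a purely real quadratic in $\by$.

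Second, identify $\boldsymbol\theta$ in terms of $f$. Write $f=|f|g$, which is legitimate because $f$ is $C^1$ and nonvanishing. Then
\[
 \frac{\nabla_{\Sph^2}f}{f}=\frac{\nabla_{\Sph^2}|f|}{|f|}+\bar g\nabla_{\Sph^2}g .
\]
The first term is real, so $\operatorname{Im}(\nabla_{\Sph^2}f/f)=\boldsymbol\theta$.

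Third, solve the normal equations. Because $P_{\bomega}$ is symmetric and idempotent, and $P_{\bomega}\boldsymbol\theta=\boldsymbol\theta$ for the tangent field $\boldsymbol\theta$, the objective expands as
\[
 \by^T\Bigl(\int P_{\bomega}\,d\mu\Bigr)\by-\frac2\kappa\,\by^T\!\int\boldsymbol\theta\,d\mu+\text{const}
 =\frac23|\by|^2-\frac2\kappa\,\by^T\!\int\boldsymbol\theta\,d\mu+\text{const}.
\]
This quadratic is strictly convex, so it has a unique minimizer,
\[
 \by=\frac{3}{2\kappa}\int\boldsymbol\theta\,d\mu ,
\]
which is \eqref{eq:phase-normalized-estimator}.

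Finally, check exactness on a pure atom. For $f=a\varphi_{\bx}$,
\[
 \frac{\nabla_{\Sph^2}f}{f}=i\kappa P_{\bomega}\bx ,
\]
so $\boldsymbol\theta=\kappa P_{\bomega}\bx$. Then \eqref{eq:average-projector} gives
\[
 \bx^{\rm ph}=\frac{3}{2\kappa}\cdot\kappa\cdot\frac23\,\bx=\bx .
\]
The same conclusion follows without computation, because the residual vanishes identically at $\by=\bx$.

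The only delicate point is ensuring that no cross term survives. I would handle it with the identity $\operatorname{Re}(\bar g\nabla g)=0$ together with the reality of $\by$, so that $\operatorname{Re}\bigl(\overline{(\boldsymbol\theta/\kappa)}\cdot P_{\bomega}\by\bigr)$ is simply the real dot product. Regularity poses no obstacle: $f$ is $C^1$ on a compact set with $|f|$ bounded below, so every integral is finite.
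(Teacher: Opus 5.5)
Your proof is correct and follows essentially the same route as the paper. Both arguments reduce the residual to the real tangent field $\kappa^{-1}\operatorname{Im}(\nabla_{\Sph^2}f/f)$ and then solve normal equations with matrix $\int P_{\bomega}\,d\mu=\frac23 I_3$. The only difference is presentational: the paper writes $f=|f|e^{i\alpha}$ locally and then remarks that the resulting field does not depend on the branch, whereas your identity $\operatorname{Re}(\bar g\nabla_{\Sph^2}g)=0$ is global from the start and so needs no such remark.
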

\begin{proof}
Locally write $f=|f|e^{i\alpha}$. Then
\[
 \frac{1}{i\kappa}\frac{\nabla_{\Sph^2}g}{g}
 =\frac1\kappa\nabla_{\Sph^2}\alpha
 =\frac1\kappa\operatorname{Im}
        \left(\frac{\nabla_{\Sph^2}f}{f}\right)
 =:\ba_f.
\]
The field $\ba_f$ is globally defined and tangent, independently of the
local phase branch. Since $|g|=1$, the normal equations are
$(\int P_{\bomega}\,d\mu)\by=\int\ba_f\,d\mu$.
The normal matrix is $2I_3/3$, proving uniqueness and
\eqref{eq:phase-normalized-estimator}. For a pure atom,
$\ba_f=P_{\bomega}\bx$.
\end{proof}

The estimate is unchanged by multiplication of $f$ by any nonzero complex
constant. It can be computed from $\nabla f/f$, without constructing an
unwrapped phase provided that $f$ stays away from zero.  

\subsection{Pointwise dominance and phase-gradient recovery}
\label{sec:hybrid-pointwise-dominance}

The following theorem bounds the phase error when each demixed
function is uniformly dominated by a different target atom.
The assumption is pointwise and can also be imposed on a finite
harmonic reconstruction.

\begin{theorem}
\label{thm:hybrid-pointwise-dominance}
Suppose that the demixed functions $f_j\in C^1(\Sph^2;\C)$ admit
representations
\[
 f_j(\bomega)=a_j\varphi_{x_{\pi(j)}}(\bomega)
                    \bigl(1+\xi_j(\bomega)\bigr),
 \qquad j=1,\ldots,s,
\]
where $\pi$ is a permutation of $\{1,\ldots,s\}$, $a_j\ne0$, and
\[
 \xi_j\in C^1(\Sph^2;\C),
 \qquad \|\xi_j\|_{L^\infty(\Sph^2)}\le\rho_j<1.
\]
Then each $f_j$ is nonvanishing, and the phase-gradient estimate
\begin{equation}
 \bx_j^{\mathrm{ph}}
 =\frac{3}{2\kappa}\int_{\Sph^2}
  \operatorname{Im}\left(\frac{\nabla_{\Sph^2}f_j}{f_j}\right)d\mu(\bomega),
 \label{eq:hybrid-phase-estimate}
\end{equation}
where $d\mu$ is the normalized surface element of $\Sph^2$, 
satisfies
\begin{equation}
 \left|\bx_j^{\mathrm{ph}}-\bx_{\pi(j)}\right|
 \le\frac{3}{2\kappa}\arcsin\rho_j.
 \label{eq:hybrid-pointwise-bound}
\end{equation}
Consequently, with $\rho_* =\max_j\rho_j$ and with $X^{\mathrm{ph}}$
understood as the indexed collection of the $s$ estimates,
\begin{equation}
 E_\infty(X^{\mathrm{ph}},X)
 :=\min_{\sigma}\max_j
       |\bx_j^{\mathrm{ph}}-\bx_{\sigma(j)}|
 \le\frac{3}{2\kappa}\arcsin\rho_*,
 \label{eq:hybrid-bottleneck-bound}
\end{equation}
where $\sigma$ runs through  the permutation group on $s$ indices.
\end{theorem}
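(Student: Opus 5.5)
Nonvanishing is immediate: $|\varphi_{\bx}|=1$ and $|1+\xi_j|\ge1-\rho_j>0$, so $f_j$ has no zeros. Logarithmic differentiation then splits the phase field into the atom part and the perturbation part:
\[
 \frac{\nabla_{\Sph^2}f_j}{f_j}
 =i\kappa P_{\bomega}\bx_{\pi(j)}+\frac{\nabla_{\Sph^2}\xi_j}{1+\xi_j}.
\]
Using $\int P_{\bomega}\,d\mu=\tfrac23I_3$ from \eqref{eq:average-projector}, the atom part returns $\bx_{\pi(j)}$ exactly (as in \cref{prop:phase-readout}). This leaves
\[
 \bx_j^{\rm ph}-\bx_{\pi(j)}=\frac{3}{2\kappa}\int_{\Sph^2}
 \nabla_{\Sph^2}\theta_j\,d\mu,
\]
where $\theta_j=\operatorname{Im}\log(1+\xi_j)=\arg(1+\xi_j)$. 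The principal branch is globally defined and $C^1$ because $1+\xi_j$ takes values in the disk of radius $\rho_j<1$ about $1$. Hence $\operatorname{Im}(\nabla\xi_j/(1+\xi_j))$ is an exact gradient of a single-valued function, with $|\theta_j|\le\arcsin\rho_j$ pointwise (the disk subtends that angle, as in the proof of \cref{prop:cmp-scalar-circle}).

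The key step is to bound the average of a tangential gradient by the sup of the function. For a fixed unit vector $\be$, integrate by parts on the sphere. The identity $\nabla_{\Sph^2}\cdot(P_{\bomega}\be)=-2\,\bomega\cdot\be$ gives
\[
 \be\cdot\int\nabla_{\Sph^2}\theta_j\,d\mu=\int\nabla_{\Sph^2}\theta_j\cdot P_{\bomega}\be\,d\mu
 =2\int\theta_j\,(\bomega\cdot\be)\,d\mu .
\]
Therefore
\[
 \Bigl|\be\cdot\int\nabla\theta_j\,d\mu\Bigr|\le 2\arcsin\rho_j\int|\bomega\cdot\be|\,d\mu=\arcsin\rho_j ,
\]
since $\int|\omega_3|\,d\mu=\tfrac12$. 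Taking the supremum over $\be$ gives $|\int\nabla\theta_j\,d\mu|\le\arcsin\rho_j$, and multiplying by $3/(2\kappa)$ yields \eqref{eq:hybrid-pointwise-bound}. The constant matches the stated one exactly, which indicates this is the intended route.

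The bottleneck bound \eqref{eq:hybrid-bottleneck-bound} then follows by choosing $\sigma=\pi$ and bounding each $\rho_j$ by $\rho_*$. The main technical point is the integration by parts: justifying that $\theta_j$ is single-valued and $C^1$ (handled by the principal branch on the disk), and checking the divergence identity $\operatorname{div}_{\Sph^2}(\be-(\be\cdot\bomega)\bomega)=-2\,\be\cdot\bomega$. The latter is a first-order spherical-harmonic eigenrelation: $P_{\bomega}\be=\nabla_{\Sph^2}(\be\cdot\bomega)$ and $\Delta_{\Sph^2}(\be\cdot\bomega)=-2\,\be\cdot\bomega$.
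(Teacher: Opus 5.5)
Your proposal is correct and follows the paper's own proof. Both arguments use the same steps: nonvanishing from $|1+\xi_j|\ge1-\rho_j$, then logarithmic differentiation to isolate $\nabla_{\Sph^2}\theta_j$ with the principal phase $\theta_j=\arg(1+\xi_j)$ bounded by $\arcsin\rho_j$. Next comes integration by parts against $P_{\bomega}\be$, using $\operatorname{div}_{\Sph^2}(P_{\bomega}\be)=-2\,\bomega\cdot\be$ and $\int|\bomega\cdot\be|\,d\mu=\tfrac12$. The bottleneck bound follows by taking $\sigma=\pi$.
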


\begin{proof}
Fix $j$, and abbreviate $\bx=\bx_{\pi(j)}$, $a=a_j$, $\xi=\xi_j$, and
$\rho=\rho_j$. Since $|\varphi_\bx|=1$,
\[
 |f_j(\bomega)|=|a|\,|1+\xi(\bomega)|
 \ge |a|(1-\rho)>0.
\]
Thus the logarithmic derivative in \eqref{eq:hybrid-phase-estimate} 
 is well defined.

The values of $1+\xi$ lie in the closed disk
\[
 \{z\in\C:|z-1|\le\rho\},
\]
which is contained in the open right half-plane. Hence its principal
phase
\[
 \theta(\bomega)=\arg(1+\xi(\bomega))
\]
is a globally defined real $C^1$ function. The angular extent of this
disk, viewed from the origin, gives
\begin{equation}
 \|\theta\|_{L^\infty(\Sph^2)}\le\arcsin\rho.
 \label{eq:hybrid-residual-phase-bound}
\end{equation}

 Logarithmic differentiation yields
\[
 \frac{\nabla_{\Sph^2}f_j}{f_j}
 =i\kappa P_\bomega \bx+
       \frac{\nabla_{\Sph^2}\xi}{1+\xi}.
\]
Taking imaginary parts,
\[
 \operatorname{Im}\left(\frac{\nabla_{\Sph^2}f_j}{f_j}\right)
 =\kappa P_\bomega \bx+\nabla_{\Sph^2}\theta.
\]
Because $\int_{\Sph^2}P_\bomega\,d\mu=2I_3/3$, the coordinate error has
the exact representation
\begin{equation}
 \bx_j^{\mathrm{ph}}-\bx
 =\frac{3}{2\kappa}\int_{\Sph^2}\nabla_{\Sph^2}\theta\,d\mu.
 \label{eq:hybrid-error-representation}
\end{equation}

Let $\bv\in\R^3$ be any unit vector. The spherical gradient is tangential,
and
\[
 \operatorname{div}_{\Sph^2}(P_\bomega \bv)=-2\,\bomega\cdot \bv.
\]
Integration by parts on the sphere therefore gives
\[
 \begin{aligned}
 \bv\cdot\int_{\Sph^2}\nabla_{\Sph^2}\theta\,d\mu
 &=\int_{\Sph^2}(P_\bomega \bv)\cdot\nabla_{\Sph^2}\theta\,d\mu\\
 &=2\int_{\Sph^2}(\bomega\cdot \bv)\theta(\bomega)\,d\mu.
 \end{aligned}
\]
It follows that
\[
 \begin{aligned}
 |\bv\cdot(\bx_j^{\mathrm{ph}}-\bx)|
 &\le\frac{3}{\kappa}\|\theta\|_\infty
                \int_{\Sph^2}|\bomega\cdot \bv|\,d\mu\\
 &=\frac{3}{2\kappa}\|\theta\|_\infty\\
 &\le\frac{3}{2\kappa}\arcsin\rho,
 \end{aligned}
\]
where rotational symmetry gives
$\int_{\Sph^2}|\bomega\cdot \bv|\,d\mu=1/2$.
Taking the supremum over unit vectors $v$ proves
\eqref{eq:hybrid-pointwise-bound}.

The permutation $\pi$ supplies a one-to-one matching between the
estimated coordinates and the targets, proving
\eqref{eq:hybrid-bottleneck-bound}. Finally,
\[
 \frac{\nabla_{\Sph^2}(cf_j)}{cf_j}
 =\frac{\nabla_{\Sph^2}f_j}{f_j},\qquad c\ne0,
\]
which proves invariance under nonzero constant rescaling.
\end{proof}

\begin{corollary}
\label{cor:hybrid-coefficient-dominance}
Suppose
\[
 f_j=\sum_{\bz\in\mathcal Z}a_{\bz j}\varphi_\bz
\]
for a finite set $\mathcal Z\subset\R^3$ containing $X$. If a permutation
$\pi$ satisfies
\[
 a_{\bx_{\pi(j)},j}\ne0,
 \qquad
 r_j:=\frac{\displaystyle
             \sum_{\bz\in\mathcal Z\setminus\{\bx_{\pi(j)}\}}|a_{\bz j}|}
            {|a_{\bx_{\pi(j)},j}|}<1,
\]
then
\[
 |\bx_j^{\mathrm{ph}}-\bx_{\pi(j)}|
 \le\frac{3}{2\kappa}\arcsin r_j.
\]
\end{corollary}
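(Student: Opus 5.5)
The plan is to reduce the corollary to \cref{thm:hybrid-pointwise-dominance} by writing each $f_j$ in the factored form required there. Fix $j$ and abbreviate $\bx=\bx_{\pi(j)}$ and $a=a_{\bx j}\ne0$. Factor out the dominant atom:
\[
 f_j=a\varphi_{\bx}\bigl(1+\xi_j\bigr),
 \qquad
 \xi_j(\bomega)=\sum_{\bz\in\mathcal Z\setminus\{\bx\}}
   \frac{a_{\bz j}}{a}\,e^{i\kappa\bomega\cdot(\bz-\bx)}.
\]
This identity holds pointwise because $\varphi_{\bz}=\varphi_{\bx}\,e^{i\kappa\bomega\cdot(\bz-\bx)}$.

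The regularity hypothesis is immediate. The sum is finite and each term is a smooth plane wave restricted to the sphere, so $\xi_j\in C^1(\Sph^2;\C)$; in fact it is real analytic. Each exponential factor has modulus one, so the triangle inequality gives the uniform bound
\[
 \|\xi_j\|_{L^\infty(\Sph^2)}
 \le\sum_{\bz\ne\bx}\frac{|a_{\bz j}|}{|a|}=r_j<1 .
\]
Therefore the hypotheses of \cref{thm:hybrid-pointwise-dominance} hold with $a_j=a$ and $\rho_j=r_j$.

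Applying that theorem, \eqref{eq:hybrid-pointwise-bound} gives $|\bx_j^{\rm ph}-\bx_{\pi(j)}|\le\frac{3}{2\kappa}\arcsin r_j$, which is the claim. There is no real obstacle. The only point to check is that $\mathcal Z$ may contain points outside the unit ball; this is harmless, because the theorem uses only the pointwise bound on $\xi_j$ and never the location of the perturbing centers. The permutation hypothesis appears only through the choice of which coefficient is factored out, and the theorem's bottleneck statement is not needed for this per-index bound.
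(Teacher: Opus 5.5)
Your proposal is correct and follows the paper's proof essentially verbatim. Like the paper, you factor out $a_{\bx_{\pi(j)},j}\varphi_{\bx_{\pi(j)}}$, bound $\|\xi_j\|_\infty\le r_j$ because each exponential has modulus one, and apply \cref{thm:hybrid-pointwise-dominance} with $\rho_j=r_j$; your extra checks (that $\xi_j$ is $C^1$, and that centers of $\mathcal Z$ outside the unit ball are harmless) are valid details the paper leaves implicit.
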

\begin{proof}
Factor out the target component:
\[
 \begin{aligned}
 f_j&=a_{\bx_{\pi(j)},j}\varphi_{\bx_{\pi(j)}}(1+\xi_j),\\
 \xi_j(\bomega)
 &=\sum_{\bz\in\mathcal Z\setminus\{\bx_{\pi(j)}\}}
       \frac{a_{\bz, j}}{a_{\bx_{\pi(j)},j}}
       e^{i\kappa\bomega\cdot(\bz-\bx_{\pi(j)})}.
 \end{aligned}
\]
Since every exponential has modulus one,
$\|\xi_j\|_\infty\le r_j$.
Theorem~\ref{thm:hybrid-pointwise-dominance} applies.
\end{proof}

\begin{corollary}
\label{cor:hybrid-reconstruction-error}
Under the hypotheses of Theorem~\ref{thm:hybrid-pointwise-dominance},
let $\widetilde f_j\in C^1(\Sph^2;\C)$ be a reconstructed demixed
function, and suppose
\[
 \|\widetilde f_j-f_j\|_\infty\le\varepsilon_j,
 \qquad
 \widetilde\rho_j:=\rho_j+\frac{\varepsilon_j}{|a_j|}<1.
\]
Then $\widetilde f_j$ is nonvanishing, and its continuously integrated
phase-gradient estimate satisfies
\begin{equation}
 |\widetilde \bx_j^{\mathrm{ph}}-\bx_{\pi(j)}|
 \le\frac{3}{2\kappa}\arcsin\widetilde\rho_j.
 \label{eq:hybrid-reconstruction-bound}
\end{equation}
In particular, this applies to a finite harmonic reconstruction
$\widetilde f_j=f_{j,L}$ whenever its uniform reconstruction error
obeys the stated condition.
\end{corollary}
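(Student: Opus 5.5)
The plan is to reduce \cref{cor:hybrid-reconstruction-error} to \cref{thm:hybrid-pointwise-dominance}. I will rewrite $\widetilde f_j$ in the same dominated form as $f_j$, with a new residual whose sup norm is at most $\widetilde\rho_j$.

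Fix $j$, abbreviate $\bx=\bx_{\pi(j)}$, $a=a_j$, $\xi=\xi_j$, and define
\[
 \widetilde\xi(\bomega)=\xi(\bomega)
   +\frac{\widetilde f_j(\bomega)-f_j(\bomega)}{a\,\varphi_{\bx}(\bomega)}.
\]
Then $\widetilde f_j=a\varphi_{\bx}(1+\widetilde\xi)$ holds identically, because $f_j=a\varphi_{\bx}(1+\xi)$.

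For regularity, note that $\varphi_{\bx}$ is smooth and nowhere zero, and both $\widetilde f_j$ and $f_j$ are $C^1$. Hence $\widetilde\xi\in C^1(\Sph^2;\C)$.

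For the size bound, $|\varphi_{\bx}|=1$ gives
\[
 \|\widetilde\xi\|_\infty\le\|\xi\|_\infty+\|\widetilde f_j-f_j\|_\infty/|a|\le\rho_j+\varepsilon_j/|a_j|=\widetilde\rho_j<1 .
\]

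Thus $\widetilde f_j$ satisfies the hypotheses of \cref{thm:hybrid-pointwise-dominance} for this single index $j$, with $\rho_j$ replaced by $\widetilde\rho_j$. The proof of that theorem is index-by-index, so it applies to one function at a time. It gives nonvanishing, $|\widetilde f_j|\ge|a_j|(1-\widetilde\rho_j)>0$. It also gives the bound \eqref{eq:hybrid-reconstruction-bound} for the phase-gradient estimate \eqref{eq:hybrid-phase-estimate} computed from $\widetilde f_j$.

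The phrase ``continuously integrated'' only means that the integral is exact rather than a quadrature. The estimate is then $\frac{3}{2\kappa}\int\operatorname{Im}(\nabla_{\Sph^2}\widetilde f_j/\widetilde f_j)\,d\mu$, which is exactly the quantity controlled by the theorem.

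For the final sentence, a finite harmonic reconstruction $f_{j,L}$ is a polynomial on the sphere, so it is $C^1$. The statement therefore applies directly once its uniform error meets the condition.

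The only point needing care is to check that the theorem's hypotheses concern each $j$ separately, with no joint use of the permutation beyond matching. That holds from its proof, so I expect no real obstacle.
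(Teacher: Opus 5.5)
Your proposal is correct and matches the paper's proof: you absorb the reconstruction error into the new remainder $\widetilde\xi_j=\xi_j+(\widetilde f_j-f_j)/(a_j\varphi_{\bx_{\pi(j)}})$, bound its sup norm by $\widetilde\rho_j$ using $|\varphi_{\bx_{\pi(j)}}|=1$, and apply Theorem~\ref{thm:hybrid-pointwise-dominance}. Your explicit check that $\widetilde\xi_j$ is $C^1$ is a detail the paper leaves implicit.
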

\begin{proof}
Write
\[
 \widetilde f_j=a_j\varphi_{x_{\pi(j)}}(1+\widetilde\xi_j),
 \qquad
 \widetilde\xi_j=\xi_j+
       \frac{\widetilde f_j-f_j}{a_j\varphi_{\bx_{\pi(j)}}}.
\]
Since $|\varphi_{\bx_{\pi(j)}}|=1$,
\[
 \|\widetilde\xi_j\|_\infty
 \le\rho_j+\frac{\varepsilon_j}{|a_j|}
 =\widetilde\rho_j<1.
\]
Apply Theorem~\ref{thm:hybrid-pointwise-dominance} to
$\widetilde f_j$.
\end{proof}
The error bound depends on the size of the relative remainder, rather
than on the size of its angular derivative. Pointwise dominance confines
the residual phase to a bounded interval, and integration by parts
converts its averaged gradient into an integral of that bounded phase.

\subsection{Two-atom example}
For a two-atom mixture, the phase error admits a sharper bound. Assume $A\ne0$ and $|\beta|<1$, where $\beta=B/A$ is
complex, and retain $\bd=\bz-\bx$, $r=|\bd|>0$. The factorization
$f=A\varphi_{\bx}(1+\beta e^{i\kappa\bomega\cdot\bd})$ and the
uniformly convergent geometric series give
\[
 \ba_f=P_{\bomega}\bx+
 \operatorname{Re}\sum_{n\ge1}(-1)^{n-1}\beta^n
          e^{in\kappa\bomega\cdot\bd}P_{\bomega}\bd.
\]
Using the kernel \eqref{eq:tangential-kernel}, whose longitudinal eigenvalue
is $\tau_\parallel(t)=-2\sinc'(t)/t$, yields the exact expression
\begin{equation}
 \bx^{\rm ph}-\bx
 =\frac32\sum_{n\ge1}(-1)^{n-1}
       \operatorname{Re}(\beta^n)T_\kappa(n\bd)\bd
 =\frac32\sum_{n\ge1}(-1)^{n-1}
       \operatorname{Re}(\beta^n)
           \tau_\parallel(n\kappa r)\bd.
 \label{eq:phase-two-atom-series}
\end{equation}
Every term contains a decaying oscillatory kernel: there is no surviving
energy-weighted displacement proportional to $|B|^2\bd$ as in the original
estimate. Since $|\tau_\parallel(t)|\le2(t^{-2}+t^{-3})$ for $t>0$,
\begin{equation}
 |\bx^{\rm ph}-\bx|
 \le\frac{3|\beta|}{1-|\beta|}
       \left(\frac1{\kappa^2r}+\frac1{\kappa^3r^2}\right).
 \label{eq:phase-two-atom-bound}
\end{equation}
Thus the error is $O(\kappa^{-2})$ for fixed distinct locations and
$|\beta|$ bounded away from one. The estimate is not uniform as the
locations coalesce or the amplitudes approach equal magnitude.
In the exact equal-angle construction, each fixed
$\eta<1/\sqrt2$ gives $A\to\sqrt{1-\eta^2}$ and $B\to\eta$, so
$|B/A|<1$ for sufficiently large $\kappa$. The phase-normalized error
then tends to zero at this rate, whereas the original least-squares
estimate has the limit \eqref{eq:cmp-generator-floor}.

For multiple targets, the scalar estimates apply after demixing.
If each column satisfies \cref{thm:hybrid-pointwise-dominance} with a
different target, the error bound gives a one-to-one matching.
Deriving this dominance condition from the empirical subspace and the
conditioning of the selected pencil remains a separate problem.
Amplitude normalization must follow demixing: applying it to arbitrary
signal-frame columns generally destroys their linear mixture structure.
The experiment in \cref{sec:phase-readout-experiment} uses the existing
generator eigenbasis for this purpose.

\end{document}